 \numberwithin{equation}{section}
\def\cite{\citep}
\newcommand{\bmath}{\begin{equation}}
\newcommand{\emath}{\end{equation}}
\newcommand{\bmathnn}{\begin{eqnarray*}}
\newcommand{\emathnn}{\end{eqnarray*}}
\declaretheorem[numberwithin=section]{theorem}
\declaretheorem[sibling=theorem]{lemma}
\declaretheorem[sibling=theorem]{proposition}
\declaretheorem[sibling=theorem]{corollary}
\declaretheorem[sibling=theorem]{remark}
\declaretheorem[sibling=theorem]{definition}
\declaretheorem[sibling=theorem]{condition}
\def\setminus{-}
\def\1{{\bf{1}}}
\DeclareMathOperator{\var}{Var}
\begin{document}

\title{Algorithms Using Local Graph Features to Predict Epidemics}

\author{ 
 \, Yeganeh Alimohammadi\thanks{Department of Management Science and Engineering, Stanford University.  \protect\url{yeganeh@stanford.edu}
  },\,
       Christian Borgs\thanks{Department of Electrical Engineering and Computer Science, U.C. Berkeley \protect\url{borgs@berkeley.edu}
   },
   \, Amin Saberi\thanks{Department of Management Science and Engineering, Stanford University.  \protect\url{saberi@stanford.edu}
  }
}

\maketitle

\begin{abstract}

We study a simple model of epidemics where an infected node transmits the infection to its neighbors  independently with probability $p$. This is also known as the independent cascade or Susceptible-Infected-Recovered (SIR) model with fixed recovery time. The size of an outbreak in this model is closely related to that of the giant connected component in ``edge percolation'', where each edge of the graph is kept independently with probability $p$, studied for a large class of networks including configuration model \cite{molloy2011critical} and preferential attachment \cite{bollobas2003,Riordan2005}. Even though these models capture the effects of degree inhomogeneity and the role of super-spreaders in the spread of an epidemic,  they only consider graphs that are locally tree like i.e. have a few or no short cycles.
Some generalizations of the configuration model were suggested to capture local communities, known as household models \cite{ball2009threshold}, or hierarchical configuration model \cite{Hofstad2015hierarchical}.

Here, we ask a different question:  what information is needed for general networks to predict the size of an outbreak? Is it possible to make predictions by accessing the distribution of small subgraphs (or motifs)?  We answer the question in the affirmative for large-set expanders with local weak limits (also known as Benjamini-Schramm limits). In particular, we show that there is an algorithm which gives a $(1-\epsilon)$ approximation of the probability and the final size of an outbreak by accessing a constant-size neighborhood of a constant number of nodes chosen uniformly at random.  
We also present corollaries of the theorem for the preferential attachment model, and study generalizations with household (or motif) structure. The latter was only known for the configuration model. 

\end{abstract}

{\footnotesize\textbf{Keywords:} Infection process, local weak convergence, contact-tracing, configuration model, preferential attachment, expanders, local features. }

\thispagestyle{empty}


\onehalfspacing


\newpage

\setcounter{page}{1}

\section{Introduction}

One of the central tasks of   epidemiology is to predict the likelihood or the size of a nascent or future epidemic from observed data. In the simplest model, the Susceptible-Infected-Recovered (SIR) model with homogeneous mixing, it is assumed that all individuals have the same probability of infecting one another, and the predictions rely on a single parameter, the basic reproduction number.  At the same time, it is known for decades, going back at least to  the study of the AIDS epidemics \cite{Gupta1989NetworksOS}, that the underlying network of contacts and their structure play an important role in the spread of the infection. 

From a mathematical  point of view, the effect of network structure on the spread of an epidemic has been analyzed for many random network models, including 
Erd\"os-Renyi graphs  \cite{neal2003sir},
configuration models \cite{DhersinMoyalChiTran,janson2014law},  and preferential attachment models {\cite{pastor2001epidemic, SIS_BBCS}}. These models offer important insights about the spread of the infection and especially the devastating effects of high degree nodes. 

{{One may be tempted to use the above network models to predict the likelihood and size of an epidemic from observed data by first  estimating the parameters of the model, and then using the resulting network to make predictions about the epidemic. 
But this {approach faces}}}
several challenges. First of all, it is not even clear which class of models to use: configuration models, preferential attachment models, inhomogeneous random graphs \cite{BollobasJansonRiordon} or variants of those allowing for community structure? And even if we could agree on a class of models, it is unclear how predictions may change under network misspecifications by the slightest amount.

In the setting of dense graphs,
the question of network misspecification and robustness can be addressed using the notions of  quasi-random graphs and graph limits.  In particular,  for dense, exchangeable networks, the Aldous-Hoover Theorem \cite{aldous1981representations,hooverrelations} determines the class of models to be considered,
{the estimation of the underlying (graphon) model can be framed as a well defined non-parametric estimation problem \cite{wolfe2013nonparametric}, }
and the theory of graph limits \cite{borgs2008convergent,borgs2012convergent} allows one to analyze deviations from these models {in terms of the so-called cut-metric.  However,  this is not a feasible approach for bounded degree networks, because there is no single natural random model for graphs with bounded average degree.

Therefore, instead of formulating the problem as a network estimation problem, 
we take a different approach: we propose an algorithm, which predicts parameters like the threshold and {the} size of an epidemic by probing and observing local neighborhoods of random nodes, and  show that those predictions  are asymptotically consistent  if (a) the network converges in the weak local sense, and (b) is sufficiently connected or more precisely, a large-set expander.  These assumptions are true for many of the random networks mentioned above. At the same time the predictions are robust in the sense that they are made without ever explicitly assuming or estimating an {underlying} random network model.

In what follows, we will describe the the key notions and definitions and set the stage for our two main theorems and algorithm. 

\vspace{5.pt}
{\em Epidemic model {and percolation}:} We consider a simple model   where an infected node transmits the infection to its neighbors independently with probability $p$. This is   known as  independent cascade \cite{Kempe-Kleinberg} {and turns out to be equivalent to} the Susceptible-Infected-Recovered (SIR) model with fixed recovery time \cite{kiss2017mathematics}. 
{In the latter model, infected nodes stay infected for} a fixed time (say one unit of time) and {while infected,  transmit the disease to neighbors} with rate
$\lambda$.  Thus,
an infected vertex has an opportunity to infect any of its neighbors independently with probability $p=\frac{\lambda}{\lambda+1}$. 

 The size of an outbreak in this model is closely related to that of the giant connected component in ``edge percolation'', where each edge of the graph is kept independently with probability $p$, studied for a large class of {random} network {models} including the configuration model \cite{molloy2011critical} and preferential attachment models \cite{bollobas2003,Riordan2005}. 
{Phrased in terms of the consequences for the spread of an epidemic}, one of the main implications of the above line of work is  that with high probability, an infection either dies out quickly (after reaching a constant number of nodes) or it spreads to a linear fraction of vertices.  Our first theorem substantially generalizes these results using the notion of weak local limit for graphs, as defined below.

\vspace{5.pt}
{\em Local weak limit:}
 Roughly speaking, a sequence of (possibly random) graphs $\{G_n\}_{n\in\mathbb N}$
 is said to have a \emph{local weak limit in probability}, if the distributions of the neighborhoods of randomly sampled vertices converge in probability.  The limit is then a probability measure $\mu$ on 
the space $\mathcal{G}_*$ of
rooted, locally finite, connected graphs ({more precisely, on equivalence classes under} graph isomorphisms which map roots into roots).
We will use
$(G,o)$ for a graph $G$ with root $o$ in $\mathcal{G}_*$,
{where $G$ should be considered unlabeled, except for the label $o$ for the root.}

{Consider now a sequence of graphs $\{G_n\}_{n\in\mathbb N}$ with graph limit $(G,o)$ distributed according to some $\mu$.}
We will study the percolated graph $G(p)$, the subgraph obtained from $G$ by independently keeping each edge with probability $p$, {and}
the component of the root in $(G(p),o)$, {to be denoted by} $C(o)$.
{As we will see,}  the probability $\zeta(p)$ of  $C(o)$ being infinite, $\zeta(p)=\mu\big(\mathbb P_{G(p)}(|C(o)|=\infty )\big)$, 
{turns out to the limit of the probability that an infection starting from a random vertex in $G_n$ leads to an infection -- under some technical conditions.  One of them will be what we refer to as}
\emph{smoothness of $\mu$ {at}  
$p$}, defined as the continuity of
 $\zeta(p)$ at $p$. 
 
The second one will be an expansion property for $G_n$.

\vspace{5.pt}
{\em Large-set edge expansion:}
 Given a graph $G=(V,E)$ and a constant $\epsilon<1/2$,  define
\begin{equation}
\label{phi}\phi(G,\epsilon)=\min_{A\subset V: \epsilon |V|\leq |A|\leq |V|/2}
\frac{e(A,V\setminus A)}{|A|}
\end{equation}
where  $e(A,V\setminus A)$ is the number of edges joining $A$ to its complement.
Call a graph $G$ an $(\alpha,\epsilon,\bar d)$ large-set (edge) expander if the average degree of $G$ is at most $\bar d$ and $\phi(G,\epsilon)\geq \alpha$.  A sequence of possibly random graphs $\{G_n\}_{n\in\mathbb{N}}$ is  called a \emph{large-set  (edge)  expander sequence with bounded average degree},  if there exists $\bar d$ 
and $\alpha>0$ such that for  any $\epsilon\in (0,1/2)$ the probability that $G_n$ is an  $(\alpha,\epsilon,\bar d)$ large-set  (edge)  expander goes to $1$ as $n\to\infty$.


\vspace{5.pt}
 \textbf{Outbreaks and their asymptotics.} Let $\mathcal{N}(G)$ be the random variable {giving}
the number of nodes that eventually get infected if one uniformly random vertex in $G$ was infected initially. {Note that there are two sources of  randomness for} $\mathcal{N}(G)$: the randomness of the infection process and the choice of {a random} seed. The next theorem shows that under certain assumptions, with high probability, either the infection dies out after infecting a constant number number of vertices or it reaches a linear fraction of vertices. 
More precisely, $\frac{\mathcal N(G)}{n}$ converges to a distribution on $[0,1]$ with one atom at zero and possibly  a second atom in $(0,1]$.



\begin{theorem}\label{thm: outbreak size} Let $\{G_n\}_{n\in\mathbb{N}}$ be a sequence of (possibly random) large-set expanders with bounded average degree that converges in probability in the local weak sense to $(G,o)\in\mathcal{G}_*$ with distribution $\mu$. {Let} $p\in[0,1]$, {and assume that}
$\mu$ {is}
smooth
{at} 
$p$. Then\\
\begin{enumerate*}
            \item\label{Thm1.1-1} The final infection size $\mathcal N$ {in an SIR model with constant recovery time} is  either  linear in $n$ or {of order}  $O(1)$. Formally, if  $\omega(n)$ is {any} function 
            such that
           $\lim_{n\rightarrow \infty}\omega(n)=\infty$
            and $\lim_{n\rightarrow \infty}\omega(n)/n=0$, then
         $ \mathbb{P} \Big(\omega(n)\leq \mathcal N\leq \frac{n}{\omega(n)}\Big)\rightarrow 0,$  where  $\mathbb P$ denotes probabilities with respect to the infection process and the randomness of $G_n$.\\
         
         \item The random variable  $\mathcal N/n$ converges {weakly in probability} to the two-atom random variable
         $\chi_p=\begin{cases}
            \zeta(p) & \text{with probability }\zeta(p),\\
            0 & \text{with probability }1-\zeta(p).
         \end{cases}$
\end{enumerate*}
\end{theorem}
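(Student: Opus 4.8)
The plan is to reduce the epidemic to edge percolation, read off the behaviour of small components from the local weak limit, and use the large‑set expansion through a sprinkling argument to pin down the giant. First, in the SIR model with fixed recovery time and transmission probability $p$, the set of eventually infected vertices with seed $v$ is exactly the connected component of $v$ in the percolated graph $G_n(p)$ (keep each edge of $G_n$ independently with probability $p$); thus $\mathcal N\overset{d}{=}|C_n(v)|$ with $v$ uniform and $C_n(v)$ its $G_n(p)$-component. Since $p$-percolation is a local product operation, the assumed convergence $G_n\to(G,o)\sim\mu$ lifts (in probability) to $G_n(p)\to(G(p),o)$. Consequently, for each fixed $k$ the event $\{|C_n(v)|\ge k\}$ depends only on the radius-$k$ ball of $v$ and the percolation inside it, so $\tfrac1n\#\{v:|C_n(v)|\ge k\}\xrightarrow{\mathbb P}q_k(p):=\mu(\mathbb P_{G(p)}(|C(o)|\ge k))$, with $q_k(p)\downarrow\zeta(p)$ as $k\to\infty$, and likewise for $\#\{v:|C_n(v)|=k\}/n$. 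In particular $\mathbb P(\mathcal N\ge k)\to q_k(p)$, so if $\zeta(p)=0$ then $\mathcal N$ is tight and both parts follow at once; assume henceforth $\zeta(p)>0$.

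The core step is to show that for every $\epsilon\in(0,\zeta(p))$, whp $G_n(p)$ has a component of size at least $(\zeta(p)-\epsilon)n$; combined with the upper bound from the previous paragraph ($q_{k_0}(p)\to\zeta(p)$), this forces that component to have size $(\zeta(p)\pm\epsilon)n$ and every other component to have size at most $2\epsilon n$, i.e.\ a \emph{unique} giant. I would prove the claim by two-round exposure, $G_n(p)=G_n(p_1)\cup G_n'(p_2)$ with $(1-p_1)(1-p_2)=1-p$, choosing $p_1<p$ close enough to $p$ that smoothness of $\mu$ at $p$ gives $\zeta(p)-\zeta(p_1)\le\eta$ for a small $\eta\ll\epsilon$, and writing $p_2^\ast:=(p-p_1)/(1-p_1)>0$. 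Condition on $G_n(p_1)$. Applying the first paragraph at $p_1$ and at $p$ with a large fixed threshold $k_0$, whp (a) the $G_n(p_1)$-components of size $\ge k_0$ carry total mass $(\zeta(p)\pm 2\eta)n$, and (b) the total mass of $G_n(p_1)$-components of size $<k_0$ that end up, after round two, in a $G_n(p)$-component of size $\ge k_0$ is at most $(\zeta(p)-\zeta(p_1)+o(1))n\le 3\eta n$. Assume for contradiction that all components of $G_n(p)$ have size $<(\zeta(p)-\epsilon)n$. Then the size-$\ge k_0$ components of $G_n(p)$ carry total mass $\ge(\zeta(p)-2\eta)n$ with each part in $[k_0,(\zeta(p)-\epsilon)n)$, so a greedy split (possibly after passing to the complement) produces a union $A$ of $G_n(p)$-components with $\epsilon' n\le|A|\le n/2$ for a suitable $\epsilon'=\epsilon'(\epsilon)>0$. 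By the large-set expansion, whp $e_{G_n}(A,V\setminus A)\ge\alpha|A|\ge\alpha\epsilon' n$; these edges are absent in $G_n(p_1)$ (as $A$ is a union of $G_n(p_1)$-components, $G_n(p_1)\subseteq G_n(p)$) and, since $A$ is also a union of $G_n(p)$-components, absent in round two — an event of conditional probability at most $(1-p_2^\ast)^{\alpha\epsilon' n}$ for each fixed admissible $A$. Summing over the admissible $A$, whose count is controlled via (b), contradicts this.

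To conclude: for Part (1), fix $\epsilon>0$ and $\omega(n)\to\infty$ with $\omega(n)=o(n)$. For $n$ large the interval $[\epsilon n, n/\omega(n)]$ is empty, so $\mathbb P(\omega(n)\le\mathcal N\le n/\omega(n))\le\mathbb P(\omega(n)\le\mathcal N<\epsilon n)\le\mathbb E\big[\tfrac1n\#\{v:k_0\le|C_n(v)|<\epsilon n\}\big]$ for $n$ large; by the previous two paragraphs the right side is at most $q_{k_0}(p)-(\zeta(p)-\epsilon)+o(1)$, hence at most $2\epsilon$ for $k_0$ large, and letting $\epsilon\to0$ gives the claim. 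For Part (2), using the unique giant, for $\epsilon\in(0,\zeta(p))$ whp exactly $n-(\zeta(p)\pm\epsilon)n$ vertices lie outside it and all lie in components of size $\le 2\epsilon n$; hence $\mathbb P(\mathcal N/n\le 2\epsilon)\to 1-\zeta(p)$ and $\mathbb P(\mathcal N/n\ge\zeta(p)-\epsilon)\to\zeta(p)$, which is exactly weak convergence in probability of $\mathcal N/n$ to $\chi_p$, the randomness of $G_n$ being absorbed into the "whp" and the in-probability statements.

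The one genuinely delicate point is the union bound in the second paragraph. One cannot enumerate $A$ over all unions of $G_n(p_1)$-components — there are $2^{\Theta(n)}$ of them because of the many small components — and one cannot forbid all linear cuts, since the boundary of the true giant is such a cut. The role of the large-set expansion together with estimate (b) is precisely to restrict the admissible $A$ to unions of the $O(n/k_0)$ large $G_n(p_1)$-components plus a low-entropy correction of absorbed small components, and then to balance that entropy against $p_2^\ast$; carrying this out in the bounded-average-degree (rather than bounded-maximum-degree) regime is the main technical work.
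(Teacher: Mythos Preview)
Your reduction of the SIR process to edge percolation and the treatment of the case $\zeta(p)=0$ are exactly what the paper does (its Lemma~\ref{lm: coupling infection} and the first half of Lemma~\ref{lm: outbreak size}). For the supercritical case, however, the paper takes a much shorter route than you do: it does not carry out a sprinkling argument at all. Instead it quotes Theorem~1.1 of \cite{abs2021locality} as a black box to obtain $|C_1|/n\overset{\mathbb P}{\to}\zeta(p)$, and then invokes a characterization of van der Hofstad \cite{Hofstad2021AlmostLocal} (packaged as Proposition~\ref{prop: converging giant}) to deduce that vertices outside $C_1$ lie in components of size $O_{\mathbb P}(1)$. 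Both parts of the theorem then follow by the short bookkeeping in Lemma~\ref{lm: outbreak size}. In effect the paper factors the theorem through the abstract notion of a ``sequence with converging giant'' (Definition~\ref{defi: converging giant}); your proposal tries to \emph{prove} that large-set expanders have a converging giant, which is precisely the main theorem of \cite{abs2021locality}.

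On the sprinkling sketch itself: you have correctly located the crux, but as written it has a measurability problem. Your set $A$ is a union of $G_n(p)$-components and therefore depends on the second-round edges, so you cannot treat $A$ as fixed when bounding the round-two probability that no sprinkled edge crosses $\partial A$. The standard fix is to union-bound over unions of $G_n(p_1)$-components (which \emph{are} determined before round two), but then estimate~(b) controls only the \emph{mass} of absorbed small components, not the number of ways to attach them; the entropy of even $3\eta n$ vertices is $\Theta\big(nH(3\eta)\big)$, and this has to be balanced against $p_2^\ast$, which tends to $0$ as $p_1\to p$ with no quantitative rate guaranteed by mere continuity of $\zeta$. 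The argument in \cite{abs2021locality} sidesteps this via a different accounting that exploits the bounded-average-degree hypothesis to control the number of boundary edges of the union of large $G_n(p_1)$-components directly; carrying that out is the full technical content of that paper, not a detail one can fill in inline.
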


A few explanations are in order. First of all, the necessary conditions of the theorem are relatively weak. In Section~\ref{sec: proofs application}, we show that the popular network models including preferential attachment and configuration models satisfy them, even after incorporating additional community (or motif) structures. In particular the above theorem characterizes the size and probability of outbreaks for configuration model \cite{janson2014law}, preferential attachment 
{(see Corollary~\ref{cor: PA}})},
configuration model with community structure \cite{trapman2007analytical,ball2010analysis, Hofstad2015hierarchical} and preferential attachment with communities {(see Corollary~\ref{thm: motif based PA})}. As far as we know, the results on preferential attachment {(with or  without communities)} was not known in the literature.

More importantly, the graph sequences considered in the theorem do not have to be `perfectly' drawn from a random model for the result to hold. In fact, given an arbitrary random graph model with smooth limit (a property which holds for most random graph models, except possibly at the threshold), they can be `quasi-random' in the sense that the they are (random or) deterministic graphs with the same local limit.
As long as the sequence  is a large-set expander, the size and probability of an infection on the sequence will be the same as for the random model.

Finally, it is worth noting that some notion of connectivity is necessary for the above theorem to hold. This can be seen by comparing $d$-regular random graphs on $n$ nodes to the union of two disjoint $d$-regular random graphs on $n/2$ nodes that are connected with one edge
as described in {Remark~\ref{remark:  lwc doesnt imply locality}}. 


Next, we will state the algorithmic implications.

\vspace{5.pt}
\textbf{Algorithm for estimating the probability and size of outbreak.}
Suppose that we do not have access to the underlying contact network $G$. Instead, we have access to a black-box algorithm that takes a constant $k=O(1)$ as an input and
runs the SIR process starting from a uniformly random node. Then the algorithm returns whether  $k$ other nodes get infected eventually. 
 So, for each query we get a $0-1$ answer,  $1$ {when} starting
 {an} infection from the chosen vertex has reached  $k$ other nodes,
 {and $0$ otherwise} (see Algorithm \ref{alg: forward process}). 
 

 \begin{algorithm}[h]
 \caption{Local infection algorithm}
 \label{alg: forward process}
 A constant $k>0$ and the contact network $G$.

Draw a uniformly random node $v$.

 Find $B_k(v)$, {the set of} the nodes of distance at most $k$ from $v$.

 Run SIR process in $B_k(v)$ starting from $v$:

\Indp Initialize susceptible nodes $S=B_k(v)\setminus \{v\}$, infected nodes $I=\{v\}$, and removed nodes $R=\emptyset$.

 {\While{$I\neq\emptyset$}{
Let $u$ be the next node in $I$.

Remove infected neighbors of $u$ in $S$ and add them to $I$.

Remove $u$ from $I$ and add it to $R$.
 }}

\Indm
\eIf{$|R|\geq k$}{Return $1$.}{Return $0$.}
\end{algorithm}
  
  Let $\widetilde{\mathcal N}(k,q)$ be the average of answers for $q$ independent queries to Algorithm~\ref{alg: forward process} with $k$ as an input.
We prove that for any $\epsilon>0$, there exist constants $k_{\epsilon}$ and $q_{\epsilon}$ independent {of} $n$ such that $\widetilde{\mathcal N}(k_\epsilon,q_\epsilon)$ gives an $(1-\epsilon)$-approximation of both the probability of an outbreak and 
the relative size of an outbreak.

\begin{theorem}\label{thm: local alg}
Let $\{G_n\}_{n\in\mathbb{N}}$ and $\zeta(p)$
satisfy the conditions in Theorem \ref{thm: outbreak size}. 
Then for any given $\epsilon>0$ there exist integers $k_{\epsilon}, q_\epsilon$,  such that the average output of $q_\epsilon$ independent runs of Algorithm~\ref{alg: forward process} with $k_\epsilon$ as an input is a $(1-\epsilon)$-approximation of the probability and the size of an outbreak
{in an SIR model with constant recovery time}. Formally, there exists an $n_\epsilon>0$,  such that for all $n\geq n_\epsilon$
    \[\mathbb{P}\big(|\widetilde{\mathcal N}(k_\epsilon,q_\epsilon)-\zeta(p)|\geq \epsilon\big)\leq \epsilon. \]
Here $\mathbb P$ denotes the probabilities first over the randomness of $G_n$ and then the infection process and the sampled vertices contributing to $\widetilde{\mathcal N}(k_\epsilon,q_\epsilon)$.
\end{theorem}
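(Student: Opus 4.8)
\begin{proofoverview}
The plan is to reduce the statement to two facts: that the probability a single call of Algorithm~\ref{alg: forward process} with parameter $k$ returns $1$ (over the infection randomness and the uniform seed, given $G_n$) equals a quantity $A_n(k)$ that converges in probability, as $n\to\infty$, to a deterministic limit $a(k)$; and that $a(k)\to\zeta(p)$ as $k\to\infty$. Granting these, averaging $q$ independent queries and applying Hoeffding's inequality removes the query noise: from
\[
|\widetilde{\mathcal N}(k,q)-\zeta(p)|\le|\widetilde{\mathcal N}(k,q)-A_n(k)|+|A_n(k)-a(k)|+|a(k)-\zeta(p)|
\]
one fixes $k_\epsilon$ so the last term is $<\epsilon/3$, then $q_\epsilon$ so the first term exceeds $\epsilon/3$ with probability $\le\epsilon/2$ (Hoeffding, uniformly in $G_n$, since conditionally on $G_n$ the $q_\epsilon$ queries are i.i.d.\ $\{0,1\}$-valued with mean $A_n(k_\epsilon)$), and finally $n_\epsilon$ so the middle term exceeds $\epsilon/3$ with probability $\le\epsilon/2$; a union bound over the displayed inequality gives the theorem.

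The crux is a locality observation. The SIR process with fixed recovery time seeded at $v$ has final infected set equal to the cluster $C(v)$ of $v$ in the edge-percolated graph $G_n(p)$, while Algorithm~\ref{alg: forward process} returns $1$ precisely when the cluster $C_k(v)$ of $v$ in the percolated induced subgraph on $B_k(v)$ has at least $k$ vertices; trivially $C_k(v)\subseteq C(v)$. I claim $\{|C_k(v)|\ge k\}=\{|C(v)|\ge k\}$ under the natural coupling that shares edge states. One inclusion is immediate. For the other, suppose $|C(v)|\ge k$. If $C(v)\subseteq B_{k-1}(v)$, then every edge of $C(v)$ joins two vertices of $B_k(v)$, so $C(v)$ is a connected subgraph of the percolated $B_k(v)$ and hence $C_k(v)=C(v)$. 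Otherwise $C(v)$ contains a vertex at distance $\ge k$ from $v$; tracing an open path from $v$ to it and truncating at the first step that reaches distance $k$, the distance to $v$ changes by at most one per step, so the truncated path stays in $B_k(v)$ and passes through a vertex at each distance $0,1,\dots,k$, exhibiting $k+1$ distinct vertices of $C_k(v)$. It follows that the probability a single query returns $1$, given the seed $v$, is a function $h_k$ of the rooted isomorphism type of $B_k(v)$ alone; $h_k\colon\mathcal G_*\to[0,1]$ is bounded and, being determined by the radius-$k$ ball, is locally constant, hence continuous on $\mathcal G_*$ (this is valid although degrees need not be bounded, and it is precisely the local nature of the estimator that is exploited here). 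Therefore $A_n(k)=\frac1n\sum_{v\in V(G_n)}h_k(B_k(v))$ is the integral of a bounded continuous functional against the empirical neighborhood measure of $G_n$, so convergence of $G_n$ to $\mu$ in the local weak sense in probability gives $A_n(k)\to a(k):=\mathbb E_\mu[h_k]=\mathbb P_\mu(|C(o)|\ge k)$ in probability, the last equality being the same identity applied in the limit graph (where $B_k(o)$ is finite since $\mu$ lives on locally finite graphs).

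Finally, $(\{|C(o)|\ge k\})_{k\ge1}$ is a decreasing sequence of events with intersection $\{|C(o)|=\infty\}$, so continuity of probability from above yields $a(k)=\mathbb P_\mu(|C(o)|\ge k)\downarrow\mathbb P_\mu(|C(o)|=\infty)=\zeta(p)$; this is what allows $k_\epsilon$ to be chosen as above, after which $q_\epsilon$ and $n_\epsilon$ are fixed as in the first paragraph and $\mathbb P(|\widetilde{\mathcal N}(k_\epsilon,q_\epsilon)-\zeta(p)|\ge\epsilon)\le\epsilon$ holds for all $n\ge n_\epsilon$.

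I expect the main work to be in making the locality identity and the limit transition fully rigorous: justifying the ``passes through every intermediate distance'' step, checking that $h_k$ is a bona fide bounded continuous functional on $\mathcal G_*$ so that the definition of local weak convergence in probability applies directly, and keeping the order of limits ($k\to\infty$ only after $n\to\infty$) straight. The remaining pieces --- Hoeffding, the triangle inequality, and the identity $\mathbb E_\mu[\mathbb P_{G(p)}(|C(o)|=\infty)]=\zeta(p)$ --- are routine. (It is worth noting that while the hypotheses of Theorem~\ref{thm: outbreak size}, especially large-set expansion and smoothness of $\mu$, are what make $\zeta(p)$ equal to the limiting outbreak probability and relative size, the displayed estimate by itself needs only local weak convergence in probability.)
\end{proofoverview}
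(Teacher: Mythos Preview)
Your proof is correct and takes a genuinely different, in fact cleaner, route than the paper's. The paper first invokes Corollary~\ref{thm: size of giant in expander} (which uses large-set expansion and smoothness to get the converging-giant property) and then, in Lemma~\ref{lm: local alg}, argues that the $k$-neighborhoods of the $q$ sampled seeds are disjoint with high probability so that the query outputs become independent, followed by a Chernoff bound. You bypass both steps: your locality identity $\{|C_k(v)|\ge k\}=\{|C(v)|\ge k\}$ shows that a single query's success probability is a bounded $k$-local functional $h_k$, so local weak convergence in probability alone gives $A_n(k)\overset{\mathbb P}{\to}a(k)=\mu(\mathbb P_{G(p)}(|C(o)|\ge k))$, and you observe that conditionally on $G_n$ the $q$ queries are already i.i.d.\ Bernoulli$(A_n(k))$ (each query uses a fresh uniform seed and fresh SIR randomness), so Hoeffding applies without any disjointness argument. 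Your closing remark is exactly right: the displayed estimate needs only local weak convergence in probability, while expansion and smoothness are what make $\zeta(p)$ equal the limiting outbreak probability and relative size via Theorem~\ref{thm: outbreak size}. The paper's route has the advantage of fitting its converging-giant framework (which is reused elsewhere), but your argument is more elementary and isolates precisely which hypothesis is doing the work in the algorithmic bound.
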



Note that while Theorem \ref{thm: local alg} requires that the graph sequence converges, we do not need to use the properties of or even know the limit graph. We can determine the size and the probability of an epidemics only by sampling the local neighborhoods of a constant number of nodes.

In the next section, Section \ref{sec: intro random graph}, we show that our main theorems are applicable to many known random graph models. In addition, in Section \ref{sec: intro household model}, we introduce a {new class of}
random graph model with community {structure},
generalizing several older models, and apply our result to the new set of models. 

\subsection{Application: Classical Random Graph Models}\label{sec: intro random graph} 

In the following section, we consider examples of graphs for which  we can apply the results of Theorems~\ref{thm: outbreak size} and \ref{thm: local alg}. 
We show the following models converge in a local weak sense and $\zeta(p)$ on their limit is continuous for every $p\in[0,1]$.
\begin{enumerate*}
    \item The preferential attachment model.
    \item The configuration model.
\end{enumerate*}

\subsubsection{Preferential Attachment Models}\label{sec: intro PA}
Preferential attachment models describe a dynamic population where each new individual contacts the older individuals proportional to their degree. More precisely, the model has a parameter $m\in \mathbb N$, and is defined as follows.  Starting from a connected  graph  
$G_{t_0}$
on at least $m$ vertices, a random graph $G_t$ is defined inductively: given $G_{t-1}$ and its degree sequence $d_i(t-1)$, we form a 
new graph by adding one more vertex, $v_t$, and connect it
to $m$ distinct  vertices $w_1,\dots,w_m\in V(G_{t-1})$ by first choosing $w_1,\dots,w_m\in V(G_{t-1})$ {i.i.d with distribution $\mathbb P(w_s=i)=\frac {d_i(t-1)}{2|E(G_{t-1})|}$, $s=1,\dots m$,} and then conditioning on all vertices being distinct (thus avoiding  multiple edges).    We denote the resulting random graph sequence by $({P{\hskip-.2em}A}_{m,n})_{n\geq 2m+1}$, and following \cite {berger2014}, we call the version of preferential attachment we defined above the \emph{conditional model}, while the model where the conditioning step is left off will be called the independent model.

The next lemma, which is a consequence of previous work on the preferential attachment graphs,  will allow us to apply Theorems \ref{thm: outbreak size}-\ref{thm: local alg} to $PA_{m,n}$.
 
\begin{lemma}\label{lm: PA- expansion + continuity}
Let $m\geq 2$, and 
for a positive integer $n\geq 2m+1$ let ${P{\hskip-.2em}A}_{m,n}$ be the conditional preferential attachment graph defined above.  Then the following holds.
\begin{enumerate*}
\item \label{part: continuity PA} There exists a distribution $\mu$  on $\mathcal G_*$ such that the sequence $\{{P{\hskip-.2em}A}_{m,n}\}_{n\geq 2m+1}$ converges in {the} local weak sense to $(G,o)$ distributed as $\mu$. Also, for all $p\in[0,1]$ the percolation function $\zeta(p)=\mu\Big(\mathbb{P}_{G(p)}C(o)=\infty\Big)$ is continuous.\\
    \item \label{part: expansion PA} Then there exists some $\alpha>0$  independent from $n$ such that {for all $\epsilon\in (0,1/2)$, with probability tending to $1$,} $\{{P{\hskip-.2em}A}_{m,n}\}_{n\geq 2m+1}$ is an $(\alpha,{\epsilon},2m)$
       large-set expander.
\end{enumerate*}
\end{lemma}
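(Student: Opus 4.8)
The plan is to obtain Part~\ref{part: continuity PA} by combining the known local weak limit of preferential attachment with a short continuity argument for $\zeta$, and to obtain Part~\ref{part: expansion PA} from the known constant lower bound on the conductance of $PA_{m,n}$. For the convergence statement in Part~\ref{part: continuity PA} I would invoke the theorem of Berger, Borgs, Chayes and Saberi \cite{berger2014}: the conditional model $\{PA_{m,n}\}_{n\ge 2m+1}$ converges in the local weak sense (in probability, and in fact almost surely along the natural nested coupling) to the \emph{P\'olya-point graph} $(G,o)$, a random rooted tree whose vertices carry ``ages'' in $[0,1]$ and whose offspring at a vertex of age $x$ is governed by a mixed-Poisson/P\'olya-urn rule; take $\mu$ to be the law of this limit on $\mathcal G_*$.

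\textbf{Continuity of $\zeta$.} Write $\theta_{(G,o)}(p):=\mathbb P_{G(p)}(|C(o)|=\infty)$, so that $\zeta(p)=\mathbb E[\theta_{(G,o)}(p)]$ with $(G,o)\sim\mu$, and split into two regimes. Near $p=0$: $\zeta(0)=0$, and for $p>0$ an infinite cluster forces at least one open edge at the root, whence $\theta_{(G,o)}(p)\le 1-(1-p)^{\deg_G(o)}$; since $\mu$ is carried by locally finite graphs this bound tends to $0$ as $p\downarrow 0$ for $\mu$-a.e.\ $(G,o)$, so by dominated convergence $\zeta(p)\to 0=\zeta(0)$. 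On $(0,1]$: preferential attachment with $m\ge 2$ percolates for every $p>0$, i.e.\ $p_c=0$ \cite{bollobas2003,Riordan2005}, so every point of $(0,1]$ is strictly supercritical, and since the exploration of $C(o)$ in the $p$-percolated P\'olya-point tree is itself a multi-type branching process whose offspring law after $p$-thinning varies continuously with $p$, the survival probability $\zeta(p)$ is continuous throughout this supercritical regime by the standard continuity of branching-process survival probabilities in their parameters. (Monotonicity of $\zeta$ and right-continuity from the standard percolation coupling are automatic; the substance is left-continuity, which is precisely what passing to the annealed branching process supplies, since the quenched $\theta_{(G,o)}$ need not itself be left-continuous in $p$.)

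\textbf{Expansion.} The average-degree bound is immediate: every new vertex contributes exactly $m$ edges, so $\vol(PA_{m,n})=2mn+O(1)$ and the average degree converges to $2m$ (and is at most $2m$ for the usual sparse choice of $G_{t_0}$). For the expansion constant I would invoke the known fact (Mihail, Papadimitriou and Saberi) that for $m\ge 2$ the graph $PA_{m,n}$ has conductance at least some constant $\Phi_0>0$ with probability tending to $1$. Given any $A\subseteq V$ with $\epsilon n\le|A|\le n/2$, apply this bound to whichever of $A$ and $V\setminus A$ has volume at most $\vol(V)/2$; using $\vol(S)\ge|S|$ for every $S$ and $|V\setminus A|\ge n/2\ge|A|$, in either case $e(A,V\setminus A)\ge\Phi_0|A|$, so $\phi(PA_{m,n},\epsilon)\ge\Phi_0$ for every $\epsilon\in(0,1/2)$; taking $\alpha=\Phi_0$ proves Part~\ref{part: expansion PA}. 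If one prefers a self-contained proof rather than this citation, the conductance bound follows from a union bound over the subsets $A$ of each size together with concentration estimates along the sequential (martingale) structure of the construction, after checking that the conditioning which suppresses multiple edges perturbs the relevant quantities only by lower-order terms.

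\textbf{Main obstacle.} Which step is hard depends on how much one is willing to cite. If the conductance bound is taken as known, the only genuinely new ingredient is the continuity of $\zeta$, and the reduction to a supercritical branching-process survival probability (via $p_c=0$) makes it essentially routine; the one remaining subtlety is that the P\'olya-point tree has a continuum of vertex types, so the continuity of the multi-type survival probability in $p$ needs a mild uniformity (compactness) input, which the structure of that tree provides. If instead one proves the conductance bound from scratch, that becomes the crux: the union-bound-plus-concentration estimate has to hold uniformly over exponentially many vertex subsets while the per-step edge placements are themselves random and heavy-tailed (degree-proportional), and it has to be carried out in the \emph{conditional} model, where the $m$ choices made by each incoming vertex are coupled rather than independent.
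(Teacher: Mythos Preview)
The paper does not give a self-contained proof of this lemma: it simply cites \cite{berger2014} for the local weak limit and \cite{abs2021locality} for both the continuity of $\zeta$ and the expansion. Your proposal matches the first citation exactly, and your expansion argument (via the Mihail--Papadimitriou--Saberi conductance bound plus the observation $\vol(S)\ge |S|$) is correct and is essentially the content that \cite{abs2021locality} packages; so Parts~\ref{part: continuity PA} (convergence only) and~\ref{part: expansion PA} are fine.

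The gap is in your continuity argument on $(0,1]$. You reduce to ``standard continuity of branching-process survival probabilities in their parameters,'' but the P\'olya-point tree is a multi-type branching process with a \emph{continuum} of types, and there is no off-the-shelf theorem giving continuity of the survival probability in $p$ in that generality; you note the need for a ``mild uniformity (compactness) input'' but do not supply it. This is exactly where the work lies. The argument in \cite{abs2021locality}, which the paper is invoking (and whose structure is summarized in Section~\ref{sec: motif continuous}), proceeds quite differently: it introduces the notion of a \emph{threshold-regular} multi-type branching process (requiring that whenever $\zeta(p-\epsilon)>0$ the conditional survival probability $\zeta_x(p)$ is bounded below uniformly in the type $x$), verifies this for the P\'olya-point tree, and then uses Russo's formula to bound $\frac{d}{dp}\tilde\zeta_k(p)$ by $1/p$ times the expected number of $k$-bridges, which threshold regularity shows is uniformly bounded. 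This yields Lipschitz continuity of $\zeta$ above $p_c$, and a separate argument handles $p_c=0$. Your dominated-convergence step at $p=0$ is fine, but to complete the proof you should either carry out this Russo/bridge-counting argument or identify a precise continuity theorem for continuum-type branching processes that covers the P\'olya-point tree; the phrase ``standard continuity'' does not.
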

The local weak convergence of preferential attachment models is well-established in \cite{berger2014}, where they showed the limit is a P\'olya-point process.  {Moreover, the continuity of the survival probability and the positive expansion of $PA_{m,n}$ was shown in \cite{abs2021locality}. 
}
 As a consequence of the above lemma the size and the probability of an outbreak is local  under the preferential attachment model.
\begin{corollary}\label{cor: PA}
Let $m\geq 2$,  and let $\{G_n\}_{n\geq 2m+1}=\{{P{\hskip-.2em}A}_{m,n}\}_{n\geq 2m+1}$ be the conditional preferential attachment graph defined above.  Then the conclusions of Theorems \ref{thm: outbreak size} and \ref{thm: local alg} hold for all $p\in[0,1]$. 
\end{corollary}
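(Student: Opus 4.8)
The plan is to observe that Corollary~\ref{cor: PA} is a pure specialization: I would simply check, one by one, that the hypotheses of Theorems~\ref{thm: outbreak size} and~\ref{thm: local alg} are satisfied by $\{{P{\hskip-.2em}A}_{m,n}\}_{n\geq 2m+1}$ and the given $p$, and then invoke those theorems. Recall that both theorems ask for three things of a sequence $\{G_n\}$ and a parameter $p\in[0,1]$: (i) $\{G_n\}$ is a large-set edge expander sequence with bounded average degree; (ii) $\{G_n\}$ converges in probability in the local weak sense to some $(G,o)\in\mathcal{G}_*$ with law $\mu$; and (iii) $\mu$ is smooth at $p$, i.e.\ $\zeta(\cdot)$ is continuous at $p$. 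Everything needed is already packaged in Lemma~\ref{lm: PA- expansion + continuity}.

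Concretely, I would first read off (ii) and (iii) from Part~\ref{part: continuity PA} of Lemma~\ref{lm: PA- expansion + continuity}: that part gives a distribution $\mu$ on $\mathcal{G}_*$ (the P\'olya-point process of \cite{berger2014}) to which $\{{P{\hskip-.2em}A}_{m,n}\}$ converges in the local weak sense — and since ${P{\hskip-.2em}A}_{m,n}$ is itself a random graph, this is convergence in probability — and it gives continuity of $p\mapsto\zeta(p)$ on all of $[0,1]$. The latter means $\mu$ is smooth at \emph{every} $p\in[0,1]$, which is exactly why the corollary can be asserted for all $p$ with no exceptional (threshold) value excluded. Next I would read off (i) from Part~\ref{part: expansion PA}: there is a fixed $\alpha>0$, independent of $n$, such that for every $\epsilon\in(0,1/2)$ the graph ${P{\hskip-.2em}A}_{m,n}$ is an $(\alpha,\epsilon,2m)$ large-set expander with probability tending to $1$. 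Taking $\bar d = 2m$, this is precisely the definition of a large-set edge expander sequence with bounded average degree. With (i)--(iii) in hand, Theorems~\ref{thm: outbreak size} and~\ref{thm: local alg} apply verbatim, yielding the $O(1)$-or-linear dichotomy for $\mathcal N$, the weak-in-probability convergence of $\mathcal N/n$ to the two-atom variable $\chi_p$, and the $(1-\epsilon)$-approximation guarantee for Algorithm~\ref{alg: forward process}.

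Because the argument is just this reduction, there is no genuine obstacle inside the proof of the corollary; the mathematical content lives entirely in Lemma~\ref{lm: PA- expansion + continuity}, imported from \cite{berger2014} (the local weak limit) and \cite{abs2021locality} (continuity of the survival probability and positive large-set expansion). The one point I would flag explicitly is the standing assumption $m\geq 2$: for $m=1$ the graph ${P{\hskip-.2em}A}_{1,n}$ is a tree, so it is not a large-set expander and $\zeta\equiv 0$, and neither conclusion has content; the expansion estimate of Lemma~\ref{lm: PA- expansion + continuity}\ref{part: expansion PA} genuinely relies on $m\geq 2$.
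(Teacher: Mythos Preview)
Your proposal is correct and follows essentially the same approach as the paper, which simply states that the corollary is a direct consequence of Lemma~\ref{lm: PA- expansion + continuity} allowing one to apply Theorems~\ref{thm: outbreak size} and~\ref{thm: local alg}. Your version is more explicit in checking the three hypotheses one by one, and your remark about the necessity of $m\geq 2$ is a useful addition, but the underlying argument is identical.
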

\begin{proof}
As a direct consequence of Lemma~\ref{lm: PA- expansion + continuity} we can apply Theorems \ref{thm: outbreak size} and \ref{thm: local alg}.
\end{proof}

\subsubsection{Configuration Model}
The common way to construct a random graph given a degree sequence is the configuration model, introduced by \cite{bollobas2001random}.
Let $\mathbf d_n= (d_1,\ldots,d_n)$ be a graphical degree sequence , i.e., there exists at least one simple graph with this degree sequence. Then to construct the configuration model $CM(\mathbf{d}_n)$, consider $d_i$ half-edges for the vertex $i$.  The half-edge are then matched uniformly,  in general resulting in a multi-graph with self-loops and multiple edges.

The uniform configuration model is defined as the random graph model whose samples are uniformly random among all \emph{simple} graphs with degree sequence $\mathbf d_n=(d_1,\ldots,d_n)$.  We denote such a sample by $CM^*(\mathbf d_n)$.  To generate a sample from the uniform model, one often first generates a sample from the so-called configuration model.  It is well known that if the degree sequence is graphical,  conditioning $CM(\mathbf d_n)$ to be a simple graph, gives  a  sample  from the uniform model $CM^*(\mathbf d_n)$. See Chapter 7.5 in \cite{hofstadVol1} for the definition and properties of the uniform  model and its relationship to the configuration model.

  In order to apply our results to the configuration model, we first need to describe the conditions for the existence of the local weak limits. 
  To formulate it, let $D_n$ be {the} degree of a uniform random node in the degree sequence $\mathbf d_n$.
 Recall the definition of the empirical distribution $F_n$ of the sequence $\mathbf d_n$,
\[F_n(x)=\frac{1}{n}\sum_{j\in[n]}\mathbf1_{d_j\leq x}. \]
\begin{condition}
\label{cond: CM regular degree}
Let $\mathbf d_n=(d_1,\ldots,d_n)$ be a graphical degree sequence. Then there exists a random variable  $D$ such that\\
\begin{enumerate*}[label=(\roman*)]
    \item \label{cm1} $\mathbb P(D\geq 3)=1$., and as $n\to\infty$,
    $D_n\overset{d}{\to} D, \qquad \text{  and}\qquad \mathbb{E}[D_n]\rightarrow \mathbb{E}[D]<\infty.$
    \\
    \item \label{cm2} The empirical distribution $F_n$ satisfies,
    $[1-F_n](x)\leq c_Fx^{-\tau-1},$
    for some $c_F>0$ and $\tau\in(2,3)$.
\end{enumerate*}
\end{condition}

\begin{theorem}\label{thm: CM}
{Let $\{G_n\}_{n\in\mathbb N}=\{CM^*(\mathbf d_n)\}_{n\in\mathbb N}$, where $\{\mathbf d_n\}_{n\in\mathbb N}$ is a}
graphical degree sequence  satisfying  Condition~\ref{cond: CM regular degree}.
Then the conclusions of Theorems \ref{thm: outbreak size} and \ref{thm: local alg} hold  for all $p\in[0,1]$.
\end{theorem}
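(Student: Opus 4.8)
The plan is to verify, for the sequence $\{CM^*(\mathbf d_n)\}_{n\in\mathbb N}$ under Condition~\ref{cond: CM regular degree}, the three hypotheses of Theorems~\ref{thm: outbreak size} and \ref{thm: local alg}: convergence in probability in the local weak sense, the large-set expander property with bounded average degree, and continuity of $\zeta$ on all of $[0,1]$ (i.e., smoothness of $\mu$ at every $p$).

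\textbf{Step 1 (local weak limit; reduction to the multigraph).} Part~\ref{cm1} of Condition~\ref{cond: CM regular degree} ($D_n\overset{d}{\to}D$ and $\mathbb E[D_n]\to\mathbb E[D]<\infty$) is exactly the hypothesis under which the configuration multigraph $CM(\mathbf d_n)$ converges in probability, in the local weak sense, to the unimodular Galton--Watson tree $\mathrm{UGW}(D)$ whose root has offspring law $D$ and every other vertex has offspring law the size-biased law of $D$ minus one (classical; see \cite{janson2014law} and the references therein). Part~\ref{cm2} gives the uniform tail bound $\mathbb P(D_n>x)\le c_Fx^{-(\tau+1)}$ with $\tau+1>3$, so $\{D_n^2\}$ is uniformly integrable, $\mathbb E[D_n^2]\to\mathbb E[D^2]<\infty$, $\nu:=\mathbb E[D(D-1)]/\mathbb E[D]<\infty$, and $\mathbb P(CM(\mathbf d_n)\text{ is simple})$ converges to a strictly positive constant. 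Since $CM^*(\mathbf d_n)$ is $CM(\mathbf d_n)$ conditioned on this asymptotically-positive-probability event, every property holding for $CM(\mathbf d_n)$ with probability tending to $1$ also holds for $CM^*(\mathbf d_n)$ with probability tending to $1$; in particular $CM^*(\mathbf d_n)$ converges in probability in the local weak sense to $\mathrm{UGW}(D)$, and it suffices to establish the expander property for the multigraph $CM(\mathbf d_n)$.

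\textbf{Step 2 (continuity of $\zeta$).} On $\mathrm{UGW}(D)$, exploring the root's $p$-percolation cluster is a two-stage branching process: with $h$ the generating function of the size-biased-minus-one offspring law and $g_p(s):=h((1-p)+ps)$, the subtree off a retained edge is finite with probability $\eta(p)=\min\{s\in[0,1]:g_p(s)=s\}$, and $\zeta(p)=1-\mathbb E[((1-p)+p\,\eta(p))^D]$. Because $(p,s)\mapsto g_p(s)$ is jointly continuous on $[0,1]^2$, each $g_p$ is convex and nondecreasing, and $p\mapsto\eta(p)$ is nonincreasing, the smallest fixed point $\eta(\cdot)$ is continuous on $[0,1]$: limits of fixed points are fixed points, monotonicity keeps them bounded away from $1$ above criticality, and at $p_c:=1/\nu$ one has $\eta(p_c)=1=\lim_{p\downarrow p_c}\eta(p)$ by the standard continuity of the survival probability of a finite-mean Galton--Watson process at criticality. (Here $\nu\ge2$ since $D\ge3$ a.s., so $p_c\le1/2$ and a genuine transition lies inside $[0,1]$.) As $x\mapsto\mathbb E[x^D]$ is continuous on $[0,1]$, so is $\zeta$; thus $\mu$ is smooth at every $p$. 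This is also the content of the continuity results for the configuration model cited in the introduction, which could be invoked directly.

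\textbf{Step 3 (large-set expansion --- the main obstacle).} It remains to show that $\{CM(\mathbf d_n)\}$ is, with probability tending to $1$, an $(\alpha,\epsilon,\bar d)$ large-set expander for some fixed $\alpha>0$, every $\epsilon\in(0,1/2)$, and $\bar d=\mathbb E[D]+o(1)$; bounded average degree is immediate from $\mathbb E[D_n]\to\mathbb E[D]$. For the edge expansion I would run a first-moment argument over subsets $A\subseteq[n]$ with $a:=|A|\in[\epsilon n,n/2]$: minimum degree $3$ forces $\vol(A)\ge3a$ and $\vol([n]\setminus A)\ge3(n-a)\ge3n/2$, so $\mathbb E[e(A,[n]\setminus A)]=\vol(A)\vol([n]\setminus A)/(2m-1)\ge\alpha a$ with $\alpha=9/(2\bar d)$, while the event that $A$ induces a cut of size at most $\alpha a/2$ forces at least $\vol(A)-\alpha a/2$ of the half-edges incident to $A$ to be matched internally, an event whose probability can be bounded, using $(2k-1)!!$ asymptotics, by $(C\vol(A)/2m)^{\Omega(\vol(A))}$ times lower-order factors. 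A union bound over the at most $\binom{n}{a}\le(e/\epsilon)^{a}$ sets of size $a$ and a summation over $a$ then drive the failure probability to $0$, provided the exponent in the internal-matching estimate --- which is proportional to $\vol(A)\ge3a$ precisely because of the minimum-degree-$3$ hypothesis --- dominates the entropy term $a\log(e/\epsilon)$ uniformly on $a\in[\epsilon n,n/2]$. Making this uniform, and in particular controlling the contribution of the unboundedly many high-degree vertices to $\vol(A)$ (this is where part~\ref{cm2} of Condition~\ref{cond: CM regular degree} enters), is the delicate point and the step I expect to cost the most; alternatively one may run a first-moment argument analogous to the expansion proof for $PA_{m,n}$ in \cite{abs2021locality}, or invoke known lower bounds on the conductance of configuration models with minimum degree at least three. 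Combining Steps~1--3, Theorems~\ref{thm: outbreak size} and \ref{thm: local alg} apply to $\{CM^*(\mathbf d_n)\}_{n\in\mathbb N}$ for every $p\in[0,1]$.
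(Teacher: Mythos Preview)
Your proposal is correct and follows essentially the same three-step structure as the paper's proof: verify local weak convergence, continuity of $\zeta$, and large-set expansion, then invoke Theorems~\ref{thm: outbreak size} and~\ref{thm: local alg}. The paper's proof is much terser --- it simply cites Theorem~4.5 in \cite{RemcoVol2} for local weak convergence, \cite{broman2008survival} for continuity of the branching-process survival probability, and Lemma~12 in \cite{abdullah2012cover} for expansion of $CM^*(\mathbf d_n)$ under Condition~\ref{cond: CM regular degree}\ref{cm1} --- so your Step~3, which you flag as the ``main obstacle,'' is already available off the shelf and need not be redone via a first-moment argument (in particular, the paper attributes expansion to \ref{cm1} alone, so the tail bound~\ref{cm2} is not needed there).
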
 
\begin{proof}
The expansion of $CM^*(d_n)$ was shown in Lemma 12 in \cite{abdullah2012cover}  under the Condition~\ref{cond: CM regular degree} \ref{cm1}.
Also,  the local weak convergence of $CM^*(d_n)$ to a uni-modular branching processes was established under the Condition~\ref{cond: CM regular degree} \ref{cm1}-\ref{cm2} in Theorem 4.5 in \cite{RemcoVol2}. Since the limit is a branching process the continuity of survival probability is already known (see e.g., \cite{broman2008survival} for a more general result). Hence, all the conditions needed for Theorems \ref{thm: outbreak size} and \ref{thm: local alg} hold for $\{CM^*(\mathbf d_n)\}_{n\in\mathbb N}$.
\end{proof}

\subsection{Application: Motif-Based Graph Models}\label{sec: intro household model}
{
One of the important characteristics of   real-world networks is that they contain many short cycles \cite{scott1988social}. However, in most random graph models including the configuration model and preferential attachment model, the opposite is true: the neighborhood of a typical node looks like a tree. 

One approach for addressing this problem is to incorporate the community structure directly, for example by replacing the nodes with dense graphs.  For example. Trapman~ \cite{trapman2007analytical}  replaces each vertex of degree
 $k$ in $G_n$  {by} a complete graph of size $k$ with some fixed probability. The complete graphs represent households, where each individual interact with all others in their household, and {has} exactly one contact outside of the household. A  more general model by Ball et. al. \cite{ball2009threshold} extends the configuration model by allowing for drawing households independently from a family of graphs. 
 
We extend this framework.} Our model has two ingredients:
a network of external connections (which at the current level of generality can be arbitrary), and a distribution on small graphs (motifs), describing internal connections of a group of people (see Fig.~\ref{fig:motif}).

\begin{figure}[!hbt]
    \centering
    \definecolor{my_green}{rgb}{0,0.6,1}
\definecolor{my_orange}{rgb}{1,0.6,0}
\definecolor{my_blue}{rgb}{0.4,0.6,0}
\definecolor{my_grey}{rgb}{0.08235294117647059,0.396078431372549,0.7529411764705882}

\begin{tikzpicture}[line cap=round,line join=round,>=triangle 45,x=1cm,y=1cm, scale=.65]
\draw [line width=1.5pt,color=my_green,fill=my_green,fill opacity=0.55] (1.3,-1.18) circle (0.7cm);
\draw [line width=1.5pt,fill=black,fill opacity=0.36] (2.78,1.49) circle (0.7cm);
\draw [line width=1.5pt,color=my_orange,fill=my_orange,fill opacity=0.44] (-0.22,2.03) circle (0.7cm);
\draw [line width=1.5pt,color=my_blue,fill=my_blue,fill opacity=0.63] (-1.34,-0.28) circle (0.7cm);
\draw [line width=1.5pt] (-1.42,-0.03)-- (-1.23,-0.62);
\draw [line width=1.5pt] (0.8149770912171694,-1.16868359930164)-- (1.327343979159074,-0.6403052461115513);
\draw [line width=1.5pt] (1.327343979159074,-0.6403052461115513)-- (1.1031834656844908,-1.60);
\draw [line width=1.5pt] (1.327343979159074,-0.6403052461115513)-- (1.58,-1.63);
\draw [line width=1.5pt] (1.58,-1.63)-- (1.8557223323491632,-1.1046377383089019);
\draw [line width=1.5pt] (2.68,1.09)-- (2.3,1.55);
\draw [line width=1.5pt] (2.3,1.55)-- (3.06,1.85);
\draw [line width=1.5pt] (3.06,1.85)-- (3.22,1.27);
\draw [line width=1.5pt] (3.22,1.27)-- (2.68,1.09);
\draw [line width=1.5pt] (2.68,1.09)-- (3.06,1.85);
\draw [dashed][line width=1.5pt] (-0.2257681499148244,2.033609450335261)-- (-1.42,-0.03);
\draw [dashed][line width=1.5pt] (-1.234,-0.624)-- (0.814,-1.168);
\draw [dashed][line width=1.5pt] (1.327343979159074,-0.6403052461115513)-- (2.3,1.55);
\draw [dashed][line width=1.5pt] (1.8557223323491632,-1.1046377383089019)-- (3.22,1.27);
\draw [dashed][line width=1.5pt] (-0.2257681499148244,2.033609450335261)-- (2.3,1.55);
\begin{scriptsize}
\draw [fill=my_grey] (-1.42,-0.03) circle (2.5pt);
\draw [fill=my_grey] (-1.2344904605504492,-0.6242937808633668) circle (2.5pt);
\draw [fill=my_grey] (2.3,1.55) circle (2.5pt);
\draw [fill=my_grey] (3.06,1.85) circle (2.5pt);
\draw [fill=my_grey] (3.22,1.27) circle (2.5pt);
\draw [fill=my_grey] (2.68,1.09) circle (2.5pt);
\draw [fill=my_grey] (0.8149770912171694,-1.16868359930164) circle (2.5pt);
\draw [fill=my_grey] (1.8557223323491632,-1.1046377383089019) circle (2.5pt);
\draw [fill=my_grey] (1.58,-1.63) circle (2.5pt);
\draw [fill=my_grey] (1.1031834656844908,-1.6009931610026216) circle (2.5pt);
\draw [fill=my_grey] (1.327343979159074,-0.6403052461115513) circle (2.5pt);
\draw [fill=my_grey] (-0.2257681499148244,2.033609450335261) circle (2.5pt);
\end{scriptsize}
\end{tikzpicture}
    \caption{A motif-based graph with 1) an internal level representing connections (solid edges) between individuals in one motif (colored areas), and 2) an external level representing  connections (dashed edges) between motifs .}
    \label{fig:motif}
\end{figure}
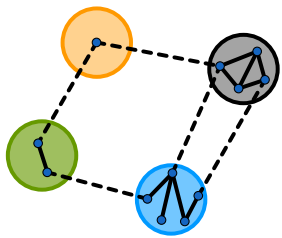

{Let us first define the notion of a motif, which is a pair} $M=(F, (d_v^{ext})_{v\in V(F)})$ contains (i) a simple graph $F$, giving the internal connection of the nodes in the motif, and (ii) {a collection of external degrees,} $(d_v^{ext})_{v\in V(F)}$, {one for each} vertex in $F$.  For a motif $M$, let $d(M)=\sum_{v\in F} d_v^{ext}$ be the total external degree of the motif $M$, and $v(M)=|F|$ be number of vertices in the motif.  

Now, to define a motif-based graph,  specify a collection of probability distributions $\mathcal M_d$ over motifs  $M$ with $d(M)=d$, one for each $d$. Then, given such a collection, plus an arbitrary external graph $G^{ext}$ on $n$ vertices,  replace all vertices of degree $d$ in $G^{ext}$ independently at random by a motif drawn from $\mathcal M_d$, {resulting in a larger graph $G$ containing precisely $n$ motifs.}  

\begin{condition}
\begin{enumerate*}[label=(\roman*)]
\item \label{cond: internal} (Internal regularity) Let $\mathcal M$ be the space of all motifs. Then there exists a constant $S_{\max}<\infty$ such that $\sup_{M\in\mathcal M}v(M)\leq S_{\max}$.\\

\item \label{cond: external}(External regularity)
There exists a probability distribution $\mu^{ext}$ on $\mathcal G_*$ such that the sequence of (possibly random) external graphs $\{G_n^{ext}\}_{n\in\mathbb N}$ with a bouned average degree converges in the local weak sense to $(G^{ext},o)\in\mathcal G_*$ with distribution $\mu^{ext}$. 
\end{enumerate*}
\end{condition}

The combination of 
internal and external regularities is sufficient  for local weak convergence. 

\begin{lemma}\label{lm: LWC of MRG}
Let $\{G_n\}_{n\in\mathbb N}$ be a sequence of motif-based graphs satisfying the Condition \ref{cond: internal}
and \ref{cond: external}. Then there exists a probability distribution $\mu$ on $\mathcal G_*$ such that   $\{G_n\}_{n\in\mathbb N}$ converges in the local weak sense to $(G,o)$ drawn from distribution $\mu$. 
\end{lemma}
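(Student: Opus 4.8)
The plan is to verify local weak convergence directly from the definition: for every finite rooted graph $(H,o_H)$ and every radius $r\ge 1$, show that the empirical frequency
\[
\Phi_n(H,r):=\frac{1}{|V(G_n)|}\,\Big|\big\{w\in V(G_n):B_r^{G_n}(w)\cong (H,o_H)\big\}\Big|
\]
converges in probability to a deterministic number, and that these numbers are exactly the neighbourhood probabilities of one random rooted graph $(G,o')\sim\mu$. First I would pin down $\mu$. Put $\bar v(d)=\mathbb E_{M\sim\mathcal M_d}[v(M)]\in[1,S_{\max}]$ (finite by Condition~\ref{cond: internal}) and define $\mu$ by a two-stage recipe: sample $(G^{ext},o)$ from the size-biased external measure $\widehat\mu^{ext}$ with $d\widehat\mu^{ext}=\bar v(d_o)\,d\mu^{ext}/\mathbb E_{\mu^{ext}}[\bar v(d_o)]$; replace the root $o$ by a motif $M_o$ drawn from the $v(\cdot)$-size-biased version of $\mathcal M_{d_o}$ and take the new root $o'$ uniformly in $V(M_o)$; replace every other external vertex $u$ by an independent $M_u\sim\mathcal M_{d_u}$; finally route the external half-edges into internal vertices by an independent uniform matching consistent with the external degrees. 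The two size-biasings are precisely what a uniformly random vertex of $G_n$ sees: it lands in the motif of external vertex $v$ with probability proportional to $v(M_v)$, which after integrating out the motifs biases the host external vertex by $\bar v(\cdot)$ and, given it, biases $M_v$ by $v(\cdot)$. Since $v(M)\le S_{\max}$ and $(G^{ext},o)$ is $\mu^{ext}$-a.s.\ locally finite and connected, $(G,o')$ is a.s.\ locally finite and connected, so $\mu$ is a probability measure on $\mathcal G_*$ (note no moment condition on $\mu^{ext}$ is needed, since $\bar v$ is bounded).

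The engine is a locality observation. Because every motif has at most $S_{\max}$ vertices, a path of length at most $r$ in $G_n$ crosses at most $r$ external edges, so $B_r^{G_n}(w)$ is contained in the motifs sitting at external vertices within external-distance $r$ of the host vertex $v$ of $w$; consequently $B_r^{G_n}(w)$, and hence $N_v:=|\{w\in M_v:B_r^{G_n}(w)\cong(H,o_H)\}|$, is a deterministic function of the radius-$(r+1)$ external ball $B_{r+1}^{G_n^{ext}}(v)$ together with the motifs and half-edge routings inside $B_r^{G_n^{ext}}(v)$. Write $|V(G_n)|\Phi_n(H,r)=\sum_v N_v$ and $|V(G_n)|=\sum_v v(M_v)$, so numerator and denominator are both sums over external vertices of bounded ($\le S_{\max}$) local functionals of $G_n^{ext}$ decorated with independent motif/routing randomness. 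Averaging out that randomness conditionally on $G_n^{ext}$ turns $\tfrac1n\sum_vN_v$ and $\tfrac1n\sum_v v(M_v)$ into $\tfrac1n\sum_v\psi\big(B_{r+1}^{G_n^{ext}}(v)\big)$ for bounded local functions $\psi$, which, by the local weak convergence of $\{G_n^{ext}\}$ (Condition~\ref{cond: external}), converge in probability to $\mathbb E_{\mu^{ext}}\big[\psi(B_{r+1}(o))\big]$: for the numerator this limit equals $a:=\mathbb E_{\mu^{ext}}\big[\,\mathbb E\sum_{w\in M_o}\mathbf 1\{B_r(w)\cong(H,o_H)\}\,\big]$ (expectation over all motifs of the full limit graph) and for the denominator it equals $b:=\mathbb E_{\mu^{ext}}[\bar v(d_o)]\ge 1$.

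It remains to kill the conditional fluctuations and assemble. For the denominator the $v(M_v)$ are independent and bounded, so its conditional variance is $O(1/n)$. For the numerator, $N_v$ and $N_{v'}$ are independent unless $v'\in B_{2r}^{G_n^{ext}}(v)$, whence the conditional variance is at most $S_{\max}^2\,n^{-1}\cdot\tfrac1n\sum_v|B_{2r}^{G_n^{ext}}(v)|$. Here I would use the key point that local weak convergence of $\{G_n^{ext}\}$ alone forces $|B_{2r}^{G_n^{ext}}(o_n)|$ to converge in distribution to the $\mu^{ext}$-a.s.\ finite variable $|B_{2r}^{G^{ext}}(o)|$, hence to be tight; since $|B_{2r}^{G_n^{ext}}(o_n)|\le n$, a reverse-Markov estimate then gives $\mathbb E\,|B_{2r}^{G_n^{ext}}(o_n)|=o(n)$, i.e.\ $\tfrac1n\sum_v|B_{2r}^{G_n^{ext}}(v)|=o(n)$ in probability, so the conditional variance tends to $0$. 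Conditional Chebyshev then yields $\tfrac1n\sum_vN_v\overset{\mathbb P}{\to}a$ and $\tfrac1n\sum_v v(M_v)\overset{\mathbb P}{\to}b$, and since $b\ge1$ the ratio gives $\Phi_n(H,r)\overset{\mathbb P}{\to}a/b$. A final bookkeeping step, unwinding the two size-biasings in the first paragraph, checks that $a/b=\mu\big(B_r(o')\cong(H,o_H)\big)$, completing the proof.

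The hard part will be the concentration: the iid motif assignments are not a bounded-differences setup, because a high-degree external vertex perturbs the balls of many nearby vertices, so a naive Azuma bound does not apply. The way out — that local weak convergence of the external sequence by itself pins the average $2r$-ball size at $o(n)$ — is the technical crux; the accompanying delicate point is choosing the two size-biasings in the definition of $\mu$ exactly so that the ratio $a/b$ reproduces the law of the constructed rooted graph.
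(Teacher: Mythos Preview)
Your proposal is correct and follows essentially the same approach as the paper: define the limit $\mu$ via size-biasing by expected motif size, push bounded local functionals on $G_n$ down to bounded local functionals on $G_n^{ext}$ by averaging over the motif randomness, invoke local weak convergence of $\{G_n^{ext}\}$ for the first moment, and conclude via a second-moment argument. The only cosmetic differences are that the paper handles the denominator $\tfrac1n\sum_v v(M_v)$ separately via an auxiliary proposition and phrases the variance step as ``two uniform external vertices have distance $>k$ with high probability'' (citing van der Hofstad), which is exactly equivalent to your observation that $\tfrac1n\sum_v|B_{2r}^{G_n^{ext}}(v)|=o(n)$; your description of the root-motif size-biasing in $\mu$ is in fact slightly more careful than the paper's.
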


{
\begin{corollary}\label{thm: motif based general}
Let $\{G_n\}_{n\in\mathbb N}$ be a sequence of motif-based graphs satisfying the Condition \ref{cond: internal} and \ref{cond: external}.  If the external graph is a large set expander, then  the conclusions of Theorems \ref{thm: outbreak size} and \ref{thm: local alg} hold for all $p\in[0,1]$ for which $\zeta(p)$ is continuous.
\end{corollary}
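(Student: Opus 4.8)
The plan is to derive Corollary~\ref{thm: motif based general} directly from Theorems~\ref{thm: outbreak size} and~\ref{thm: local alg} by checking their three hypotheses for the motif-based sequence $\{G_n\}$: (a) convergence in the local weak sense to some $(G,o)\sim\mu$; (b) that $\{G_n\}$ is a large-set expander with bounded average degree; and (c) smoothness of $\mu$ at $p$. Hypothesis (a) is exactly Lemma~\ref{lm: LWC of MRG}, and hypothesis (c) is the assumption of the corollary, namely continuity of $\zeta(p)$, with $\zeta$ the percolation function of the limit $\mu$ produced by that lemma. The bounded-average-degree half of (b) is routine: by Condition~\ref{cond: internal} each of the $n$ motifs of $G_n$ has at most $\binom{S_{\max}}{2}$ internal edges, the external edges of $G_n$ are precisely the edges of $G_n^{ext}$ (which has bounded average degree by Condition~\ref{cond: external}), and $G_n$ has at least $n$ vertices; so the average degree of $G_n$ is bounded by a constant depending only on $S_{\max}$ and the external degree bound. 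All the work is therefore in the expansion bound for $G_n$.

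To prove expansion, fix $\epsilon\in(0,1/2)$ and $A\subseteq V(G_n)$ with $\epsilon|V(G_n)|\le|A|\le|V(G_n)|/2$, and for each motif $M_i$ set $a_i=|A\cap V(M_i)|$; call $M_i$ pure if $a_i\in\{0,v(M_i)\}$ and split otherwise. The idea is to pass between $G_n$ and $G_n^{ext}$ via the contraction that collapses each motif to a point (so that $G_n/\{M_i\}=G_n^{ext}$), and to lower bound $e_{G_n}(A,\bar A)$ either by internal crossing edges or by external edges across a projected cut. I would split into two regimes. If the number $t$ of split motifs satisfies $t\ge |A|/(2S_{\max})$, then, since each motif is connected, every split motif carries at least one internal crossing edge, whence $e_{G_n}(A,\bar A)\ge t\ge|A|/(2S_{\max})$. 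If instead $t< |A|/(2S_{\max})$, then almost all of $|A|$ lies in pure-$A$ motifs and almost all of $|\bar A|$ in pure-$\bar A$ motifs, so the index sets $I_A$ and $I_{\bar A}$ of pure motifs each have size at least a constant times $|A|/S_{\max}$, hence at least a constant times $\epsilon n$; since $I_A$ and $I_{\bar A}$ are disjoint, one of them, say $W$, has at most $n/2$ motifs, and applying the large-set expansion of $G_n^{ext}$ to $W$ produces at least $\alpha|W|$ external edges leaving $W$. Every such edge running between a pure-$A$ motif and a pure-$\bar A$ motif is automatically a crossing edge of $G_n$.

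The delicate point---and the step I expect to be the main obstacle---is the external edges produced by the expansion of $G_n^{ext}$ that instead land on split motifs: such an edge need not cross the cut in $G_n$, since inside a split motif it may attach to a vertex on the ``wrong'' side, and a single split motif of very large external degree could absorb many of them while contributing only one internal crossing edge. Handling this requires a finer accounting that trades the external-edge mass incident to split motifs against (i) the internal crossing edges of those motifs and (ii) the fact that, because $G_n^{ext}$ has bounded average degree, only $O(n/\Delta)$ motifs can have external degree at least $\Delta$, which caps the ``absorbing capacity'' of the split motifs. Pushing this bookkeeping through should give $\phi(G_n,\epsilon)\ge\alpha'$ for a constant $\alpha'$ depending only on $S_{\max}$ and the expansion constant of $G_n^{ext}$ and, importantly, uniform in $\epsilon$, which is exactly what the large-set expander property requires. (If one allows motifs that are disconnected with components of zero external degree the statement fails, so connectivity of the motifs is used here; this is the natural setting for household/community models.) Once (a), (b), (c) are in place, Theorems~\ref{thm: outbreak size} and~\ref{thm: local alg} apply verbatim to $\{G_n\}$, proving the corollary.
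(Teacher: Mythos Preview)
Your overall plan is correct, and your case split (many split motifs vs.\ few split motifs) matches the paper's exactly. The gap is in the second case. Your proposed control on the ``absorbing capacity'' of split motifs uses only the Markov bound from bounded average degree, and that is not enough: bounded average degree is perfectly compatible with a single motif of external degree $\Theta(n)$, and if the adversary places that motif among the split ones, it can swallow all $\alpha|W|=\Theta(|A|)$ edges produced by the edge expansion of $G_n^{ext}$, while contributing only a single internal crossing edge. So the bookkeeping you sketch cannot close without an additional ingredient.

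The paper supplies that ingredient by passing from edge expansion to \emph{vertex} expansion of the external graph. It first proves (Lemma~\ref{lm: vertex expansion}) that large-set edge expansion together with a local weak limit implies large-set vertex expansion; the proof uses not Markov but the tightness of the size-biased degree distribution (the degree of the endpoint of a uniformly random edge), which follows from local weak convergence plus uniform integrability of the degree. With vertex expansion in hand, the second case becomes clean: apply it to the set $M(S)$ of pure-$A$ motifs to get $|\partial_{out}(M(S))|\ge\alpha|M(S)|$. Every motif in $\partial_{out}(M(S))$ is either split (hence contributes an internal crossing edge, since motifs are connected) or pure-$\bar A$ with an external edge into $M(S)$ (hence contributes an external crossing edge). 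Either way it contributes at least one edge to $e_{G_n}(A,\bar A)$, so $e_{G_n}(A,\bar A)\ge\alpha|M(S)|\ge \alpha|A|/(2S_{\max})$. Counting boundary \emph{vertices} rather than boundary \emph{edges} is exactly what sidesteps your absorption problem. Once you invoke vertex expansion here, the rest of your write-up (local weak convergence via Lemma~\ref{lm: LWC of MRG}, the average-degree bound, and the smoothness hypothesis) is fine and Theorems~\ref{thm: outbreak size} and~\ref{thm: local alg} apply directly.
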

}

On the way of proving 
{this corollary}, 
we prove a general result that might be of independent interest:  large-set edge expansion along with the existence of local weak limit implies large-set vertex expansion (see Lemma~\ref{lm: vertex expansion}).
Finally, we consider special cases of the motif-based model where the external graph has the same limit as one of the models discussed in Section \ref{sec: intro random graph}.

\begin{corollary}[Motif-Based Preferential Attachment]\label{thm: motif based PA}
Let $m\geq 2$, and let $\{G_n\}_{n\geq 2m+1}$ be a sequence of motif-based graphs such that
$G_n^{ext}={P{\hskip-.2em}A}_{m,n}$.
If the motif-distribution satisfies
Condition \ref{cond: internal},
then the conclusions of Theorems \ref{thm: outbreak size} and \ref{thm: local alg} hold for all $p\in[0,1]$.
\end{corollary}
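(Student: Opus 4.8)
The plan is to derive the statement from Corollary~\ref{thm: motif based general}, so the first step is to verify its hypotheses for the external sequence $G_n^{ext}=PA_{m,n}$. Condition~\ref{cond: internal} is assumed. Condition~\ref{cond: external} holds because, by Lemma~\ref{lm: PA- expansion + continuity}\ref{part: continuity PA}, $PA_{m,n}$ converges in the local weak sense to the P\'olya-point limit $(G^{ext},o^{ext})$ with law $\mu^{ext}$, and $PA_{m,n}$ has average degree tending to $2m$. The large-set expansion of the external graph is exactly Lemma~\ref{lm: PA- expansion + continuity}\ref{part: expansion PA}. Consequently Corollary~\ref{thm: motif based general} delivers the conclusions of Theorems~\ref{thm: outbreak size} and~\ref{thm: local alg} at every $p$ for which the percolation function $\zeta$ of the motif-based limit $\mu$ (which exists by Lemma~\ref{lm: LWC of MRG}) is continuous. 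The remaining task --- and the only real content of this corollary beyond Corollary~\ref{thm: motif based general} --- is to show that $\zeta$ is continuous on all of $[0,1]$.

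The next step is an easy one-sided comparison with the percolation function $\zeta^{ext}$ of $\mu^{ext}$, which is continuous on $[0,1]$ by Lemma~\ref{lm: PA- expansion + continuity}\ref{part: continuity PA}. Coupling $p$-percolation on $(G,o)\sim\mu$ with $p$-percolation on $(G^{ext},o^{ext})$ so that an external edge of $G$ is open iff the corresponding edge of $G^{ext}$ is (and internal edges are percolated independently), one checks that the set of external vertices whose motif the root cluster $C(o)$ meets is contained in the $p$-percolation cluster of $o^{ext}$ in $G^{ext}$; since each motif has at most $S_{\max}$ vertices, $|C(o)|=\infty$ forces that cluster to be infinite as well. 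Hence $\zeta(p)\le c\,\zeta^{ext}(p)$ for a constant $c$ absorbing the size-biasing of the root, so in particular $\zeta(p)\to 0$ as $p\to 0$. Together with the standard fact that percolation survival probabilities are monotone and upper semicontinuous in $p$ (hence right-continuous), a property preserved after averaging over $G\sim\mu$, this gives that $\zeta$ is right-continuous on $[0,1)$; what still has to be excluded is a downward jump of $\zeta$ from the left, a priori possible at or above the percolation threshold.

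The crux is then to prove left-continuity, for which I would describe $\zeta$ directly as a branching-process survival probability and re-run the continuity argument used for pure preferential attachment. Concretely, the root cluster of the $p$-percolated motif-based limit is distributed as the multi-type branching process obtained from the $p$-percolated P\'olya-point process of \cite{berger2014} by replacing each individual $v$ --- carrying its age label and its external degree $d_v$ --- by an independent $p$-percolation of a motif drawn from $\mathcal M_{d_v}$, and recording as a bounded extra label the partition of $v$'s external half-edges into classes that remain mutually connected after the internal percolation. Because each motif has at most $S_{\max}$ vertices and $\binom{S_{\max}}{2}$ internal edges, this decoration changes the offspring kernel only through a bounded operation that is Lipschitz in $p$, and in particular leaves the heavy-tailed degree structure of the P\'olya-point process untouched --- that heavy tail being the one genuine difficulty in the continuity proof of \cite{abs2021locality}. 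The plan is therefore to carry the estimates of \cite{abs2021locality} that establish continuity of $p\mapsto\zeta^{ext}(p)$ --- notably the vanishing of the survival probability at the critical value and left-continuity above it --- over to the decorated offspring kernel, picking up only constants depending on $S_{\max}$. Combined with the second step this yields continuity of $\zeta$ on $[0,1]$ and hence the corollary.

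I expect this last step to be the main obstacle. A black-box branching-process continuity theorem does not apply here, precisely because the P\'olya-point offspring distribution has infinite variance, so one really has to re-examine the argument of \cite{abs2021locality}; the work lies in formalizing the decorated branching process (the bookkeeping of half-edge connectivity classes inside a motif, and the size-biasing of the root by $v(M)$) and in checking line by line that each tail and fixed-point estimate there survives the bounded decoration. The first two steps, by contrast, are routine.
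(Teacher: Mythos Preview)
Your plan is correct, and the first two steps (verifying the hypotheses of Corollary~\ref{thm: motif based general}, and the comparison $\zeta(p)\lesssim\zeta^{ext}(p)$ giving continuity at $p=0$) coincide with the paper's argument. The divergence is in how you establish continuity of $\zeta$ above the threshold.

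You propose to realize the motif-based limit as a decorated multi-type branching process (external vertices carrying, as an extra bounded label, the connectivity pattern of their percolated motif) and then re-run the continuity proof of \cite{abs2021locality} for this enriched kernel, checking that each tail and fixed-point estimate survives the bounded decoration. That would work, but it is heavier than necessary. The paper instead proves a general lemma (Lemma~\ref{lm: continuity motif graph}) that handles continuity for \emph{any} motif-based model whose external limit is threshold regular, and does so without ever setting up the decorated branching process. The key observation is purely combinatorial: in the motif-based percolated graph, an external $k$-bridge is necessarily a $k'$-bridge in the percolated external tree with $k'\ge k/S_{\max}$, and every internal $k$-bridge can be charged to an external $k$-bridge with multiplicity at most $\binom{S_{\max}}{2}$. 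Thus the expected number of $k$-bridges in the motif graph is bounded by a constant (depending only on $S_{\max}$) times the expected number of $\lfloor k/S_{\max}\rfloor$-bridges in the external tree --- and the latter is already uniformly bounded by the bridge estimate of \cite{abs2021locality} (their Lemma~C.2), used as a black box. Russo's formula then gives a Lipschitz bound on $\tilde\zeta_k$ uniform in $k$, hence continuity of $\zeta$.

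In short, both routes lead to the same place, but the paper buys continuity by a one-line bridge comparison rather than by reopening the branching-process analysis; your approach is sound but does more work than needed.
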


\begin{corollary}[Motif-Based Configuration Model]\label{thm: motif based CM}
Let $\{G_n\}_{n\in\mathbb N}$ be a sequence of motif-based graphs such that $G_n^{ext}=CM^*(\mathbf d_n)$ for a graphical degree sequence $\{\mathbf d_n\}_{n\in\mathbb N}$ satisfying  Condition~\ref{cond: CM regular degree}.
If the motif distribution
satisfies Condition \ref{cond: internal}, then
 the conclusions of Theorems \ref{thm: outbreak size} and \ref{thm: local alg} hold  for all $p\in[0,1]$.
\end{corollary}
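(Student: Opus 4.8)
The plan is to obtain this as a special case of Corollary~\ref{thm: motif based general}. That corollary already does the heavy lifting (transferring local weak convergence and large-set expansion from the external graph to the motif-based graph, via Lemma~\ref{lm: LWC of MRG} and the edge-to-vertex expansion lemma mentioned in the text), so it suffices to verify its hypotheses when $G_n^{ext}=CM^*(\mathbf d_n)$: namely (a) Condition~\ref{cond: external} (external regularity with bounded average degree), (b) that $\{CM^*(\mathbf d_n)\}_n$ is a large-set expander, and (c) that the percolation function $\zeta(p)$ of the resulting motif-based local weak limit $\mu$ is continuous at every $p\in[0,1]$, which upgrades the ``for all $p$ with $\zeta(p)$ continuous'' in Corollary~\ref{thm: motif based general} to ``for all $p\in[0,1]$.''

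For (a) and (b) I would simply reuse the ingredients already invoked in the proof of Theorem~\ref{thm: CM}. Condition~\ref{cond: CM regular degree}\ref{cm1}--\ref{cm2} implies, by Theorem~4.5 of \cite{RemcoVol2}, that $CM^*(\mathbf d_n)$ converges in the local weak sense to a unimodular Galton--Watson tree $(G^{ext},o)\sim\mu^{ext}$, and $\mathbb E[D_n]\to\mathbb E[D]<\infty$ bounds the average degree uniformly in $n$; this is exactly Condition~\ref{cond: external}. Lemma~12 of \cite{abdullah2012cover} gives, under Condition~\ref{cond: CM regular degree}\ref{cm1}, a fixed $\alpha>0$ such that $CM^*(\mathbf d_n)$ is an $(\alpha,\epsilon,\bar d)$ large-set expander with probability tending to $1$ for every $\epsilon\in(0,1/2)$. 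Together with Condition~\ref{cond: internal}, these feed directly into Corollary~\ref{thm: motif based general}.

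The one genuinely new point is (c). Here I would use that $\mu$ inherits a branching structure from $\mu^{ext}$: the root lies in a size-biased motif of at most $S_{\max}$ vertices, whose external half-edges lead, through the external Galton--Watson structure, to further motifs, and so on. Fix $p$; for a motif $M$ and an ordered pair $(v,w)$ of its vertices let $g_M(v,w,p)$ be the probability that $w$ is reached from $v$ in $p$-percolation of the internal graph of $M$ --- a polynomial in $p$, hence continuous and nondecreasing. Percolation on $(G(p),o)$ produces an infinite cluster iff an associated (infinitely-many-type) branching process on the external tree, with transmission through each motif governed by the $g_M(\cdot,\cdot,p)$ and external edges kept with probability $p$, survives, and its survival probability is $\zeta(p)$. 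Since $\mathbb E[D]<\infty$ (indeed Condition~\ref{cond: CM regular degree}\ref{cm2} forces finite higher moments), this process has a well-behaved generating functional depending continuously and monotonically on $p$ through the $g_M$; arguing exactly as for ordinary Galton--Watson trees --- $1-\zeta(p)$ is the smallest fixed point of the extinction equation, which varies continuously with the continuously varying offspring law, cf.\ the continuity results of \cite{broman2008survival} --- gives continuity of $\zeta$ on $(0,1)$, with continuity at $p=0$ (where $\zeta(0)=0$) and $p=1$ immediate.

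The hard part will be exactly this continuity statement: ruling out a jump of $\zeta(p)$ at the critical threshold of the \emph{motif-decorated} tree. The subtlety is that Condition~\ref{cond: CM regular degree}\ref{cm2} only imposes a power-law tail (exponent $\tau+1\in(3,4)$) on the external degrees, so the external offspring law need not have bounded support and the branching process has infinitely many types; making the fixed-point/continuity argument rigorous requires using the finite-mean (finite-moment) input and reducing the bounded-size motif decorations to a single effective, continuously varying offspring law. Once (c) is established, combining (a), (b), (c) with Corollary~\ref{thm: motif based general} yields the conclusions of Theorems~\ref{thm: outbreak size} and~\ref{thm: local alg} for all $p\in[0,1]$, as claimed.
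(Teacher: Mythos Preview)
Your reduction to Corollary~\ref{thm: motif based general} and your verification of (a) and (b) match the paper's argument exactly (same citations, same logic). The divergence is entirely in (c), continuity of $\zeta(p)$ for the motif limit, and here the paper takes a different and more direct route that avoids the difficulty you flagged.

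The paper does \emph{not} analyze the motif-decorated branching process via a fixed-point equation. Instead it splits $[0,1]$ at the \emph{external} threshold $p_c^{ext}=p_c(\mu^{ext})$. For $p>p_c^{ext}$, continuity comes from Lemma~\ref{lm: continuity motif graph}: the external limit is a unimodular Galton--Watson tree, hence threshold regular in the sense of \cite{abs2021locality}, so the Russo/bridge-counting bound of that paper gives a uniform Lipschitz estimate on the external $\tilde\zeta_k$; since a $k$-bridge in the motif graph projects to a $\lfloor k/S_{\max}\rfloor$-bridge in the external graph (up to a factor $S_{\max}^2$ for internal bridges), the same Lipschitz bound transfers to the motif $\tilde\zeta_k$. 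For $p\leq p_c^{ext}$ the paper uses the elementary domination $\zeta(p)\leq\zeta^{ext}(p)$: percolating inside a motif can only disconnect the corresponding external super-vertex, so the motif cluster is stochastically no larger than the external one. This forces $\zeta(p)=0$ below $p_c^{ext}$, and $\zeta(p)\to 0$ as $p\downarrow p_c^{ext}$ because $\zeta^{ext}$ is continuous at its own threshold (the standard single-type result, e.g.\ \cite{broman2008survival}).

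Your proposed route---an effective multi-type branching process with transmission kernels $g_M(v,w,p)$ and a fixed-point continuity argument---is natural but, as you yourself note, the hard step is exactly ruling out a jump at the \emph{motif} critical point, and with unbounded external degrees and infinitely many types the \cite{broman2008survival} machinery does not apply off the shelf. The paper's approach sidesteps this completely: it never locates or studies the motif threshold, because the comparison $\zeta\leq\zeta^{ext}$ pushes any potential discontinuity down to $p_c^{ext}$, where it is handled by the known continuity of the single-type external process. This is the idea you are missing; once you have it, part (c) becomes short.
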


 Percolation on general household model with external configuration model was studied by Hofstad et. al.\cite{Hofstad2015hierarchical}. In particular, they proved that if the expected size of household size and the average degree degree is finite, then the size of the giant after percolation converges in probability. They  also compared the spread of an epidemics on motif-based configuration model with real-world networks empirically \cite{stegehuis2016epidemic}.  
 {In Section~\ref{sec: proof infection}, we will give a different proof for the result in \cite{Hofstad2015hierarchical} by showing the large-set expansion and the smoothness of the limit motif-based configuration models. Our proof also applies to motif-based preferential attachment model along with other motif-based models, as stated in Corollary~\ref{thm: motif based general}. }

\subsection{Other Related Work}
\paragraph{Epidemics on networks.}
The study of epidemics on networks has got much attention in different communities. 
The work of \cite{newman2001random} led to heuristics differential equations describing the evolution of SIR process on configuration models \cite{Volz2008SIRDI,VolzMeyers}. Later works by  \cite{decreusefond2012large,janson2014law} made the heuristics rigorous for the configuration model. In some special cases, the SIR process has been studied on trees \cite{sharkey2015exact} and grids \cite{n2019sir}.  
See the book \cite{kiss2017mathematics} for the results in theoretical biology and physics. The discrete version of the Markovian SIR process that we consider,  is also referred to as the Reed-Frost model (see e.g., \cite{picard1990unified}).

Also related to our work is the result of Ball et. al. \cite{ball2009threshold} who study epidemics on networks in which the motifs are complete graphs and derive the \emph{expected} size of the outbreak under the general SIR model \cite{ball2010analysis}.

\vspace{5.pt}
{\em Percolation.}
The study of percolation on expanders has recently received much attention in the probability community. In \cite{alon2004}, it was shown that on bounded degree expanders, there exists at most one linear size component (giant).  If in addition, one assumes a weak local limit, one obtains locality of the  threshold for the appearance of a giant \cite{benjamini2009critical,sarkar2018note}, in the sense that it  can be inferred from the limit. This result was generalized to Cayley graphs \cite{martineau2017locality}.

The relative \emph{size} of the giant, {which is an important technical ingredients {has only been studied recently. Earlier work studied}
the size on random trees \cite{bertoin2015supercritical}, and hypercubes \cite{van2017hypercube}, 
{as well as} the 
locality of the size of the giant on expanders for the case of bounded regular degree  with high girth 
\cite{Krivelevich20HighGirthExapnder}.  In this case, the relative size of the giant is given by the survival probability of a percolated branching process. 
{Here, our main input is the work of }{\cite{abs2021locality}, where they proved convergence of the relative size of the giant for expanders with local weak limits, and applied their result to preferential attachment models.}}

\vspace{5.pt}
{\em Network Modeling and Estimation}
The question of network misspecification and robustness has been studied in dense graphs using the notion of quasi-random graphs and graph limits.  For dense, exchangeable networks, the Aldous-Hoover Theorem \cite{aldous1981representations,hooverrelations} determines the class of models to be considered: it is the class of dense, inhomogeneous random graphs or graphon models, see \cite{orbanz2014bayesian}
 for a modern treatment of this relationship.  Misspecifications, or closeness to a random model can then be formulated as closeness in the cut-metric, with (not necessarily random) models which asymptotically ``look'' like an inhomgeneous random graph being precisely those which converge to the same graphon \cite{borgs2008convergent,borgs2012convergent}. 
In this context, estimation  becomes a non-parametric estimation problem which has been well studied, see  \cite{wolfe2013nonparametric} 
where this problem was first formally formulated as a graphon estimation problem (note, however, that the basic idea is much older, and goes back to at least \cite{bickel2009nonparametric}).
There are generalizations of this approach for sparse networks with divergent average degree, see \cite{borgs2017graphons} for an overview, but there is no one natural random model for graphs with bounded average degree.

\section{Proof of the Locality of Infection Processes}\label{sec: proof infection}
The first observation needed for proving Theorem~\ref{thm: local alg} is  that infection processes with fixed recovery times can be coupled to unoriented percolation.  Section~\ref{sec: coupling} is on this coupling. It will prove that when the relative size of the giant after percolation is concentrated, then the infection is also  ``local'' in the sense that it can be read off from its local weak limit. In Section~\ref{sec: giant}, we shift our focus to percolation, {and the use of some recent results to show that first the size of the giant in large-set expanders converges to its limit, and  second the size of the connected component of a random node can be bounded by a constant with high probability.}  Building upon the coupling and the results on percolation, we prove Theorems~\ref{thm: outbreak size} in Section~\ref{sec: outbreak is two atom} and Theorem~ \ref{thm: local alg} in Section~\ref{sec: local alg}.

\subsection{Coupling  Infection Processes to Percolation}\label{sec: coupling}

Suppose more than one node can be infected initially and let $v_1,\ldots, v_k$ be the set of initially infected nodes. Also, let $\mathcal N(v_i)$ be the set of nodes getting the infection eventually from node $v_i$. The next lemma couples $\mathcal N(v_i)$ to the size of the connected component containing $v_i$ after percolation in $G(p)$. To state the result, we  use $\mathbb{P}_{G(p)}$ and $\mathbb{E}_{G(p)}$ to denote probabilities and expectations with respect to percolation on a graph $G$
\begin{lemma}\label{lm: coupling infection}
Given any graph $G$ of size $n$ and  $k\leq n$, let $v_1,\ldots, v_k$ be the  initial set of infected nodes in an SIR process with parameter $p$. Then for any $n'<n$,
\[\mathbb P\Big(\sum_{i=1}^k\mathcal N(v_i)=n'\Big)=\mathbb P_{G(p)}\Big(\mid \cup C(v_i)\mid=n'\Big).\]
\end{lemma}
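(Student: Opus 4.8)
The plan is to exhibit a single probability space on which the SIR process and the percolation process are realized simultaneously, so that the two events in the statement literally coincide outcome-by-outcome. For each directed edge $(u,w)$ of $G$ I would introduce an independent Bernoulli$(p)$ random variable $X_{u,w}$; symmetrizing, one actually only needs one such variable $X_{\{u,w\}}$ per undirected edge, since in the fixed-recovery-time SIR model a vertex $u$, during its single active time step, attempts to infect each neighbor $w$ exactly once with probability $p$, and these attempts are mutually independent and independent across vertices. The key structural observation is that in this model whether $w$ ever receives the infection through the edge $\{u,w\}$ depends only on whether $u$ is ever infected and on the single coin $X_{\{u,w\}}$ — the timing is irrelevant because $u$'s unique transmission opportunity toward $w$ uses that one coin regardless of when $u$ becomes infected. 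Hence $\bigcup_i \mathcal N(v_i)$, the set of eventually-infected vertices, is exactly the set of vertices reachable from $\{v_1,\dots,v_k\}$ using only the edges $\{u,w\}$ with $X_{\{u,w\}}=1$.

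Next I would identify this reachable set with a percolation cluster. Define $G(p)$ to be the subgraph of $G$ containing exactly those edges $\{u,w\}$ with $X_{\{u,w\}}=1$; by construction this is precisely the edge-percolation measure on $G$. Then ``$w$ is reachable from some $v_i$ using open edges'' is by definition ``$w \in C(v_i)$ in $G(p)$ for some $i$'', i.e. $w\in\bigcup_i C(v_i)$. Therefore $\bigcup_i\mathcal N(v_i)=\bigcup_i C(v_i)$ as sets, hence $|\bigcup_i\mathcal N(v_i)|=|\bigcup_i C(v_i)|$ pointwise on the common space, which immediately gives equality of the two probabilities for every value $n'$.

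The one point that needs a careful argument — and is the only real obstacle — is justifying that the SIR process, run with its natural sequential/round-based dynamics, has the same law as this static ``open-edge reachability'' description; in other words, that exploring the infection breadth-first and revealing each transmission coin only when its tail vertex first becomes infected yields the same joint distribution as revealing all coins up front. This is the standard ``principle of deferred decisions'' / exploration-process argument: one checks that along any fixed exploration order, each coin $X_{\{u,w\}}$ is queried at most once, is queried only when its outcome has not yet been used, and is independent of everything revealed so far, so the revealed coins are i.i.d.\ Bernoulli$(p)$ exactly as in the static construction. Once this equivalence is in hand, the set identity above is essentially definitional and the lemma follows. I would present this as: (i) set up the coupled space; (ii) observe the deferred-decisions equivalence between the dynamic SIR exploration and the static open-edge graph; (iii) conclude $\bigcup_i\mathcal N(v_i)=\bigcup_i C(v_i)$ and read off the claimed identity of probabilities.
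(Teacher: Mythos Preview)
Your proposal is correct and follows essentially the same coupling argument as the paper: assign one independent random variable per edge that determines whether transmission along that edge would succeed, observe that each edge variable is consulted at most once during the dynamics, and conclude that the final infected set is exactly the union of the seeds' percolation clusters. The only cosmetic difference is that the paper realizes the per-edge variable as an exponential contact time $t_{u,w}\sim\exp(p/(1-p))$ and declares the edge open when $t_{u,w}\leq 1$ (so as to embed the continuous-time SIR dynamics explicitly), whereas you use a Bernoulli$(p)$ coin and the deferred-decisions principle directly; the two formulations are equivalent, and the paper itself notes that the finite-time edges are exactly an i.i.d.\ $p$-percolation.
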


\begin{proof}
 To express an instance of SIR process, for each edge $(u,w)\in E(G)$ draw $t_{u,w}$ from $exp(\frac{p}{1-p})$.  The random variable $t_{u,w}$ can be viewed as the contact time of $u$ and $w$ after either of $u$ or $w$ gets infected first.  If $t_{u,w}>1$, then the contact time happens after the recovery of the endpoint, so in this case, let $t_{u,w}=\infty$. 
 Moreover, for the convention, if an edge $(u,w)$ does not exist let $t_{u,w}=\infty$. Note that each edge in $G$ is set to $\infty$ independently with probability 
 $1-p$. So, the set of edges with a finite value can be coupled to the set of edges that are retained in $G(p)$. So, it remains to show   random variables $t_{u,v}$ specify a unique SIR process.

 To describe the SIR process using the random variables $t_{u,v}$, let $t_v$ be the time that the node $v$ gets infected. Hence, for the initial infected nodes $t_{v_1}=\cdots = t_{v_k}=0$. Start from time $t=0$. For any time $t$, let $I_t$ be the set of infected nodes that their recovery times have not passed yet. Recall that the recovery time of the node $v$ is  $t_v+1$ by the definition of the SIR process. Also, let $N_S(I_t)$ be the set of susceptible neighbors of $I_t$  (the neighbors that are not  infected yet). 
For each node $u\in N_S(I_t)$, let  
$t_u=\min_{v\in I_t} (t_v+t_{v,u})$, where the minimum is over all nodes f $u$ in $I_t$. Since all the non-neighbors have the contact time of $\infty$, they will affect $t_u$.

 Now, by the choice of contact times, if  $t_u$ is finite, then it is infected before the time $t_u$.
 Let $u\in N_S(I_t)$ be the node with minimum infection time, i.e., $u=\text{arg}\min_{u\in N_S(I_t)} t_u$. If $t_u<\infty$,  add $u$  to the set of infected nodes $I_{t'}$, for all $t_u\leq t'< t_u+1$. In addition, increase the time $t$ to $t_u$ and repeat the above process until we get to a time that the set of infected nodes are empty. 
 
 In the above procedure, we use each $t_{u,v}$ only for updating from at most one side. This is because either $v$ gets infected before $u$ or $u$ before $v$. So, a node is infected in the SIR process if and only if it is reachable by one of $v_1, v_2, \cdots v_k$ in the $G(p)$.  
\end{proof}

\subsection{Size of the Giant in Percolation}\label{sec: giant}
With the coupling of the infection process and the percolation in hand, the next step is the concentration of the giant component.
It was shown in in \cite{abs2021locality} that the relative size of the largest component for percolation on large-set expanders satisfying conditions in Theorem~\ref{thm: outbreak size} converges in probability to $\zeta(p)$. 

This is not enough to prove our main theorems. In fact, we need to bound the size of the connected component of a random node which is not in the largest component. 
  In fact, concentration of the size of the giant is enough to prove the locality of infection processes. For that purpose, we give the following definition.
 \begin{definition}[Graphs with converging giant]\label{defi: converging giant}
Let $p\in [0,1]$, and let $\mu$ be a probability distribution on $\mathcal{G}_*$. A  sequence $\{G_n\}_{n\in\mathbb N}$ of (possibly random) graphs with the local weak limit $(G,o)\sim\mu$ is called a sequence  of graphs with a \emph{converging giant} component  if 
\[\frac{|C_1|}{n}\overset{\mathbb P}{\to}\zeta(p),\]
where $\zeta(p)=\mu(\mathbb P_{G(p)}(|C(o)|=\infty))$.
\end{definition}
By a result of
van der Hofstad \cite{Hofstad2021AlmostLocal},
it is easy to see that if a random node in graphs with converging giant is outside the largest component then its component size is $O_{k,\mathbb P}(1)$.  
\begin{proposition}\label{prop: converging giant}
Let $\{G_n\}_{n\in\mathbb N}$ be a sequence of graphs with a converging giant. Then  for a uniform random node $v$, if $\zeta(p)>0$,
\[\lim_{k\to \infty}\lim_{n\to\infty}\mathbb P\Big( v\not\in C_1,\text{ and }|C(v)|\geq k\Big)=0,\]
and if $\zeta(p)=0$, then $C(v)=O_{k,\mathbb{P}}(1)$.
Further, for two uniform random nodes $u$ and $v$, 
\begin{equation}\label{eq: O_k undirected}
 \lim_{k\to \infty}\lim_{n\to\infty}   \frac{1}{n^2}\mathbb E(|C(v)|, |C(u)|\geq k, C(v)\neq C(u))=0.
\end{equation}
\end{proposition}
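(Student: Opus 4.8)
The plan is to reduce all three assertions to one fact about the \emph{percolated} sequence together with the converging‑giant hypothesis. Throughout, write $C(v)$ for the cluster of a vertex $v$ in $G_n(p)$ and $C_1\supseteq C_2\supseteq\cdots$ for the clusters of $G_n(p)$ ordered by size. First I would record that, since percolation acts independently on the edges, the graphs $G_n(p)$ again converge locally weakly in probability, now to the independent $p$‑percolation $(G(p),o)$ of the limit (a standard fact, also used for the giant in Section~\ref{sec: giant}). The event $\{|C(v)|\ge k\}$ is locally determined — indeed $\{|C(v)|\ge k\}=\{|C(v)\cap B_{k-1}(v)|\ge k\}$ — so for each fixed $k$,
\[
\mathbb P\big(|C(v)|\ge k\big)\longrightarrow f(k):=\mu\big(\mathbb P_{G(p)}(|C(o)|\ge k)\big)\qquad(n\to\infty),
\]
and $f(k)\downarrow\mu\big(\mathbb P_{G(p)}(|C(o)|=\infty)\big)=\zeta(p)$ as $k\to\infty$ by dominated convergence. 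In parallel, $\mathbb P(v\in C_1)=\mathbb E[\,|C_1|/n\,]\longrightarrow\zeta(p)$ by the converging‑giant hypothesis and bounded convergence.

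The first claim then follows by sandwiching. Assume $\zeta(p)>0$, so $|C_1|\overset{\mathbb P}{\to}\infty$ and hence $\mathbb P\big(v\in C_1,\,|C(v)|<k\big)\le\mathbb P(|C_1|<k)\to0$; subtracting this from the two relations above,
\[
\mathbb P\big(v\notin C_1,\ |C(v)|\ge k\big)=\mathbb P\big(|C(v)|\ge k\big)-\mathbb P\big(v\in C_1\big)+o(1)\ \longrightarrow\ f(k)-\zeta(p)\quad(n\to\infty),
\]
and $f(k)-\zeta(p)\to0$ as $k\to\infty$. If instead $\zeta(p)=0$, the same convergence gives $\lim_n\mathbb P(|C(v)|\ge k)=f(k)\to0$, i.e.\ $C(v)=O_{k,\mathbb P}(1)$. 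The same estimate also yields that the second cluster is sublinear, which is convenient below: since $|C_2|/n=\mathbb P_v(v\in C_2\mid G_n)$ and $\{v\in C_2,\,|C_2|\ge\varepsilon n\}\subseteq\{v\notin C_1,\,|C(v)|\ge\varepsilon n\}$, for $\varepsilon>0$ and $n$ large (so $\varepsilon n\ge k$),
\[
\mathbb P(|C_2|\ge\varepsilon n)\ \le\ \varepsilon^{-1}\,\mathbb E\big[(|C_2|/n)\,\mathbf 1\{|C_2|\ge\varepsilon n\}\big]\ \le\ \varepsilon^{-1}\,\mathbb P\big(v\notin C_1,\ |C(v)|\ge k\big)\ \longrightarrow\ 0 .
\]

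For the last claim I would read the quantity as $\tfrac1{n^2}\mathbb E\big[|C(v)|\,|C(u)|\,\mathbf 1\{|C(u)|\ge k,\ C(v)\ne C(u)\}\big]$ for independent uniform vertices $u,v$. On that event $u$ and $v$ cannot both lie in $C_1$, so $\mathbf 1\{\cdots\}\le\mathbf 1\{v\notin C_1\}+\mathbf 1\{u\notin C_1,\ |C(u)|\ge k\}$, and it suffices to bound the two resulting expectations. Bounding $|C(u)|\le n$ in the first,
\begin{align*}
\tfrac1{n^2}\mathbb E\big[|C(v)|\,|C(u)|\,\mathbf 1\{v\notin C_1\}\big]
&\le\mathbb E\Big[\tfrac{|C(v)|}{n}\,\mathbf 1\{v\notin C_1\}\Big]\\
&=\mathbb E\Big[\tfrac1{n^2}\sum_{C\ne C_1}|C|^2\Big]\\
&\le\mathbb E\Big[\tfrac{|C_2|}{n}\Big(1-\tfrac{|C_1|}{n}\Big)\Big]\ \longrightarrow\ 0
\end{align*}
by the previous paragraph; and bounding $|C(v)|\le n$ in the second,
\begin{align*}
\tfrac1{n^2}\mathbb E\big[|C(v)|\,|C(u)|\,\mathbf 1\{u\notin C_1,\ |C(u)|\ge k\}\big]
&\le\mathbb E\Big[\tfrac{|C(u)|}{n}\,\mathbf 1\{u\notin C_1,\ |C(u)|\ge k\}\Big]\\
&\le\mathbb P\big(u\notin C_1,\ |C(u)|\ge k\big)\ \longrightarrow\ 0
\end{align*}
by the first claim (using $|C(u)|/n\le1$). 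If the intended statement also carries the constraint $|C(v)|\ge k$, one simply applies the second bound to whichever of $u,v$ lies outside $C_1$, and the appeal to $|C_2|=o(n)$ becomes unnecessary.

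The proof is short because the two substantial ingredients — local weak convergence of the percolated graph, and the concentration $|C_1|/n\to\zeta(p)$ on large‑set expanders imported from \cite{abs2021locality,Hofstad2021AlmostLocal} — are already available. I expect the only genuine (and minor) obstacles to be: confirming that $\{|C(v)|\ge k\}$ is a bona fide local event so the percolated limit applies; keeping the order of limits straight (always $n\to\infty$ before $k\to\infty$); and carrying the outer expectation over the random graph $G_n$ everywhere, which is handled uniformly by bounded convergence since every quantity lies in $[0,1]$.
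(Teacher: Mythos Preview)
Your proof is correct, but it runs in the opposite direction from the paper's. The paper first obtains the two-vertex statement \eqref{eq: O_k undirected} by invoking the sufficiency direction of Theorem~2.2 in \cite{Hofstad2021AlmostLocal} as a black box, and then derives the one-vertex claims from it by placing $u$ in $C_1$ when $\zeta(p)>0$. You instead establish the one-vertex bound directly from two elementary limits --- $\mathbb P(|C(v)|\ge k)\to f(k)\downarrow\zeta(p)$ via local weak convergence of the percolated graph, and $\mathbb P(v\in C_1)=\mathbb E[|C_1|/n]\to\zeta(p)$ via the converging-giant hypothesis --- and then push the two-vertex statement out of the one-vertex one. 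Your route is longer but fully self-contained: it in effect reproves the implication the paper imports from \cite{Hofstad2021AlmostLocal}, and the auxiliary fact $|C_2|/n\overset{\mathbb P}{\to}0$ drops out along the way. One minor point: the notation in \eqref{eq: O_k undirected} is ambiguous, and your product reading $|C(v)|\,|C(u)|$ is not the one the paper actually uses in its own derivation (their lower bound $\ge\delta\,\mathbb P(|C(v)|\ge k,\ v\notin C_1)$ only matches the indicator version $\mathbf 1\{|C(v)|\ge k,\ |C(u)|\ge k,\ C(v)\ne C(u)\}$, with the $1/n^2$ reflecting an implicit double sum over vertices); your closing remark already handles that reading, and is in fact the shorter argument.
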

\begin{proof}[Proof of Proposition~\ref{prop: converging giant}]
By the sufficiency condition in Theorem 2.2 of \cite{Hofstad2021AlmostLocal}, for any graph sequence with a converging giant we know,
\[ \lim_{k\to \infty}\lim_{n\to\infty}   \frac{1}{n^2}\mathbb E(|C(v)|, |C(u)|\geq k, C(v)\neq C(u))=0.\]
Then it remains to show $C(v)=O_{k,\mathbb{P}}(1)$ for $v\not\in C_1$. The case that $\zeta(p)=0$ directly follows from \eqref{eq: O_k undirected}.
In the case that $\zeta(p)>0$, choose $0<\delta<\zeta(p)$. Then by the condition of converging giant for large enough $n$ with high probability $|C_1|\geq \delta n$. Therefore, with probability larger than $\delta$ a random node lies in $C_1$ and
\begin{align*}
    \lim_{k\to \infty}\lim_{n\to\infty}   \frac{1}{n^2}\mathbb E(|C(v)|, |C(u)|\geq k, C(v)\neq C(u))\geq   \lim_{k\to \infty}\lim_{n\to\infty}   \delta\mathbb P( |C(v)|\geq k, v\not\in C_1),
\end{align*}
where in the right hand side the probability is over a uniform random node, and the left hand side goes to zero by \eqref{eq: O_k undirected}.
\end{proof}

{Combining this with Theorem 1.1 in \cite{abs2021locality} we immediately get the following corollary.}
\begin{corollary}\label{thm: size of giant in expander}
Let $\mu$, $\{G_n\}_{n\in\mathbb{N}}$, and $\zeta$  obey the assumptions of in Theorem \ref{thm: outbreak size}.  Let  $C_{i}$ be the $i^{th}$ largest component of $G_n(p)$.
If $\zeta$ is smooth with respect to $p$, then
\[\frac{|C_1|}{n}\overset{\mathbb{P}}{\to}\zeta(p),\]
with $\overset{\mathbb{P}}{\to}$ denoting convergence in probability with respect to both $\mu$ and percolation. Further, if $v$ is chosen uniformly at random then 
\[\lim_{k\to \infty}\lim_{n\to\infty}\mathbb P\Big( v\not\in C_1,\text{ and }|C(v)|\geq k\Big)=0,\]
and if $\zeta(p)=0$, then $|C(v)|=O_{k,\mathbb{P}}(1).$
\end{corollary}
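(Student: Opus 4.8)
The plan is to obtain this corollary by combining the external input of \cite{abs2021locality} with Proposition~\ref{prop: converging giant}, which we have already established. The first step is to observe that the hypotheses of the corollary are exactly those under which Theorem~1.1 of \cite{abs2021locality} applies: $\{G_n\}_{n\in\mathbb N}$ is a sequence of large-set expanders with bounded average degree, it converges in probability in the local weak sense to $(G,o)\sim\mu$, and $\mu$ is smooth at $p$ (equivalently, $\zeta$ is continuous at $p$). That theorem then yields
\[
\frac{|C_1|}{n}\overset{\mathbb P}{\to}\zeta(p),
\]
with the convergence taken over the randomness of both $G_n$ and the percolation, which is the first displayed assertion.

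The second step is simply to notice that this convergence is precisely the defining property, in the sense of Definition~\ref{defi: converging giant}, of $\{G_n(p)\}_{n\in\mathbb N}$ being a sequence of graphs with a \emph{converging giant}. Hence Proposition~\ref{prop: converging giant} applies directly. Its conclusion in the case $\zeta(p)>0$ gives $\lim_{k\to\infty}\lim_{n\to\infty}\mathbb P\big(v\notin C_1,\ |C(v)|\geq k\big)=0$ for a uniformly random vertex $v$, and its conclusion in the case $\zeta(p)=0$ gives $|C(v)|=O_{k,\mathbb P}(1)$; these are the two remaining assertions.

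There is essentially no real obstacle here beyond bookkeeping: the substantive content is front-loaded into Theorem~1.1 of \cite{abs2021locality} (concentration of the relative size of the giant) and into Proposition~\ref{prop: converging giant} (control of the component of a typical non-giant vertex, itself relying on Theorem~2.2 of \cite{Hofstad2021AlmostLocal}). The only points requiring a word of care are (i) matching the notion of smoothness used in the statement, namely continuity of $\zeta$ at $p$, with the hypothesis required by \cite{abs2021locality}, and (ii) confirming that the mode of convergence asserted — in probability, jointly over $\mu$ and percolation — is exactly the one delivered by that theorem; both are immediate from the definitions recalled in the introduction.
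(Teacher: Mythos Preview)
Your proposal is correct and follows exactly the paper's approach: the corollary is stated immediately after the sentence ``Combining this with Theorem~1.1 in \cite{abs2021locality} we immediately get the following corollary,'' where ``this'' refers to Proposition~\ref{prop: converging giant}. Your two-step argument (apply Theorem~1.1 of \cite{abs2021locality} to get $|C_1|/n\overset{\mathbb P}{\to}\zeta(p)$, then invoke Proposition~\ref{prop: converging giant} via Definition~\ref{defi: converging giant}) is precisely what the paper intends.
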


In the following sections, we will lift the assumptions of the large-set expansion and the smoothness of $\mu$  in  Theorems~\ref{thm: outbreak size} and \ref{thm: local alg}  by assuming that the graph sequence have a converging giant. 

\subsection{Size of the Outbreak is Two-Atom}\label{sec: outbreak is two atom}

We will show that Theorem~\ref{thm: outbreak size} holds when the sequence of graphs have a converging giant. 

\begin{lemma}[Generalization of Theorem \ref{thm: outbreak size}]\label{lm: outbreak size}
Let $\{G_n\}_{n\in\mathbb N}$ be a sequence with a converging giant with the limit distributed as $\mu$ on $\mathcal G_*$. 
Then\\
\begin{enumerate*}
            \item\label{Thm1.1-1} If  $\omega(n)$ is such that
           $\lim_{n\rightarrow \infty}\omega(n)=\infty$
            and $\lim_{n\rightarrow \infty}\omega(n)/n=0$, then
         $\mathbb{P} \Big(\omega(n)\leq \mathcal N\leq \frac{n}{\omega(n)}\Big)\rightarrow 0,$
         where  $\mathbb{P}$ denotes probabilities with respect to the infection process and the randomness of $G_n$.\\
         \item\label{Thm1.1-2} Let $\chi_p$ be the  random variable that is equal to $\zeta(p)$ with probability $\zeta(p)$ and equal to $0$ with probability $1-\zeta(p)$.  Then $\mathcal N/n$ converges weakly in probability to $\chi_p$.
\end{enumerate*}
\end{lemma}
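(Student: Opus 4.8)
The plan is to reduce the infection process to percolation via Lemma~\ref{lm: coupling infection} and then read off the component structure of $G_n(p)$ using Proposition~\ref{prop: converging giant} (equivalently Corollary~\ref{thm: size of giant in expander}, but stated directly for converging-giant sequences). Recall that $\mathcal N = \mathcal N(G_n)$ is the number of eventually infected nodes when a single uniformly random seed $v$ is infected. By Lemma~\ref{lm: coupling infection} with $k=1$, $\mathcal N$ has the same distribution as $|C(v)|$, the size of the connected component of $v$ in the percolated graph $G_n(p)$, where $v$ is uniformly random. So throughout I work with $|C(v)|$ in $G_n(p)$.

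For part~\ref{Thm1.1-1}, fix $\omega(n)\to\infty$ with $\omega(n)/n\to 0$. I want $\mathbb P(\omega(n)\le |C(v)|\le n/\omega(n))\to 0$. First, if $v\notin C_1$ (the largest component), Proposition~\ref{prop: converging giant} gives $\lim_{k\to\infty}\lim_{n\to\infty}\mathbb P(v\notin C_1,\ |C(v)|\ge k)=0$ when $\zeta(p)>0$, and $|C(v)|=O_{k,\mathbb P}(1)$ when $\zeta(p)=0$; either way, for every $\delta>0$ there is a constant $k_0$ with $\limsup_n\mathbb P(v\notin C_1,\ |C(v)|\ge k_0)\le\delta$. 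Since $\omega(n)\to\infty$, eventually $\omega(n)\ge k_0$, hence $\limsup_n\mathbb P(v\notin C_1,\ |C(v)|\ge\omega(n))\le\delta$, and $\delta$ is arbitrary. On the other hand, if $v\in C_1$, then by the converging-giant hypothesis $|C_1|/n\overset{\mathbb P}{\to}\zeta(p)$, so for any $\epsilon>0$, with probability $\to1$ we have $|C_1|\ge(\zeta(p)-\epsilon)n$ (if $\zeta(p)>0$), which for large $n$ exceeds $n/\omega(n)$ since $\omega(n)\to\infty$; thus $\mathbb P(v\in C_1,\ |C(v)|\le n/\omega(n))\to 0$. (If $\zeta(p)=0$ the event $v\in C_1$ contributes a component of size $o(n)$ but still $C(v)=O_{k,\mathbb P}(1)$, so the middle-range event is again negligible — the case $\zeta(p)=0$ is really just the first bullet of Proposition~\ref{prop: converging giant}.) Combining the two cases gives part~\ref{Thm1.1-1}.

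For part~\ref{Thm1.1-2}, I need $\mathcal N/n\overset{d}{\to}\chi_p$ in probability, i.e. for a.e. continuity point $t$ of the CDF of $\chi_p$, $\mathbb P_{\mathrm{inf},v}(\mathcal N/n\le t)\to$ the corresponding value, in probability over $G_n$. The distribution $\chi_p$ puts mass $1-\zeta(p)$ at $0$ and mass $\zeta(p)$ at $\zeta(p)$. Again using $\mathcal N\overset{d}{=}|C(v)|$: for $t\in(0,\zeta(p))$, $\{|C(v)|/n\le t\}=\{v\notin C_1\}\cup\{v\in C_1,\ |C_1|\le tn\}$; by part~\ref{Thm1.1-1}'s analysis the second event has probability $\to0$ and the first has probability $|V\setminus C_1|/n = 1-|C_1|/n\overset{\mathbb P}{\to}1-\zeta(p)$. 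For $t\ge\zeta(p)$ (and $t<1$), $\{|C(v)|/n\le t\}$ additionally includes $v\in C_1$ with high probability since $|C_1|/n\to\zeta(p)\le t$; so the probability $\to1$. Making this precise: conditionally on $G_n$, $\mathbb P_v(|C(v)|/n\le t)$ is a deterministic function of $G_n$, and I show it converges in probability to $F_{\chi_p}(t)$ at continuity points by splitting on $v\in C_1$ vs. $v\notin C_1$ and using $|C_1|/n\overset{\mathbb P}{\to}\zeta(p)$ together with the fact (Proposition~\ref{prop: converging giant}, first bullet) that the mass of $v$ outside $C_1$ but in a component of size $\ge\epsilon n$ is negligible. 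The two continuity points of $F_{\chi_p}$ are $0$ and $\zeta(p)$ (and if $\zeta(p)=0$ they coincide and $\chi_p\equiv0$, matching $|C(v)|=O_{k,\mathbb P}(1)$).

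The main obstacle is bookkeeping the two layers of randomness correctly: the statement is convergence \emph{in probability} of a conditional distribution, so I must be careful that all the $O_{k,\mathbb P}(1)$ and $\overset{\mathbb P}{\to}$ statements from Proposition~\ref{prop: converging giant} and the converging-giant definition are invoked in the right order (first over $G_n$, then over percolation/the seed), and that the interchange $\lim_k\lim_n$ in Proposition~\ref{prop: converging giant} is compatible with letting $\omega(n)$ grow slowly. The coupling step itself (Lemma~\ref{lm: coupling infection}) is already done, so no real probabilistic difficulty remains there; everything else is a careful but routine extraction of a two-atom limit from "giant plus $O(1)$ debris."
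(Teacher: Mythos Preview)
Your proposal is correct and follows essentially the same approach as the paper: reduce to percolation via Lemma~\ref{lm: coupling infection}, then split on $\{v\in C_1\}$ versus $\{v\notin C_1\}$ and invoke Proposition~\ref{prop: converging giant} together with $|C_1|/n\overset{\mathbb P}{\to}\zeta(p)$. The only cosmetic difference is in part~\ref{Thm1.1-2}: the paper tests against bounded continuous $f$ via the three events $A_\delta,A_{\zeta(p)},A'$, whereas you check CDF convergence at continuity points; these are equivalent formulations of the same two-atom argument. Your flag about carefully tracking the two layers of randomness is well placed (the paper is somewhat informal on this point), and one small imprecision to fix when writing it up: the equality $\{|C(v)|/n\le t\}=\{v\notin C_1\}\cup\{v\in C_1,\ |C_1|\le tn\}$ is not a set identity but only holds up to an event of vanishing probability, and ``conditionally on $G_n$'' should really be ``conditionally on $G_n(p)$'' when you write $\mathbb P_v(\cdot)$.
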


\begin{proof}
Recall that by Lemma~\ref{lm: coupling infection}, $\mathcal N$ is equal to the size of the connected component of a uniform random node,
\[\mathbb P(\frac{n}{\omega(n)}\geq\mathcal N\geq \omega(n))=\mathbb P_{G_n(p),\mathcal P_n}(\frac{n}{\omega(n)}\geq |C(v)|\geq \omega(n)),\]
where the second probability is both over percolation and choosing a random node.

First, assume $\zeta(p)=0$.
Then by Proposition~\ref{prop: converging giant},
\begin{equation}\label{eq: C(v) in limit}
    \lim_{k\to\infty}\lim_{n\to\infty}\mathbb P_{G(n)}( |C(v)|\geq k)=0.
\end{equation}
Then for any increasing function $\omega(n)$ with $\lim_{n\to\infty}\omega(n)=\infty$, 
$\mathbb P ( |C(v)|\geq \omega(n))\to 0.$
To see this, fix some $\epsilon>0$ and $n$, and let $k=\omega(n)$. Then by \eqref{eq: C(v) in limit}, there exists an $n_0>0$ such that for all  $n'\geq \max(n_0,n)$, we have $\mathbb P_{G(n')}(|C(v)|\geq \omega(n))\leq \epsilon.$ Then since $\omega$ is increasing, for $n'\geq \max(N,n)$ 
\begin{equation*}\label{eq: subcritical up bnd}
    \mathbb P_{G(n')} \big(|C(v)|\geq \omega(n')\big)\leq\mathbb P_{G(n')} \big( |C(v)|\geq \omega(n)\big)\leq \epsilon,
\end{equation*}
which proves the claim in the case that $\zeta(p)=0$. Now, since the graph sequence have a converging giant if $\zeta(p)>0$, then there exists $\epsilon$ such that for large enough $n$, $|C_1|\geq\epsilon n$. Therefore, if $v\in C_1$ then for large enough $n$, $|C(v)|\geq \frac{n}{\omega(n)}$. Then again by Proposition~\ref{prop: converging giant} and similar to above argument it is easy to see that
\[\mathbb P_{G_n(p),\mathcal P_n}(v\not\in C_1\text{ and } |C(v)|\geq \omega(n))\to 0.\]
This proves part \ref{Thm1.1-1}. 

So, it remains to prove part \ref{Thm1.1-2}. Again by Lemma~\ref{lm: coupling infection},
\[\lim_{k\to\infty}\lim_{n\to\infty}\mathbb P(\mathcal N\geq k)=\lim_{k\to\infty}\lim_{n\to\infty}\mathbb P_{G_n(p),\mathcal P_n}(|C(v)|\geq k).\]
Then consider two cases. First, assume $\zeta(p)=0$. Fix some $\epsilon>0$. Proposition~\ref{prop: converging giant}, implies that there exists $K_0$ and $N_0$ such that for all $k\geq K_0$ and $n\geq N_0$
\[\mathbb P(\mathcal N\geq k)\leq\epsilon.\]
Given $\epsilon'>0$, let $n\geq \max(N_0,k\epsilon'^{-1})$, then we have 
\[\mathbb P(\frac{\mathcal N}{n}\geq \epsilon')\leq \mathbb P(\frac{\mathcal N}{n}\geq \frac kn)\leq\epsilon,\]
which proves Part \ref{Thm1.1-2} in the first case.

Now, assume $\zeta(p)>0$.  For a fixed $\epsilon\in(0,\zeta(p)/2)$, by Lemma~\ref{lm: coupling infection},
\begin{equation}\label{eq: coupling thm1.1-2}
    \mathbb P\Big(\frac{\mathcal N}{n}\geq \epsilon\Big)=\mathbb P_{G_n(p),\mathcal P_n}\Big(\frac{|C(v)|}{n}\geq \epsilon\Big).
\end{equation}
Moreover, by  Proposition~\ref{prop: converging giant}, if a random node has a linear sized component then it is in the largest component with high probability,
i.e., for any $\epsilon'>0$,
\begin{equation}\label{eq: large comp is C1}
    \mathbb P_{G_n(p)}\Big(|\mathbb P_{\mathcal P_n}\big(\frac{|C(v)|}{n}\geq \epsilon\big)-\mathbb P_{\mathcal P_n}\big(v\in C_1\text{, and }\frac{|C(v)|}{n}\geq \epsilon\big)|\geq \epsilon'\Big)\to 0.
\end{equation}
It is obvious that, 
\begin{equation}\label{eq: obvious}
    \mathbb P_{G_n(p),\mathcal P_n}\Big(v\in C_1\text{, and }\frac{|C(v)|}{n}\geq \epsilon\Big)=\mathbb P_{G_n(p),\mathcal P_n}\Big(\frac{|C_1|}{n}\geq \epsilon\Big).
\end{equation}
Therefore by the converging giant and \eqref{eq: coupling thm1.1-2}, \eqref{eq: large comp is C1}, and \eqref{eq: obvious}
\begin{equation}\label{eq: A_delta}
\mathbb P\Big(\frac{\mathcal N}{n}\geq \epsilon\Big)\to\zeta(p).
\end{equation}
Similarly, be the coupling in Lemma~\ref{lm: coupling infection},
\[\mathbb P\Big(\zeta(p)-\epsilon\geq\frac{\mathcal N}{n}\leq\epsilon\Big)=\mathbb P_{G_n(p),\mathcal P_n}\Big(\zeta(p)-\epsilon\geq\frac{|C(v)|}{n}\geq \epsilon\Big).\]
Now, again, $\mathbb P_{\mathcal P_n}(\zeta(p)-\epsilon\geq\frac{|C(v)|}{n}\geq \epsilon)$ converges in probability  to the case that the random node $v$ is in $C_1$ and the same inequalities for $|C(v)|$ hold. However, by the converging giant condition we know that the probability that $\frac{|C_1|}{n}\leq\zeta(p)-\epsilon$ goes to zero. Therefore,
\begin{equation}\label{eq: Azeta}
\mathbb P\Big(\zeta(p)-\epsilon\geq\frac{\mathcal N}{n}\geq\epsilon\Big)\to 0.
\end{equation}

To finish the proof of part~\ref{Thm1.1-2}, let $f:[0,1]\rightarrow \mathbb R^+$ be any given continuous function bounded by $B>0$. Since it is continuous, for any $\epsilon'>0$ there exists $\delta$ such that $|f(x)-f(\zeta(p))|\leq \epsilon'$ for any $x\in(\zeta(p)-\delta,\zeta(p)+\delta)$, and also $|f(x)-f(0)|\leq \epsilon'$ for any $x\in[0,\delta)$. Then  to find the expectation of $f$, let $A_\delta$ be the event that $\frac{\mathcal N}{n}\leq \delta$,  let $A_{\zeta(p)}$ be the event that $|\frac{\mathcal N}{n}-\zeta(p)|\leq \delta$, and let $ A'$ be the event that non of $A_{\zeta(p)}$ and $A_{\delta}$ happen. Then
\begin{align*}
    \mathbb{E}\Big[f\big (\frac{\mathcal N}{n}\big)\Big]= \mathbb{E}\Big[f\big (\frac{\mathcal N}{n}\big)\mid A_\delta\Big]\mathbb{P}(A_\delta)+\mathbb{E}\Big[f\big (\frac{\mathcal N}{n}\big)\mid A_{\zeta(p)}\Big]\mathbb{P}(A_{\zeta(p)})+\mathbb{E}\Big[f\big (\frac{\mathcal N}{n}\big)\mid  A'\Big]\mathbb{P}( A').
\end{align*}
Similarly, we can write $\mathbb E f(\chi_p)$. Therefore,
\begin{align*}
    |\mathbb{E}\Big[f\big (\frac{\mathcal N}{n}\big)\Big]-\mathbb E f(\chi_p)|&\leq | \mathbb{E}\Big[f\big (\frac{\mathcal N}{n}\big)\mid A_\delta\Big]\mathbb{P}(A_\delta)-f(0)(1-\zeta(p))|
    \\&+\mid\mathbb{E}\Big[f\big (\frac{\mathcal N}{n}\big)\mid A_{\zeta(p)}\Big]\mathbb{P}(A_{\zeta(p)})-\zeta(p)f(\zeta(p))\mid
    +\mathbb{E}\Big[f\big (\frac{\mathcal N}{n}\big)\mid  A'\Big]\mathbb{P}( A')\\
    &\leq | f(0)\big(\mathbb P(A_\delta)-(1-\zeta(p))\big)+\epsilon|+| f(\zeta(p))\big(\mathbb P(A_{\zeta(p)})-\zeta(p)\big)+\epsilon|\\
    &\hfill +B\mathbb P(A').
\end{align*}
Now, the convergence of $\mathbb P(A_\delta)$ to $1-\zeta(p)$ proved in \eqref{eq: A_delta} and $\mathbb P(A_{\zeta(p)})$ to $\zeta(p)$ given in \eqref{eq: Azeta} implies the convergence of $\mathbb P(A')$ to $0$ and finishes the proof.
\end{proof}

\begin{proof}[Proof of Theorem~\ref{thm: outbreak size}]
The proof is a direct implication of Lemma~\ref{lm: outbreak size} and Corollary~\ref{thm: size of giant in expander}.
\end{proof}

\begin{remark}\label{remark:  lwc doesnt imply locality}
The local weak convergence of the graph sequence $\{G_n\}_{n\in\mathbb N}$ alone is not sufficient to read off the relative of size of the outbreak from its limit.  
 We can see this by comparing two graph sequences. For the first one, consider a graph sequence $\{H_n\}_{n\in\mathbb N}$ made by connecting two  $d$-regular random graph  components of size $\lfloor\frac n2 \rfloor$ with one edge. For the second sequence, consider $\{G_n\}_{n\in\mathbb N}$, where $G_n$ is a $d$-regular random graph of size $n$. The local weak limit of both $\{H_n\}_{n\in\mathbb N}$ and $\{G_n\}_{n\in\mathbb N}$ is a $d$-regular tree.
 
 However, in $H_n$ with probability at least $1-p$ the edge between the two $d$-regular component is removed, and the relative size of infection  will be roughly $\frac{\zeta(p)}{2}$. While there is also a chance that the two giant component becomes connected and result in a giant of the relative size $\zeta(p)$. So, in this case the size of an outbreak will have $3$ atoms instead.
\end{remark}
\subsection{Analysis of  Algorithm \ref{alg: forward process}}\label{sec: local alg}
Similar to the previous section, we will show that Algorithm \ref{alg: forward process} works  whenever the graph sequence has a converging giant. Again, we use the coupling in Lemma~\ref{lm: coupling infection} so we can focus only on the connected components after edge percolation. Then in graphs with a converging giant we know that, if a random node is not in the largest component, then the size of its component is $O_{k,\mathbb P}(1)$. So, if after the percolation or equivalently the infection process, the size of the connected component of the starting point is larger than $k$, then with high probability it lies in the giant component. Following that, a simple Chernoff bound gives the desired result.

\begin{lemma}[Generalization of Theorem~\ref{thm: local alg}]\label{lm: local alg}
Let $\{G_n\}_{n\in\mathbb{N}}$ be a graph sequence with a converging giant with the limit distributed as $\mu$ on $\mathcal G_*$. 
Then for any given $\epsilon>0$ there exist integers $k_{\epsilon}, q_\epsilon, N_\epsilon$,  such that for all $n\geq N_\epsilon$
    \[\mathbb{P}\big(|\widetilde{\mathcal N}(k_\epsilon,q_\epsilon)-\zeta(p)|\geq \epsilon\big)\leq \epsilon. \]
Here $\mathbb P$ denotes probabilities first over the randomness of $G_n$ and then the infection process and the sampled vertices contributing to $\widetilde{\mathcal N}(k_\epsilon,q_\epsilon)$.

\end{lemma}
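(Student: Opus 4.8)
The plan is to pass from the infection process to percolation via Lemma~\ref{lm: coupling infection}, to identify exactly which event of the percolated graph $G_n(p)$ one call of Algorithm~\ref{alg: forward process} tests, and then to combine the converging-giant estimates of Proposition~\ref{prop: converging giant} with a Chernoff bound over the $q$ independent queries. So first I would analyse a single call. Fix $k$ and let $v$ be its uniform seed. By Lemma~\ref{lm: coupling infection} we may realize the SIR process on all of $G_n$ so that the eventually-infected set of $v$ equals its percolation cluster $C(v)$ in $G_n(p)$, and then the set $R$ returned by Algorithm~\ref{alg: forward process} is exactly the set of vertices reachable from $v$ by retained edges lying inside the induced subgraph $G_n[B_k(v)]$. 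Two facts follow. First, $R\subseteq C(v)$, so if $|C(v)|<k$ the call returns $0$. Second, any $u$ with $d_{G_n(p)}(v,u)\le k$ is joined to $v$ by a retained path of at most $k$ edges, all of whose vertices lie within $G_n$-distance $k$ of $v$, hence inside $B_k(v)$; so $R$ contains the radius-$k$ ball of $v$ in $G_n(p)$. Combining the latter with the elementary fact that breadth-first search from any vertex of a connected graph on $N$ vertices reaches at least $\min(N,k+1)$ vertices within $k$ steps, applied to the giant $C_1$, we get: if $v\in C_1$ and $|C_1|\ge k+1$, then $|R|\ge k+1$ and the call returns $1$.

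Next I would control the per-call success probability $\theta_n(k):=\mathbb{P}(\text{a call of Algorithm~\ref{alg: forward process} returns }1\mid G_n)$. The two facts above give, on the event $\{|C_1|\ge k+1\}$,
\[
\frac{|C_1|}{n}\ \le\ \theta_n(k)\ \le\ \mathbb{P}\big(|C(v)|\ge k\mid G_n\big)\ \le\ \frac{|C_1|}{n}+\mathbb{P}\big(v\notin C_1,\ |C(v)|\ge k\mid G_n\big).
\]
By Definition~\ref{defi: converging giant} we have $|C_1|/n\overset{\mathbb{P}}{\to}\zeta(p)$, and by Proposition~\ref{prop: converging giant} the $G_n$-expectation of the last term, namely $\mathbb{P}(v\notin C_1,|C(v)|\ge k)$, tends to $0$ as $n\to\infty$ and then $k\to\infty$; when $\zeta(p)=0$ one instead uses $\theta_n(k)\le\mathbb{P}(|C(v)|\ge k\mid G_n)$ together with the $\zeta(p)=0$ clause of Proposition~\ref{prop: converging giant}, which makes the right-hand side small once $k$ and $n$ are large. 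So, given $\epsilon>0$, one chooses $k_\epsilon$ making that expectation $o(1)$ in $n$, then $N_\epsilon$ large, and a Markov bound on the last term combined with the convergence of $|C_1|/n$ yields $\mathbb{P}(|\theta_n(k_\epsilon)-\zeta(p)|\ge \epsilon/2)\le \epsilon/2$ for all $n\ge N_\epsilon$.

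Finally, conditionally on $G_n$ the $q$ calls are i.i.d.\ $\mathrm{Bernoulli}(\theta_n(k))$, so $\widetilde{\mathcal N}(k,q)$ is an average of i.i.d.\ $[0,1]$-valued variables, and Hoeffding's inequality gives $\mathbb{P}(|\widetilde{\mathcal N}(k_\epsilon,q)-\theta_n(k_\epsilon)|\ge \epsilon/2\mid G_n)\le 2e^{-q\epsilon^2/2}$, uniformly in $G_n$, which is $\le\epsilon/2$ once $q\ge q_\epsilon:=\lceil 2\epsilon^{-2}\log(4/\epsilon)\rceil$. A union bound with the event $\{|\theta_n(k_\epsilon)-\zeta(p)|\ge\epsilon/2\}$ from the previous step then gives $\mathbb{P}(|\widetilde{\mathcal N}(k_\epsilon,q_\epsilon)-\zeta(p)|\ge\epsilon)\le\epsilon$ for all $n\ge N_\epsilon$, which is the claim. (Theorem~\ref{thm: local alg} then follows by first invoking Corollary~\ref{thm: size of giant in expander} to see that its hypotheses force a converging giant.)

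The step I expect to be the real obstacle is the structural claim in the first paragraph: because Algorithm~\ref{alg: forward process} only ever explores the truncated ball $B_k(v)$, one must be sure that a seed landing in the giant is nevertheless detected. The resolution is that the giant, being connected with $\Theta(n)$ vertices, already contains at least $k+1$ vertices within percolation-distance $k$ of $v$, and the short retained paths realizing this never leave $B_k(v)$, so the algorithm does explore them. Once this is in place, the converging-giant bounds of Proposition~\ref{prop: converging giant} and the Chernoff/union-bound bookkeeping are routine.
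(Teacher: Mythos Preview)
Your argument is correct and in fact takes a cleaner route than the paper's. The paper obtains independence of the $q$ outputs by first arguing that with high probability the $k$-neighborhoods $B_{k_\epsilon}(v_1),\dots,B_{k_\epsilon}(v_q)$ are pairwise disjoint (which requires a tightness step using local finiteness of the limit to bound neighborhood sizes by some $\Delta^{k_\epsilon}$), and then observes that on this event the SIR runs use disjoint edge randomness. You bypass all of this by noting that, since each query of Algorithm~\ref{alg: forward process} draws its own vertex \emph{and} runs its own SIR, the $q$ outputs are i.i.d.\ conditionally on $G_n$ with no further hypothesis; Hoeffding then applies immediately. You also supply an explicit structural analysis of what a single call tests (retained paths of length $\le k$ from $v$ never leave $B_k(v)$, so $R$ contains the percolation $k$-ball of $v$), which the paper uses implicitly but does not spell out.

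One imprecision to clean up: your display
\[
\frac{|C_1|}{n}\ \le\ \theta_n(k)\ \le\ \frac{|C_1|}{n}+\mathbb{P}\big(v\notin C_1,\ |C(v)|\ge k\mid G_n\big)
\]
mixes conditioning levels, since $\theta_n(k)=\mathbb{P}(\text{returns }1\mid G_n)$ has already averaged over percolation while $|C_1|/n$ still depends on the percolation outcome. The correct sandwich is
\[
\mathbb{E}\Bigl[\tfrac{|C_1|}{n}\mathbf 1_{|C_1|\ge k}\,\Big|\,G_n\Bigr]\ \le\ \theta_n(k)\ \le\ \mathbb{E}\Bigl[\tfrac{|C_1|}{n}\,\Big|\,G_n\Bigr]+\mathbb{P}\bigl(v\notin C_1,\ |C(v)|\ge k\,\big|\,G_n\bigr),
\]
and each of these $G_n$-measurable quantities converges in probability (over the randomness of $G_n$) to $\zeta(p)$, $\zeta(p)$, $0$ respectively: take $L^1$-limits using boundedness together with Definition~\ref{defi: converging giant} and Proposition~\ref{prop: converging giant}, then apply Markov. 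With that adjustment your Hoeffding/union-bound bookkeeping goes through verbatim.
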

\begin{proof}
Fix some $\epsilon$. First consider the case that $\zeta(p)>0$. By the assumption of converging giant, there exist $k_\epsilon$ and $N_\epsilon$ such that for $n\geq N_\epsilon$, $|\frac{|C_1|}{n}-\zeta(p)|\leq \frac{\epsilon}{3}$. Then by Proposition~\ref{prop: converging giant}
\begin{equation}\label{eq: lm2.6 eq 1}
    \mathbb P (|C(v)|\geq k_\epsilon,\text{ and } v\not\in C_1)\leq\frac{\epsilon}{3}.
\end{equation}

We need to make sure that when we sample $q$ nodes, their  $k$ neighborhoods are disjoint with high probability. So, running the SIR process in each neighborhood is independent.
Let $v_1,\ldots,v_q$ be the sampled initial nodes, and $X_1,\ldots, X_q\in\{0,1\}$ be the outputs of  the Algorithm \ref{alg: forward process}. Let $A_q$ be the event that the neighborhoods of $v_1,\ldots, v_q$ are disjoint. We claim that for large enough $n$
\begin{equation}\label{eq: bounded neighborhood}
    \mathbb P(A_q)\geq 1-q^2\frac{\log n}{n}-\frac{\epsilon }{3}.
\end{equation}
Then given $A_q$, infection process in $B_{k}(v_1),\ldots , B_{k}(v_q)$ are independent as well since the random variable drawn for edges are disjoint.   Therefore, 
\[\mathbb P(X_1,\ldots, X_q|A_q)=\prod_{i=1}^q\mathbb P(X_i\mid A_q),\]
and we will be able to use Chernoff bound to give concentration bounds. Before that, we prove \eqref{eq: bounded neighborhood}. 

Now, to prove \eqref{eq: bounded neighborhood}, define ${V}_{r,\Delta}$ as the set of set of nodes such that the maximum degree in their $r$ neighborhood is at most $\Delta$.
Since the limiting graph
$(G,o)$ is locally finite,  for all
$\epsilon>0$, there exists $\Delta<\infty$ and $N_\epsilon'<\infty$ such that for $n\geq N_\epsilon'$, with probability $1-\frac{\epsilon}{4}$ 
we have $\frac{|V_{k_\epsilon,\Delta}|}{n}\geq 1-\frac{\epsilon}{4}$. Let $E_\epsilon$ be the event that
$\frac{|V_{k_\epsilon,\Delta}|}{n}\geq 1-\frac{\epsilon}{4}$,
that $\phi(G_n,\epsilon)\geq \alpha$, and that $G_n$
has average degree at most $\bar d$.  Increasing $N_\epsilon$
if needed, we have that for $n\geq N_\epsilon$, $E_\epsilon$ has probability at least $1-\epsilon/3$. Given $E_\epsilon$, then by a union bound
\[\mathbb P (A_q \text{ does not happen})\leq  \mathbb P(A_q\text{ does not happen}\mid E_\epsilon)+\mathbb P(E_\epsilon)\leq q^2\frac{\Delta^k}{n}+\frac{\epsilon}{3}.\]
Since $q$ and $\Delta$ are independent from $n$, by choosing $n$ large enough we get \eqref{eq: bounded neighborhood}.

To finish the proof, recall that $\widetilde{\mathcal N}(k_\epsilon,q)=\frac{\sum_{i=1}^q X_i}{q}$. Also,  since $\zeta(p)>0$, for large enough $n$ we know $|C_1|\geq k$. With this and the definition of converging giant
\[\zeta(p)-\epsilon\leq \frac{|C_1|}{n}\leq \mathbb P (X_i=1).\]
Then combining the last equation with \eqref{eq: lm2.6 eq 1},
\[\zeta(p)-\epsilon\leq  \mathbb P (X_i=1)\leq \frac{|C_1|}{n}+\epsilon\leq \zeta(p)+2\epsilon.\]

Therefore, $|\mathbb E[X_i]-\zeta(p)|\leq 2\epsilon$. Moreover,
\[\mathbb P (A_q)\mathbb E(X_i|A_q)\leq \mathbb E (X_i)\leq \mathbb P (A_q)\mathbb E(X_i|A_q)+(1-P(A_q)).\]
Hence by \eqref{eq: bounded neighborhood},  $|\mathbb E[X_i|A_q]-\zeta(p)|\leq 3\epsilon$.
As a result, by Chernoff bound,
\begin{align*}
    \mathbb P(|\widetilde{\mathcal N}(k_\epsilon,q)-\zeta(p)|\geq t+3\epsilon \mid A_q)\leq e^{-qt^2}.
\end{align*}
Now we can  get a concentration bound without conditioning on $A_q$,
\begin{align*}
    \mathbb P(|\widetilde{\mathcal N}(k_\epsilon,q)-\zeta(p)|\geq t+3\epsilon)&\leq \mathbb P(|\widetilde{\mathcal N}(k_\epsilon,q)\zeta(p)|\geq t+3\epsilon \mid A_q)+1-\mathbb P (A_q)\\
    &\leq e^{-qt^2}+q^2\frac{\log n}{n}+\frac{\epsilon }{3},
\end{align*} 
which gives the desired result in the case that $\zeta(p)>0$. 

The case $\zeta(p)=0$ is similar, except that we can choose $k_\epsilon$ such that  $\mathbb P (|C(v)|\geq k_\epsilon)\leq \frac{\epsilon}{3}$. 
\end{proof}

\begin{proof}[Proof of Theorem \ref{thm: local alg}]
The proof follows from Corollary~\ref{thm: size of giant in expander} and Lemma~\ref{lm: local alg}.
\end{proof}

\begin{remark}
In the proof above, one can lift the assumption that the initial infected node is chosen uniformly at random. In fact, as long as each nodes is chosen with probability $O(1/n)$ then the relative size of an outbreak will still converge to $\zeta(p)$, since the size of the largest component in $G(p)$ converges to $\zeta(p)$ and the Chernoff bounds in the proof above follows similarly and the relative size of the largest component still .

However, the probability of an outbreak might change. When the initial infection happens proportional to some local feature of nodes such as the degree, we can first draw uniform random samples,  calculate their local features (degrees), and then do rejection sampling to get the probability of an infection when one infects by choosing vertices according to the local feature (degree). When the local feature is degree and the degrees have a finite second moment, then by applying Theorem 2.23 \cite{RemcoVol2} we can see the local limit of the above rejection sampling procedure exists and gives the right probability of an outbreak.
It is an interesting question to characterize other local features that starting the infection with respect to them keeps the probability of an outbreak a local property, i.e., it can be derived from the rejection sampling modification of Algorithm~\ref{alg: forward process}.
\end{remark}

\section{Proofs of Applications to Motif-Based Graph Models}\label{sec: proofs application}
{
This section is on the proof of applications to motif-based models. 
For that purpose, we need to check 1) local weak convergence, 2) large-set expansion of the sequence, and 3) continuity of the survival probability in the limit.  
We go over the proofs of these properties 
in the same order. 
Then in Section~\ref{sec: motif based}, we show Theorems~\ref{thm: outbreak size} and \ref{thm: local alg} hold for the motif-based configuration model and the preferential attachment models. 
}
\subsection{Locality of Motif-Based graphs}

{We start by proving the locality stated in  Lemma~\ref{lm: LWC of MRG}. Let us first state the precise definition of local weak convergence.  {For this purpose, we define a function $f:\mathcal{G}_*\longrightarrow \mathbb{R}$ to be $k$-local if $f(G,o)$ depends only on the $k$-neighborhood of $o$ in $G$.}
Given a sequence of (possibly) random graphs $\{G_n\}_{n\in\mathbb{N}}$, and a {(non-random)} probability $\mu$ on $\mathcal{G}_*$, we say  $G_n$ converges in probability \emph{in the local weak sense} to $\mu$ if {for $k<\infty$,} and any bounded {$k$-local} function $f:\mathcal{G}_*\longrightarrow \mathbb{R}$, 
\begin{equation}\label{eq: lwc definiiton}
\mathbb{E}_{\mathcal{P}_n}[f]\overset{\mathbb{P}}{\to}\mathbb{E}_\mu[f],
\end{equation}
where  $\mathcal P_n$ {is the uniform distribution over random vertices in $G_n$ and $\mathbb{E}_{\mathcal{P}_n}[f]$ is a shorthand for
$\mathbb{E}_{v_n\sim \mathcal{P}_n}[f(G_n,v_n)]$}.%
\footnote{{Alternatively, local weak convergence can be defined by introducing a metric on  $\mathcal{G}_*$
and then using continuous functions and weak convergence with respect to this metric \cite{Aldous2004, benjamini2001}; hence the name local weak convergence.  The equivalence of the two is proven, e.g.,} in Section 2 of \cite{RemcoVol2}).}}

To prove Lemma~\ref{lm: LWC of MRG} and hence local weak convergence of a general class of motif-based models, we will use the second moment method to prove \eqref{eq: lwc definiiton}. The key idea is to map functions on the motif-based graphs to the ones on the external graphs by averaging over all vertices in a motif and then use the local weak convergence of the external graphs. 
But before, we need a technical result, which shows the expected size of a motif converges in probability.

\begin{proposition}\label{prop: internal 2}
Let  $\{G_n\}_{n\in\mathbb N}$ be a motif-based graph sequence satisfying Conditions \ref{cond: internal} and \ref{cond: external}. 
Let $\mathbb E_n$ denote the probability distribution of a uniform random motif among all $n$ motifs of $G_n$, then there exist $\bar v<\infty$,
\[\mathbb E_n[v(M)]\overset{\mathbb P}{\to}\bar v.\]
Here $\overset{\mathbb P}{\to}$ denotes convergence in probability with respect to the possible randomness of the degree distribution of the external graph, as well as the randomness from assigning motifs iid at random according to 
$\mathcal M_d$.
\end{proposition}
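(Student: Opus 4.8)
The plan is to express $\mathbb E_n[v(M)]$ as an average over the $n$ motifs of $G_n$ of the motif sizes $v(M_i)$, and then to identify this with a $0$-local (or very low-radius) average over the vertices of the external graph $G_n^{ext}$. Concretely, each vertex $u$ of $G_n^{ext}$ of degree $d$ is independently replaced by a motif drawn from $\mathcal M_d$, so I would write
\[
\mathbb E_n[v(M)] = \frac{1}{n}\sum_{u\in V(G_n^{ext})} v(M_u),
\]
where $M_u$ is the (random) motif substituted at $u$. Conditionally on the degree sequence of $G_n^{ext}$, the summands are independent, and $\mathbb E[v(M_u)\mid \deg(u)=d] = \bar v(d) := \mathbb E_{M\sim\mathcal M_d}[v(M)]$, a deterministic function of $d$ bounded by $S_{\max}$ thanks to Condition~\ref{cond: internal}. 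Since $\bar v(d)\le S_{\max}$ depends only on the degree of $u$ — a $1$-local function of $G_n^{ext}$ — the external regularity in Condition~\ref{cond: external} (local weak convergence of $\{G_n^{ext}\}$ to $\mu^{ext}$, together with bounded average degree, which controls the contribution of high-degree vertices) gives
\[
\frac{1}{n}\sum_{u} \bar v(\deg(u)) = \mathbb E_{\mathcal P_n^{ext}}\big[\bar v(\deg(o))\big] \overset{\mathbb P}{\to} \mathbb E_{\mu^{ext}}\big[\bar v(\deg(o))\big] =: \bar v < \infty.
\]

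The remaining step is to remove the conditional expectation, i.e., to show that the fluctuation
\[
\frac{1}{n}\sum_{u}\big(v(M_u) - \bar v(\deg(u))\big)
\]
tends to $0$ in probability. This is exactly where the second moment method comes in: conditionally on the external degree sequence, this is a sum of $n$ independent, mean-zero random variables each bounded by $S_{\max}$ in absolute value, so its conditional variance is at most $S_{\max}^2/n$, and Chebyshev gives concentration at rate $O(1/\sqrt n)$. Taking expectation over the (possible) randomness of the external degree sequence and using a routine "convergence in probability conditionally plus convergence in probability of the conditional means implies unconditional convergence in probability" argument (or just bounding $\mathbb P(|\cdot|>\epsilon)$ by conditioning on the high-probability event from Condition~\ref{cond: external} and applying Chebyshev inside) finishes the claim, with $\bar v = \mathbb E_{\mu^{ext}}[\bar v(\deg(o))]$.

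I expect the only mildly delicate point to be the bookkeeping around the two layers of randomness — the randomness of $G_n^{ext}$ itself (if it is random) and the independent motif assignment — and making sure that $\overset{\mathbb P}{\to}$ statements compose correctly across these layers; everything else is a boundedness-plus-Chebyshev argument powered by $S_{\max}<\infty$. One subtlety worth stating carefully is that the map $d\mapsto \bar v(d)$ is bounded but a priori has no continuity in $d$; since the external graph has bounded average degree and $\bar v(d)\le S_{\max}$ uniformly, $\bar v(\deg(o))$ is still a bounded local function and local weak convergence applies directly, so this causes no real trouble.
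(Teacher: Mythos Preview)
Your proposal is correct and follows essentially the same approach as the paper: condition on the external graph so that the motif sizes are independent with conditional means $v_d=\bar v(d)$, use Chebyshev (variance $\leq S_{\max}^2/n$) to reduce to the convergence of $\frac 1n\sum_i v_{d_i}$, and then invoke local weak convergence of $G_n^{ext}$ on the bounded local function $\bar v(\deg(\cdot))$. The paper's write-up is terser but the decomposition, the key bound, and the use of the second moment method are identical.
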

\begin{proof}

Let 
$\tilde v
$ be the expectation of $ \mathbb E_n[v(M)]$ conditioned on $G^{ext}_n$.  We can write this expectation as
$$
\tilde v=\frac 1n\sum_{i=1}^n v_{d_i},
$$
where $d_i$ is the degree of vertex $i$ in $G^{ext}_n$,
and
$
v_d=\mathbb E_{\mathcal M_d}[v(M)]
$.  On the other hand, the expectation of the square of 
$ \mathbb E_n[v(M)]$ conditioned on $G^{ext}_n$ is equal to
$$
\frac 1{n^2}\sum_{i\neq j}^n v_{d_i}v_{d_j}+
\frac 1{n^2}\sum_i \mathbb E_{\mathcal M_{d_i}}[v^2(M)]
\leq \tilde v^2+\frac{S_{\max}^2}n.
$$
Thus, by the second moment method
$$
\mathbb P(|\mathbb E_n[v(M)]-\tilde v|\geq\epsilon)
\leq \frac{S_{\max}^2}{n\epsilon^2}
$$
where the probability is over the random motif assignments.  Combined with the fact that $\tilde v$
converges in probability
by the assumed weak local convergence of $G^{ext}_n$, this proves the proposition.
\end{proof}

\begin{proof}[Proof of Lemma \ref{lm: LWC of MRG}]
To prove the local weak limit of the motif-based graph, we need to define the probability distribution of the limiting graph $\mu$, which we want to construct using $\mu^{ext}$. For any graph $H$, with the abuse of notation we use $\mathcal M(H)$ as the graph constructed by replacing each vertex $v$ in $H$ with a motif drawn from the distribution $\mathcal M_{d_v}$ (here, $d_v$ is the degree of $v$). 

We need to make sure the neighborhood of a uniform random node in $G_n$ converges to  the neighborhood of the graph $(G,o)\sim\mu$. For that purpose, we need to re-weight 
nodes {in the external network} based on their motif sizes. So, let $\bar\mu ^{ext}$ be the weighted probability measure on the external graphs, i.e., we draw $(G^{ext},o)$ proportional to $\mathbb E_{\mathcal M_{d_o}}[v(M)]\mu^{ext}(G^{ext},o)$, where $d_o$ is the degree of the root $o$. 
Let $\mu$ be the probability measure that replaces each node $v$ of $(G^{ext},o)\sim \bar\mu^{ext}$ with a motif drawn from $\mathcal M_{d_v}$, and then chooses a uniform random node $o'$ from $M_o$ (the motif of the external root) to be the new root of the graph.

Fix an integer $k>0$ and  a {$k$-local} bounded  function $f$ on
$\mathcal G_*$. {Let $n'=\sum_{i=1}^n v(M_i)$
be the number of nodes in the motif-based graph $G_n$, and let
$\mathcal P_{n'}$ denote the distribution of a uniform random node among the $n'$ nodes in the motif graph $G_n$.}
For the proof of local weak convergence,
we need to show that 
\begin{equation}\label{eq: lwc int-ext}
\mathbb E_{\mathcal P_{n'}}[f(G_n)] \overset{\mathbb P}{\to} \mu\big( f(G,o)\big),
 \end{equation}
 where, with a slight abuse of notation, we used $\mathbb E_{\mathcal P_{n'}}[f(G_n)]$ for $\mathbb E_{v\sim\mathcal P_{n'}}[f(G_n,v)]$.

We will use the second moment method to prove  \eqref{eq: lwc int-ext}. 
Given an external graph $G_n^{ext}$, we use $\mathbb E_{\mathcal M({G_n^{ext})}}$ for {expectations with respect to a random graph $G_n$ drawn from $\mathcal M(G_n^{ext})$. }
For the first moment we need to show,
\begin{equation*}
    \mathbb E_{\mathcal M(G_n^{ext})}\big[\mathbb E_{\mathcal P_{n'}}[f({G_n})] \big]\overset{\mathbb P}{\to} \mathbb E_{\mathcal M(G^{ext})}\big[\mu\Big(f(G,o)\Big)\big],
\end{equation*}
where the convergence in probability is over all possible randomness of $G_n^{ext}$. Note that 
\begin{align*}
    \mathbb E_{\mathcal M(G_n^{ext})}[\mathbb E_{\mathcal P_{\sum_{i\in[n]} v(M_j)}}[f(G_n)]]&=\frac{1}{n}\sum_{i\in[n]} \mathbb E_{\mathcal M(G_n^{ext})}[\frac{n}{\sum_{i\in[n]} v(M_j)}\sum_{v\in M_i}f(G_n,v)].
\end{align*}
By Proposition \ref{prop: internal 2}, $\frac{n}{\sum_{i\in[n]} v(M_j)}$ converges in probability to $\frac{1}{{\bar v}}$. So, if we define 
$$\bar f(G_n^{{ext}},i)=\mathbb E_{\mathcal M(G_n^{{ext}})}[\frac{1}{\bar v}\sum_{v\in M_i}f(G_n,v)],$$
then
\begin{align*}
    \mathbb E_{\mathcal M(G_n^{{ext}})}[\mathbb E_{\mathcal P_{\sum_{i\in[n]} v(M_j)}}[f(G_n]]&-\frac{1}{n}\sum_{i\in[n]} \bar f(G_n^{{ext}},i)\\
    &=\frac{1}{n} \mathbb E_{\mathcal M(G_n)}\Big(\big(\frac{n}{\sum_{j\in[n]} v(M_j)}-\frac{1}{{\bar v}}\big) \big(\sum_{v\in M_i}f(G_n^{{ext}},v)\big)
    \Big).
\end{align*}
Fix some $\epsilon>0$, and let $B$ be the bound on $f$. Then by Proposition \ref{prop: internal 2}, $$ \mathbb P\big(|\frac{n}{\sum_{j\in[n]} v(M_j)}-\frac{1}{{\bar v}}|\geq \frac{\epsilon}{B}\big)\to 0.$$
Therefore,
 \begin{align*}
   \mathbb P\Big(   \frac{1}{n} \mathbb E_{\mathcal M(G_n)}\Big(|\frac{n}{\sum_{j\in[n]} v(M_j)}-\frac{1}{{\bar v}}| \big(\sum_{v\in M_i}f(G_n,v)\geq \epsilon\big)    \Big)\to 0.,
 \end{align*}
 and as a result,
\begin{align*}
 \mathbb P\Big(  | \mathbb E_{\mathcal M(G_n)}[\mathbb E_{\mathcal P_{\sum_{j\in[n]} v(M_j)}}[f(G_n]&-\frac{1}{n}\sum_{i\in[n]}\bar f(G_n^{{ext}},i)|\geq \epsilon\Big)\to 0.
\end{align*}

So, to prove the first moment it is enough to show,
\begin{equation}\label{eq: first momemnt motif}
   \mathbb E_{\mathcal P_n} \bar f(G_n^{ext})\overset{\mathbb P}{\to} \mathbb E_{\mathcal M(G^{ext})}\big[\mu\Big(f(G,o)\Big)\big].
\end{equation}
{Note that $\bar f$ } is bounded because
$$\bar f(G_n^{{ext}},i)=\mathbb E_{\mathcal M(G_n)}[\frac{1}{\bar v}\sum_{v\in M_i}f(G_n,v)]\leq \frac{B S_{\max}}{\bar v}.$$
{Furthermore, if $j$ has distance  $k$ from $i$,
then any node in $M_i\subset G_n$ has distance at least $k$ from any node in $M_j\subset G_n$, showing that 
$\bar f(G_n^{{ext}},i)$ depends only on the $k$-neighborhood of $i$ in $G_n^{{ext}}$.
Thus 
 $\bar f$ is  a $k$-local, bounded function, and} 
local weak convergence of the external graphs implies that
\begin{align*}
      \mathbb E_{\mathcal P_n} \bar f(G_n^{{ext}})\overset{\mathbb P}{\to}\mu^{ext}\Big(\bar f(G^{ext},o)\Big)=\mu^{ext}\Big(\mathbb E_{\mathcal M(G^{ext})}[\frac{1}{ \bar v}\sum_{v\in M_o}f(G,v)]\Big),
\end{align*}
which  implies \eqref{eq: first momemnt motif}.


For the second moment, we first note that {with high probability} two uniform random motifs in $G_n^{ext}$ have disjoint $k$-neighborhoods.   In fact, if $dist_{G_n^{ext}}(i,j)$ is the distance of motifs $i$ and $j$ in $G_n^{ext}$, then  by the local weak convergence of $G_n^{ext}$,  Corollary 2.19 in \cite{RemcoVol2} implies that $dist_{G_n^{ext}}(i,j)\overset{\mathbb P}{\to}\infty$.
Next, we show the second moment argument follows from the large distances of motifs. We can write,
\begin{align*}
    \mathbb E_{\mathcal M(G_n^{ext})}[(\sum_{v\in V(G_n)}f(G_n,v))^2]&=\mathbb E_{\mathcal M(G_n^{ext})}[\sum_{i,j\in[n]}(\sum_{v\in M_i} f(G_n,v))(\sum_{u\in M_j} f(G_n,u))]\\
    &=\sum_{i,j\in[n]}\mathbb E_{\mathcal M(G_n^{ext})}[(\sum_{v\in M_i} f(G_n,v))(\sum_{u\in M_j} f(G_n,u))].
\end{align*}
Then if we divide the outer sum into two cases based on whether the distance of $M_i$ and $M_j$ is larger or smaller than $k$, we can use the fact 
{when it is larger than $k$, the expectation over
the choice of motifs factors, giving }
\begin{align*}
    \mathbb E_{\mathcal M(G_n^{ext})}&[(\sum_{v\in V(G_n)}f(G_n,v))^2]\\
    &=\sum_{i,j\in[n], dist_{G_n^{ext}}(i,j)> k}\mathbb E_{\mathcal M(G_n^{ext})}[(\sum_{v\in M_i} f(G_n,v))]\mathbb E_{\mathcal M(G_n^{ext})}[(\sum_{u\in M_j} f(G_n,u))]\\&+
    \sum_{i,j\in[n], dist_{G_n^{ext}}(i,j)\leq k}\mathbb E_{\mathcal M(G_n^{ext})}[(\sum_{v\in M_i} f(G_n,v))(\sum_{u\in M_j} f(G_n,u))].
\end{align*}
{Here we used} that the distance of two nodes in $G_n$ is at least the distance of their corresponding motifs in the external graph.
Then we choose $n$ large enough such that for random motifs (proportional to their sizes) their distance is less than $k$ with probability at most $\epsilon$,
\begin{align*}
  \sum_{i,j\in[n], dist_{G_n^{ext}}(i,j)\leq k}\mathbb E_{\mathcal M(G^{ext}_n)}[(\sum_{v\in M_i} f(G_n,v))(\sum_{u\in M_j} f(G_n,u))]\leq \epsilon n^2S_{\max}^2B^2,
\end{align*}
where $B$ is the bound on $f$.
Therefore, since $\epsilon$ was arbitrary
\begin{align*}
    \mathbb E_{\mathcal M(G_n^{ext})}[(\frac{1}{n}\sum_{v\in V(G_n)}f(G_n,v))^2]&\overset{\mathbb P}{\to} \mathbb E_{\mathcal M(G_n^{ext})}[\frac{1}{n}\sum_{v\in V(G_n)}f(G_n,v)]^2,
\end{align*}
where the convergence in probability is over the randomness of the external graph $G_n^{ext}$.

On the other hand,
\begin{align*}
    \mathbb E_{\mathcal M(G_n^{ext})}[\frac{1}{(\sum_{i\in[n]} v(M_i))^2}(\sum_{v\in V(G_n)}f(G_n,v))^2]&\\= \frac{1}{n^2}\sum_{i\in[n]} \mathbb E_{\mathcal M(G_n^{ext})}[&\frac{n^2}{(\sum_{i\in[n]} v(M_i))^2}(\sum_{v\in M_i}f(G_n,v))^2].
\end{align*}
Again, using Proposition~\ref{prop: internal 2},
\begin{align*}
    \mathbb E_{\mathcal M(G_n^{ext})}[\frac{1}{(\sum_{i\in[n]} v(M_i))^2}(\sum_{v\in V(G_n)}f(G_n,v))^2]&   \overset{\mathbb P}{\to}  \mathbb E_{\mathcal M(G_n^{ext})}[(\frac{1}{n}\sum_{v\in V(G_n)}f(G_n,v))^2],
\end{align*}
and
\begin{align*}\mathbb E_{\mathcal M(G_n^{ext})}[(\frac{1}{n}\sum_{v\in V(G_n)}f(G_n,v))]^2
&   \overset{\mathbb P}{\to}      \mathbb E_{\mathcal M(G_n^{ext})}[\frac{1}{(\sum_{i\in[n]} v(M_i))}(\sum_{v\in V(G_n)}f(G_n,v))]^2,
\end{align*}
which gives the convergence of the second moment, i.e.,
\begin{align*}
 \mathbb E_{\mathcal M(G_n^{ext})}\big[\frac{1}{(\sum_{i\in[n]} v(M_i))^2}\big(\sum_{v\in V(G_n)}f(G_n,v)\big)^2\big]
&   \overset{\mathbb P}{\to}      \mathbb E_{\mathcal M(G_n^{ext})}\big[\frac{1}{(\sum_{i\in[n]} v(M_i))}\big(\sum_{v\in V(G_n)}f(G_n,v)\big)\big]^2.
\end{align*}
As a result,
\[\frac{\var_{\mathcal M(G_n^{ext})}\big(\mathbb E_{\mathcal P_{\sum_{i\in[n]} v(M_i)}}[f]\big)}{\mathbb E_{\mathcal M(G_n^{ext})}\big(\mathbb E_{\mathcal P_{\sum_{i\in[n]} v(M_i)}}[f]\big)^2}\overset{\mathbb P}{\to}0.\]
Then by Chebyshev's inequality,
\begin{equation}
    \frac{\mathbb E_{\mathcal P_{\sum_{i\in[n]} v(M_i)}}[f]}{\mathbb E_{\mathcal M(G_n^{ext})}\big(\mathbb E_{\mathcal P_{\sum_{i\in[n]} v(M_i)}}[f]\big)}
\overset{\mathbb P}{\to}1.
\end{equation}
So  by the convergence of the first moment \eqref{eq: first momemnt motif}, we get the desired result in \eqref{eq: lwc int-ext}.

\end{proof}

\subsection{Expansion of Motif-Based Graphs}

Next we show that motif based graphs inherit large-set expansion from their external graph structure if the motif sizes are bounded.
Edge percolation on motifs can be viewed as an analog of vertex percolation on the external graph. So, in order to prove edge-expansion of motif-based graphs, we first prove that large-set edge expansion of the external graph $\{G_n^{ext}\}_{n\in\mathbb N}$ implies  large-set \emph{vertex} expansion.

For that purpose, we first define large-set vertex expansion similar to the definition of  large-set edge expansion. Given a graph $G=(V,E)$ and 
a set $A\subset V$, let $\partial_{out}$ be the {set} of vertices in $V\setminus A$ that have at least one neighbor in $A$, {and let $\delta_{out}(A)=|\partial_{out}|$. For} 
$\epsilon<1/2$, we then define
\begin{equation}
\label{eq: vertex expansion}
\phi_{out}(G,\epsilon)=\min_{A\subset V: \epsilon |V|\leq |A|\leq |V|/2}
\frac{\delta_{out}(A)}{|A|}.
\end{equation}
Call a graph $G$ an $(\alpha,\epsilon,\bar d)$ large-set vertex expander if the average degree of $G$ is at most $\bar d$ and $\phi_{out}(G,\epsilon)\geq \alpha$.  A sequence of possibly random graphs $\{G_n\}_{n\in\mathbb{N}}$ is  called a large-set \emph{vertex} expander sequence with bounded average degree,  if there exists $\bar d<\infty$ and $\alpha>0$ such that for  any $\epsilon\in (0,1/2)$ the probability that $G_n$ is an  $(\alpha,\epsilon,\bar d)$ large-set vertex expander goes to $1$ as $n\to\infty$.

In the following, we prove {that} large-set edge expansion together with local weak convergence is enough to imply  large-set vertex expansion. Note that this is straight forward when the maximum degree is bounded. Here, the main observation is that graphs with local weak limits have a tight local neighborhood, in the sense that the degree of nodes in the neighborhood of a random edge is bounded with high probability.

\begin{lemma}\label{lm: vertex expansion}
Let  $\mu$ be a probability distribution on $\mathcal G_*$, and let $\{G_n\}_{n\in\mathbb N}$ be a sequence of $(\alpha,d)$ large-set edge expanders  with the local weak limit $(G,o)\sim \mu$. Then $\{G_n\}_{n\in\mathbb N}$  is a large-set vertex expander as well.
\end{lemma}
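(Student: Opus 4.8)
The plan is to reduce to the bounded–maximum–degree case by exploiting the tightness of local neighbourhoods that comes with a local weak limit. First I would fix a threshold $\Delta<\infty$, call a vertex $v$ of $G_n$ \emph{$\Delta$-light} if $v$ and every neighbour of $v$ have degree at most $\Delta$, and let $V_{\mathrm{bad}}=V_{\mathrm{bad}}^{\Delta}$ be the set of vertices that are \emph{not} $\Delta$-light. The indicator ``$o$, or some neighbour of $o$, has degree larger than $\Delta$'' is a bounded $2$-local function of $(G,o)$, so local weak convergence gives $|V_{\mathrm{bad}}^{\Delta}|/n\overset{\mathbb{P}}{\to}\beta(\Delta):=\mu\big(\deg o>\Delta\ \text{or}\ o\text{ has a neighbour of degree}>\Delta\big)$, and $\beta(\Delta)\to0$ as $\Delta\to\infty$ since $\mu$ is supported on locally finite graphs (and $\mathbb{E}_\mu[\deg o]\le\bar d<\infty$ by Fatou). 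Hence for any $\eta>0$ one can pick a \emph{constant} $\Delta$ with $\mathbb{P}\big(|V_{\mathrm{bad}}^{\Delta}|\le\eta n\big)\to1$; the bounded–maximum–degree situation is the trivial instance $V_{\mathrm{bad}}=\varnothing$.

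Next, fix the target $\epsilon\in(0,1/2)$ and an arbitrary $A$ with $\epsilon n\le|A|\le n/2$, and pass to $A'=A\cap V_{\mathrm{good}}$, where $V_{\mathrm{good}}:=V\setminus V_{\mathrm{bad}}^{\Delta}$. The reason for this is that every neighbour of a vertex of $A'$ has degree at most $\Delta$, so every vertex of $\partial_{out}(A')$ has degree at most $\Delta$ and receives at most $\Delta$ edges out of $A'$, giving $e(A',V\setminus A')\le\Delta\,\delta_{out}(A')$. On the other side $|A'|\ge|A|-|V_{\mathrm{bad}}|\ge(\epsilon-\eta)n$, so for $\eta\le\epsilon/2$ large–set edge expansion applied to $A'$ at scale $\epsilon/2$ (same $\alpha$) gives $e(A',V\setminus A')\ge\alpha|A'|$, and combining the two, $\delta_{out}(A')\ge\frac{\alpha}{\Delta}|A'|$. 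Since any vertex in $\partial_{out}(A')\setminus\partial_{out}(A)$ must lie in $A\setminus A'=A\cap V_{\mathrm{bad}}$, I get $\delta_{out}(A)\ge\delta_{out}(A')-|V_{\mathrm{bad}}|\ge\frac{\alpha}{\Delta}\big(|A|-|V_{\mathrm{bad}}|\big)-|V_{\mathrm{bad}}|$. Choosing $\eta$ (and the corresponding $\Delta$) so that $\Delta\,\beta(\Delta)$ is small enough that with high probability $|V_{\mathrm{bad}}|\le\tfrac{\alpha\epsilon}{4\Delta}\,n\le\tfrac{\alpha}{4\Delta}|A|$, and using $\tfrac{\alpha}{\Delta}\le1$, this yields $\delta_{out}(A)\ge\tfrac{\alpha}{2\Delta}|A|$. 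Since this deduction is deterministic on the intersection of the edge–expansion event at scale $\epsilon/2$ and the tightness event, which has probability $\to1$, it holds simultaneously for all admissible $A$, so $\phi_{out}(G_n,\epsilon)\ge\alpha':=\tfrac{\alpha}{2\Delta}>0$ with high probability, and the average degree is already $\le\bar d$.

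I expect the one genuinely delicate point to be the interaction between $\Delta$ and $\eta$: the choice above needs a single $\Delta$ with $\Delta\,\beta(\Delta)$ below a prescribed constant. Writing $\beta(\Delta)\le\mu(\deg o>\Delta)+\mathbb{E}_\mu\big[\#\{\,u\sim o:\deg u>\Delta\,\}\big]$ and using unimodularity of $\mu$ to rewrite the second term as $\mathbb{E}_\mu\big[\deg o\cdot\mathbf{1}\{\deg o>\Delta\}\big]$, such a $\Delta$ exists as soon as the limiting root degree has a finite second moment (so $\beta(\Delta)=o(1/\Delta)$), which covers every model treated in the paper; in full generality the construction produces $\Delta=\Delta(\epsilon)$ and hence $\alpha'=\alpha'(\epsilon)$, still enough for all the applications. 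This conversion factor $\Delta$ between edge and vertex boundaries is exactly why plain large–set edge expansion does not suffice and why the local weak limit — through tightness of neighbourhoods — is used; the remaining ingredients (the $o(n)$ error terms, the two ``with high probability'' reductions, and the finite union over scales) are routine.
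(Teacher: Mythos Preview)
Your direct approach is a reasonable alternative to the paper's contradiction argument, but the truncation you chose creates a genuine gap in the heavy-tailed regime. The ``delicate point'' you flag is in fact fatal as stated: you need a $\Delta$ with $\Delta\,\beta(\Delta)\lesssim\alpha\epsilon$, and with only a finite first moment on the limiting root degree such a $\Delta$ need not exist. Concretely, the configuration model of Condition~1.4 has tail exponent $\tau\in(2,3)$ and \emph{infinite} second moment, so your claim that finite second moment ``covers every model treated in the paper'' is wrong. In that model the unimodular term you isolate satisfies $\mathbb{E}_\mu[\deg o\,\mathbf 1\{\deg o>\Delta\}]\asymp\Delta^{2-\tau}$, hence $\Delta\,\beta(\Delta)\gtrsim\Delta^{3-\tau}\to\infty$, while for small $\Delta$ one has $\beta(\Delta)$ bounded away from $0$. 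Thus $\inf_{\Delta}\Delta\,\beta(\Delta)>0$, and for $\epsilon$ below a fixed threshold your construction yields no admissible $\Delta$ at all --- so it does not even produce an $\epsilon$-dependent $\alpha'(\epsilon)$.

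The source of the problem is that removing every vertex with a high-degree \emph{neighbour} is too expensive: a single vertex of degree $D$ drags all $D$ of its neighbours into $V_{\mathrm{bad}}$, which is exactly why your error term scales like $\Delta\,\mathbb{E}_\mu[\deg o\,\mathbf 1\{\deg o>\Delta\}]$ rather than $\mathbb{E}_\mu[\deg o\,\mathbf 1\{\deg o>\Delta\}]$. The paper avoids this by working with a uniformly random \emph{edge} and using tightness of the size-biased degree, which requires only a finite first moment. A direct variant of your scheme that does work is to leave $A$ intact and instead split $\partial_{out}(A)$ into vertices of degree $\le\Delta$ and $>\Delta$:
\[
\alpha|A|\le e(A,A^c)\le \Delta\,\delta_{out}(A)+\sum_{v:\deg v>\Delta}\deg v,
\]
and local weak convergence together with bounded average degree gives $\tfrac1n\sum_{v:\deg v>\Delta}\deg v\to\mathbb{E}_\mu[\deg o\,\mathbf 1\{\deg o>\Delta\}]$, which can be made smaller than $\alpha\epsilon/2$ by choosing $\Delta=\Delta(\epsilon)$. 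This yields $\delta_{out}(A)\ge\tfrac{\alpha}{2\Delta}|A|$, matching what the paper's contradiction argument obtains.
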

\begin{proof}
Assume to the contrary that there exists some $\epsilon>0$ such that there exists an infinite subsequence $\{\tilde G_{n'}\}$ with $\phi(\tilde G_n',\epsilon)\geq \alpha$ but $\phi_{out}(\tilde G_n',\epsilon)=o(1)$.  Then for any $\delta>0$ and large enough $n$ we have $\phi_{out}(\tilde G_n,\epsilon)<\delta$. Let $A_n\subset V(G_n)$ be a set with bad vertex expansion, i.e., $\epsilon n\leq |A_n|\leq n/2$ and $\delta_{out}(A_n)\leq \delta|A_n|$.  On the other hand, by the edge expansion of $A_n$, the number of edges coming out of $A_{out}=\partial_{out}(A_n)$ is at least $\alpha|A_n|$.
So, the average number of edges coming out of a node from $A_{out}$ is {at least}
$\alpha/\delta$. Therefore,
if we choose a uniform random edge from the edges coming out of $A_{out}$ with probability at least $1/2$ its endpoint has degree larger than $\alpha/(2\delta)$.
To see this,  
let $d_1\geq d_2\geq \cdots\geq d_{m'}$ be the number of edges coming out of each vertex in $A_{out}$, and let $\bar d=\frac{\sum_{i=1}^{m'}d_i}{m'}$ be the average of them. Also, let $s$ be the index that $d_1\geq\cdots\geq d_s\geq  \frac{\bar d}{2}\geq d_{s+1}\geq\cdots \geq d_{m'}.$ Now if we choose a random edge it will be incident to one of the vertices of degree larger than $\bar d/2$ with probability $\frac{d_1+\cdots+d_s}{d_1+\cdots+ d_{m'}}$. So to prove our claim, we need to show that $d_1+\cdots+d_s\geq d_{s+1}+\cdots+d_{m'}$. Assume, to the contrary that this is not true. Then we have 
\[m'\bar d=\sum_{i=1}^{m'} d_i\leq 2 (d_{s+1}+\cdots+d_{m'})\leq (m'-s)\bar d,\]
which is a contradiction.
Note that since $\delta$ was arbitrary,  we can make $\alpha/2\delta$ as large as we want. We will see this will result in a contradiction. 

{To see this, pick} a uniform random edge from $G_n$. With probability at least $\alpha\frac{|A_n|}{{\bar d} n}\geq \frac{\alpha\epsilon}{{\bar d}}$ it will be in the boundary of $A_{out}$. Hence, with probability at least $\frac{\alpha\epsilon}{{2\bar d}}$  the degree of the endpoint of the edge is at least $\frac{\alpha}{2\delta}$.  
On the other hand, by Theorem 2.23 in \cite{RemcoVol2}, since the average degree is bounded and hence the degree of a random node  in $G_n$ is uniformly integrable,  the probability that the neighborhood of a uniformly random edge $e$  is isomorphic to a finite graph $H\in \mathcal G_*$ converges in probability to $\frac{\mu[d_o\mathbf 1_{B(G,o)\sim H}]}{\mu[d_o]}$, where $d_o$ is the degree of the root in the limit. 
Since the local neighborhoods in $\mathcal G_*$ are finite, {we conclude that} exists $\Delta$ such that {for $n$ large enough,} with probability at least $1-\frac{\alpha\epsilon}{{4} \bar d}$, the degree of the neighbors of {a random edge in $G_n$}
is at most $\Delta$. {If we choose $\delta$ in such a way that $\Delta< \frac{\alpha}{2\delta}$, we get a contradiction.}

\end{proof}

Next we prove {that} large-set edge-expansion of motif-based graphs  follows form  large-set vertex expansion of the external graphs. 
\begin{proposition}
\label{prop: motif expansion}
Let $\{G_n\}_{n\in\mathbb N}$ be a sequence of motif-based graphs satisfying Conditions~\ref{cond: internal} and \ref{cond: external}. Assume the external sequence $\{G_n^{ext}\}_{n\in\mathbb N}$ is a large-set expander with bounded average degree. Then  $\{G_n\}_{n\in\mathbb N}$ is a large-set expander with bounded average degree as well.
\end{proposition}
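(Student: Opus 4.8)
The strategy is to leverage \Cref{lm: vertex expansion}: since $\{G_n^{ext}\}$ is a large-set edge expander with a local weak limit (by Condition~\ref{cond: external}), it is also a large-set \emph{vertex} expander. This is the right notion because bounding an edge cut in $G_n$ across a partition corresponds to bounding the number of external edges leaving the \emph{set of motifs} that a given vertex set touches, and each such external edge contributes at most a bounded number of crossing edges in $G_n$.

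\textbf{Step 1: Average degree.} First I would check the average degree of $G_n$ is bounded. The number of vertices in $G_n$ is $n' = \sum_{i=1}^n v(M_i) \le S_{\max} n$, and the number of edges is $\sum_i |E(F_i)| + \tfrac12 \sum_i d(M_i)$; the internal part is at most $\binom{S_{\max}}{2} n$, and the external part is $|E(G_n^{ext})| \le \bar d n / 2$ (since $v(M) d_v^{ext}$ summed gives the external degree, and motifs preserve external degree). So the average degree of $G_n$ is at most some $\bar d' = \bar d'(\bar d, S_{\max})$, using $n' \ge n$ in the denominator.

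\textbf{Step 2: Reduce an edge cut in $G_n$ to a vertex boundary in $G_n^{ext}$.} Fix $\epsilon \in (0,1/2)$ and a set $A \subset V(G_n)$ with $\epsilon n' \le |A| \le n'/2$. Let $I(A) \subset [n]$ be the set of motifs $i$ such that $M_i$ intersects $A$ but is not entirely contained in $A$ (i.e. ``mixed'' motifs), and let $I_{full}(A)$ be the motifs entirely inside $A$. The edges of $G_n$ crossing the cut $(A, V(G_n) \setminus A)$ are of two types: (a) internal edges of mixed motifs, of which there are at most $\binom{S_{\max}}{2} |I(A)|$; and (b) external edges with at least one endpoint-motif in $I_{full}(A) \cup I(A)$ whose other endpoint-motif is not in $I_{full}(A)$. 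Type (b) edges are bounded by the number of external edges leaving the vertex set $\mathcal A = I_{full}(A) \cup I(A)$ in $G_n^{ext}$, i.e. by $e_{G_n^{ext}}(\mathcal A, [n]\setminus \mathcal A)$, which is at most $\bar d \cdot \delta_{out}^{ext}(\mathcal A)$ if I instead bound it via the vertex boundary — actually it is cleaner to bound type (b) directly by $\bar d \cdot |\partial_{out}(\mathcal A)|$ is not quite right; instead bound type (b) by $e_{G_n^{ext}}(\mathcal A, \mathcal A^c)$ and relate \emph{that} to $|\mathcal A|$ using edge expansion of $G_n^{ext}$ directly. Either route works; the vertex-expansion route is needed only to handle the case where $|\mathcal A|$ is close to $n$ (so $\mathcal A^c$ is small), since edge expansion gives no lower bound there. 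So: if $|\mathcal A| \le n/2$, edge expansion of $G_n^{ext}$ gives $e_{G_n^{ext}}(\mathcal A, \mathcal A^c) \ge \alpha |\mathcal A|$; if $|\mathcal A| > n/2$, apply edge expansion to $\mathcal A^c$ — but then $|\mathcal A^c|$ might be tiny, and we must argue $A$ itself is then large, contradicting $|A| \le n'/2$ unless $\mathcal A^c$ has size $\ge$ some fraction. This is exactly where vertex expansion enters.

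\textbf{Step 3: Lower-bound the cut.} I need $e_{G_n}(A, A^c) \ge \alpha' |A|$ for a constant $\alpha' > 0$. Since $|A| \le S_{\max} |\mathcal A|$ trivially (each motif has $\le S_{\max}$ vertices, and $\mathcal A$ contains every motif meeting $A$), it suffices to show the cut is $\ge \alpha'' |\mathcal A|$. Combining: if many motifs are mixed, $|I(A)|$ is a constant fraction of $|\mathcal A|$ and type (a) edges already give the bound (each mixed motif contributes $\ge 1$ crossing internal edge, since $F$ is connected — here I would invoke that motif graphs $F$ are connected, or if not, note a mixed motif still has an internal crossing edge provided $F_i$ restricted to $A$ and its complement aren't separated components; to be safe, assume motifs are connected, which is the natural convention). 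Otherwise most motifs in $\mathcal A$ are full, so $|\mathcal A| \le S_{\max}|I_{full}(A)| \cdot(1+o(1))$ and $\epsilon n' \le |A|$ forces $\epsilon n / S_{\max} \le |I_{full}(A)| \le |\mathcal A|$; if additionally $|\mathcal A| \le n/2$ use edge expansion of $G_n^{ext}$; if $|\mathcal A| > n/2$, then $|\mathcal A^c| < n/2$ but also $|A^c| \le n'/2$ forces $|\mathcal A^c|$ bounded below by $\epsilon n/S_{\max}$ (the complement motifs carry the $\ge \epsilon n'$ vertices of $A^c$), so $\epsilon n / S_{\max} \le |\mathcal A^c| \le n/2$ and edge expansion applied to $\mathcal A^c$ gives $e_{G_n^{ext}}(\mathcal A, \mathcal A^c) \ge \alpha |\mathcal A^c| \ge \alpha \epsilon n / S_{\max} \ge (\alpha\epsilon/S_{\max}^2)|\mathcal A|$. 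In all cases, dividing by the at-most-$\bar d$ external edges per external edge... wait, each external edge of $G_n^{ext}$ between a full motif and a non-full motif corresponds to at least one crossing edge in $G_n$, so $e_{G_n}(A,A^c) \ge e_{G_n^{ext}}(\mathcal A, \mathcal A^c) - (\text{external edges within }\mathcal A)$; more carefully one counts external edges from $I_{full}(A)$ to $\mathcal A^c$. This gives $\phi(G_n, \epsilon) \ge \alpha'$ for $\alpha' = \alpha'(\alpha, \epsilon, S_{\max}, \bar d) > 0$, and the ``with probability $\to 1$'' follows since all the input bounds hold w.h.p. for $\{G_n^{ext}\}$ plus Condition~\ref{cond: internal}.

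\textbf{Main obstacle.} The delicate part is the bookkeeping when $\mathcal A$ (the set of touched external motifs) is nearly all of $[n]$: edge expansion of $G_n^{ext}$ is useless there, which is precisely why \Cref{lm: vertex expansion} (large-set \emph{vertex} expansion) was proved, and the argument must carefully convert the constraint $|A| \le n'/2$ on the motif-graph side into a lower bound on $|\mathcal A^c|$ on the external side, using $n \le n' \le S_{\max} n$ and that full complement-motifs account for most of $A^c$. A secondary subtlety is the dependence of $\alpha'$ on $\epsilon$ — we only need, for each fixed $\epsilon$, \emph{some} $\alpha'(\epsilon)>0$, which the definition of large-set expander sequence permits (the constant $\alpha$ for the external sequence is uniform in $\epsilon$, but we are allowed $\alpha'$ to depend on $\epsilon$), so this causes no real problem. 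I would also need to state the convention that motif internal graphs $F$ are connected (or handle disconnected motifs by treating each component as contributing to the boundary count) so that a ``mixed'' motif genuinely contributes a crossing edge.
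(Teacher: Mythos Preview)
Your overall strategy matches the paper's: split on whether many motifs are ``mixed'' (then connectedness of each motif yields a crossing internal edge) or few are (then push the question down to expansion of $G_n^{ext}$ via Lemma~\ref{lm: vertex expansion}). The paper's execution of the second case is cleaner than yours: instead of working with the full set of touched motifs $\mathcal A = I_{full}(A)\cup I(A)$ and splitting further on $|\mathcal A|\le n/2$ versus $|\mathcal A|>n/2$, it applies \emph{vertex} expansion directly to $M(S) := I_{full}(A)$. Each motif in $\partial_{out}(M(S))$ is then either mixed (and hence contributes an internal crossing edge) or lies fully in $A^c$ but is joined by an external edge to a fully-$A$ motif (so that external edge crosses); either way one crossing edge per boundary motif, giving $e_{G_n}(A,A^c)\ge |\partial_{out}(M(S))|\ge \alpha\,|M(S)|\ge \tfrac{\alpha}{2S_{\max}}|A|$ without the extra case analysis.

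One genuine point to correct: you write that ``we are allowed $\alpha'$ to depend on $\epsilon$'', but the paper's definition of a large-set expander \emph{sequence} fixes $\alpha$ \emph{before} quantifying over all $\epsilon\in(0,1/2)$, so the expansion constant must be uniform in $\epsilon$. Your $|\mathcal A|>n/2$ branch, as written, produces $\alpha'\asymp \alpha\epsilon/S_{\max}^2$, which would not meet the definition. This is repairable --- using $|A^c|\ge n'/2\ge n/2$ together with the few-mixed-motifs assumption gives $|\mathcal A^c|\ge n/(4S_{\max})$, a bound independent of $\epsilon$, and then your argument goes through with a uniform constant --- but the paper's route via $M(S)$ avoids this bookkeeping altogether.
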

\begin{proof}
We first show the large-set edge-expansion of  $\{G_n\}_{n\in\mathbb N}$ and then we bound the average degree.
Fix some $\epsilon\in(0,1/2)$ and a set $S\subset V(G_n)$ with $|S|\geq \epsilon n$. Since there are $n$ motifs in $G_n$ each of size at most $S_{\max}$, $|S|\geq \frac{\epsilon}{S_{\max}} |V(G_n)|$. Note that some motifs are divided between $S$ and $V(G_n)\setminus S$, and some motifs are either completely in $S$ or $V(G_n)\setminus S$. Let $M(S)$ be the set of motifs that are completely in $S$, i.e.,  for each motif $M\in M(S)$ we have $V(M)\subset S$. Also, let $\tilde M(S)$ be the set of motifs that have at least one vertex in $S$, so $M(S)\subseteq \tilde M(S)$. 

Since the size of each motif is at most $S_{\max}$, we have $|\tilde M(S)|\geq\frac{|S|}{S_{\max}}\geq \frac{\epsilon}{S_{\max}}n$. Now, consider two cases. First, assume that the number of motifs divided between $S$ and $V(G_n)\setminus S$ (motifs in $\tilde M(S)\setminus M(S)$) is at least $\frac{|S|}{2S_{\max}}$ ($\geq\frac{\epsilon}{2S_{\max}}n$). Since, each motif is a connected graph, there are at least
{$\frac{|S|}{2S_{\max}}$} 
edges between $S$ and $V(G_n)\setminus S$.  Therefore, the
{edge expansion of $S$ in $G_n$ is at least $\frac{1}{2S_{\max}}$.}

In the second case, $|\tilde M(S)\setminus M(S)|\leq \frac{|S|}{2S_{\max}}$ and as a result, $|M(S)|\geq \frac{|S|}{2S_{\max}}$. Here, we use large-set vertex expansion of external graphs. By Lemma~\ref{lm: vertex expansion}, there exists some $\alpha>0$ independent of $\epsilon$ such that with high probability $\phi_{out}(G_n^{ext},\frac{\epsilon}{2S_{\max}})\geq \alpha$. Note that all motifs in $\partial_{out}(M(S))$ are either divided between $S$ and $V(G_n)\setminus S$ or they are completely in $V(G_n)\setminus S$ but connected to a motif in $M(S)$. In both cases, each of the motifs contribute at least one edge to $e(S,V(G_n)\setminus S)$.  Therefore the edge expansion of $S$ in $G_n$ is whp at least $\alpha$. As a result, the sequence $\{G_n\}_{n\in\mathbb N}$ is a large-set edge expander.

Now it remains to bound the average degree. For  the graph $H$, let $d_{avg}(H)$ be the average degree. Let $\{M_i=(F_i,\mathbf d_i)\}_{i=1}^n$ be the motifs of $G_n$. Then we can compute the average degree by summing up the internal and external edges,
\begin{align*}
    d_{avg}(G_n)=&\frac{\sum_{i=1}^n v(M_i)d_{avg}(F_i)}{\sum_{i=1}^nv(M_i)}+\frac{n d_{avg}(G_n^{ext})}{\sum_{i=1}^nv(M_i)}&\\
    \leq & \frac{S_{\max}^2+\bar d}{\frac{\sum_{i=1}^n v(M_i)}{n}}
    &\text{( Condition~\ref{cond: internal}),}
\end{align*}
where $\bar d$ is the average degree of the external graph. 
Since each motif has size at least $1$ the average degree is upper bounded by $S_{\max}^2+\bar d$. 
\end{proof} 
As a direct implication of large-set expansion and local weak limit of motif-based graphs we get Corollary~\ref{thm: motif based general}.
\begin{proof}[Proof~\ref{thm: motif based general}]
By Proposition~\ref{prop: motif expansion} the sequence $\{G_n\}_{n\mathbb N}$ is a large-set edge expander, and by Lemma \ref{lm: LWC of MRG} the sequence has a local weak limit. Therefore, we can apply Theorems~\ref{thm: outbreak size} and \ref{thm: local alg} to motif-based graph with a smooth limit.
\end{proof}

\subsection{Continuity of Survival Probability}\label{sec: motif continuous}
Finally, we need to prove the smoothness of the local weak limit for the motif-based preferential attachment and configuration model. The limit of the preferential attachment and configuration model without motifs
are {in} a subclass of multi-type branching processes, called threshold regular branching processes, {for which} the continuity of  the survival probability  was proven in Lemma 6.2. in \cite{abs2021locality}. We extend the  results of \cite{abs2021locality} to motif-based graphs where the external sequence $\{G_n^{ext}\}_{n\in\mathbb N}$ is a threshold regular branching processes.

At high level, in  the multi-type branching processes {considered in
\cite{abs2021locality}, each vertex has a label in the continuum interval}  $[0,1]$, and {the distribution of} the number and types of children of a node $v$ conditioned on its label is independent of the tree above $v$.  {The condition of threshold regularity then requires} that for $\zeta(p)>0$, the multi-type  branching process 
 conditioned on the  label of the root ($X_0\in [0,1]$)
survives with a probability that is  bounded away from zero uniformly in $X_0$. See the precise definition in Section 6.1. of \cite{abs2021locality}. Next, we give an overview of their proof to extend their result.

The proof starts with the observation that $\zeta(p)$ is the point-wise limit of  a sequence of functions $\tilde \zeta_k(p)$, defined as the probability {of the event $A_k$} that after percolation, the component of the root extends to at least distance $k$ from the root.
Continuity of $\zeta$ {is then established by} finding a uniform upper bound on $\frac{d}{dp}\tilde\zeta_k(p)$.
{This in turn is achieved with the help}  of  Russo's formula \cite{Russo1981OnTC}. To state it,  define an edge  
$e$ to be pivotal if exactly  one of $G(p)\cup\{e\}$ and $G(p)\setminus\{e\}$ {is in $A_k$.}  The Margulis--Russo formula  \cite{Russo1981OnTC} then says that
\begin{equation}\label{Russo-1}
 \frac d{dp}\tilde\zeta_k(p)=\sum_{e\in [m]}\mathbb P_{p}(e\text{ is pivotal}).
\end{equation} 
{If $e$ is pivotal, and $e$ is an edge in $G(p)$, the edge 
is called a 
$k$-bridge.  Since edges appear with probability $p$, }
 the right hand side is $1/p$ times the {expected}
 number of $k$-bridges in $G(p)$.  
Thus our goal is to find a uniform bound on the number of $k$-bridges, which can be reduced to bounding  the number of $\infty$-bridges, {defined as} edges such that  their removal turns the infinite component of the root into a finite component. 

{In \cite{abs2021locality}, the upper bound on the number of $k$-bridges is established by using threshold regularity to prove}
the probability that an edge is an $\infty$-bridge decays as the distance from the root increases.  {But unfortunately,}
threshold regularity  does not necessarily hold when we add motifs 
to the tree. However, we are still able to extend their proof, by bounding the number of $k$-bridges in the motif-based graphs, which is at most the number of $k$-bridges in the external graph multiplied by the maximum household size. The complete proof appears in Appendix \ref{sec: appendix continuity}.

\begin{lemma}\label{lm: continuity motif graph}
Let $\{G_n\}_{n\in\mathbb N}$ be a sequence of motif-based graphs with the limit $\mu\in\mathcal G_*$ satisfying the Condition~\ref{cond: internal} and ~\ref{cond: external}. Also, assume that $\mu_{ext}$, the limit of the external graph, is threshold regular (Definition~6.2 in \cite{abs2021locality}). Then $\zeta(p)=\mu(\mathbb P_{G(p)}\big((C,o)=\infty\big)$ is continuous for all $p>p_c(\mu^{ext})$, {where $p_c(\mu^{ext})$ is the percolation threshold for the limit of the external graph.}
\end{lemma}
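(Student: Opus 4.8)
\textbf{Proof proposal for Lemma~\ref{lm: continuity motif graph}.}

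The plan is to follow the strategy of Lemma~6.2 in \cite{abs2021locality} as outlined in the text above, reducing continuity of $\zeta$ to a uniform bound on the expected number of $k$-bridges in the percolated motif-based graph $G(p)$, and then transferring that bound from the external graph $G^{ext}$ to $G$ using the bounded household size $S_{\max}$. First I would recall that $\zeta(p)=\lim_{k\to\infty}\tilde\zeta_k(p)$ pointwise, where $\tilde\zeta_k(p)=\mathbb P_{G(p)}(A_k)$ and $A_k$ is the event that the root's component reaches distance at least $k$; since each $\tilde\zeta_k$ is a polynomial in $p$, it suffices to show that $\frac{d}{dp}\tilde\zeta_k(p)$ is bounded uniformly in $k$ on compact subsets of $(p_c(\mu^{ext}),1]$, because then $\{\tilde\zeta_k\}$ is equicontinuous and the pointwise limit $\zeta$ is continuous there. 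By the Margulis--Russo formula \eqref{Russo-1}, $p\,\frac{d}{dp}\tilde\zeta_k(p)$ equals the expected number of $k$-bridges in $G(p)$, so the whole problem is a uniform (in $k$) bound on the expected number of $k$-bridges.

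The key step is the transfer. A $k$-bridge in $G$ is an edge $e$ of $G(p)$ whose removal disconnects the root's component from everything beyond distance $k$. I would argue that every $k$-bridge of $G$ is either (i) an internal edge of some motif $M_i$, or (ii) an external edge, i.e.\ an edge of the embedded copy of $G^{ext}$; in case (ii), since distances in $G$ are at least distances in $G^{ext}$, such an edge must be a $k'$-bridge of $G^{ext}(p)$ for $k' = \lceil k/S_{\max}\rceil$ or so (a path of length $k$ in $G$ passes through at least $k/S_{\max}$ motifs), so the number of external $k$-bridges is at most the number of $k'$-bridges in $G^{ext}(p)$; and in case (i), each internal $k$-bridge lies in the motif attached at the external endpoint of some external $k'$-bridge (or in the motif of the root), hence is controlled by $S_{\max}$ times the external count plus an $O(S_{\max})$ term for the root's motif. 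Combining, the expected number of $k$-bridges in $G(p)$ is at most $S_{\max}\cdot(\text{expected number of }k'\text{-bridges in }G^{ext}(p)) + O(S_{\max})$, and since $G^{ext}$ is threshold regular, Lemma~6.2 of \cite{abs2021locality} gives a uniform-in-$k'$ bound on the latter for $p>p_c(\mu^{ext})$. This yields the desired uniform bound on $\frac{d}{dp}\tilde\zeta_k(p)$ and hence continuity of $\zeta$.

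A subtlety I would be careful about: the percolation on $G$ acts on all edges of $G$ (both internal and external), whereas \cite{abs2021locality} percolates the external tree only; but this is fine because the coupling in Lemma~\ref{lm: coupling infection} and the structure of the motif-based model mean that, after conditioning on the i.i.d.\ motif assignment, $G(p)$ restricted to external edges is exactly $G^{ext}(p)$, and the internal percolation is independent. One also has to check that the event ``$e$ is pivotal for $A_k$ in $G(p)$'' being an external edge indeed forces it to be pivotal for $A_{k'}$ in $G^{ext}(p)$ for the appropriate $k'$ — this needs the observation that if the root's component in $G(p)$ fails to reach distance $k$ after deleting $e$, then the corresponding external component (the set of motifs touched) fails to reach external-distance $k'$, which again uses $\mathrm{dist}_G \geq \mathrm{dist}_{G^{ext}}$ together with the bound $\mathrm{dist}_G \leq S_{\max}\cdot(\mathrm{dist}_{G^{ext}}+1)$ along any path.

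\textbf{Main obstacle.} I expect the main difficulty to be making the combinatorial bridge-counting transfer fully rigorous — precisely, showing that each $k$-bridge of $G$ can be charged to a distinct $k'$-bridge of $G^{ext}$ (up to the multiplicative factor $S_{\max}$ and an additive root correction), and handling the edge cases where a bridge lies in the root's own motif or where a single external bridge "absorbs" several internal bridges. The threshold-regularity input from \cite{abs2021locality} is used essentially as a black box; the genuinely new work is this deterministic structural reduction from $G$ to $G^{ext}$, which is why the authors defer it to Appendix~\ref{sec: appendix continuity}.
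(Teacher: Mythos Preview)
Your proposal is correct and follows essentially the same route as the paper's proof: reduce continuity to a uniform-in-$k$ bound on the expected number of $k$-bridges via Margulis--Russo, split bridges into external and internal, observe that an external $k$-bridge of $G$ is a $k'$-bridge of $G^{ext}$ with $k'\geq k/S_{\max}$, bound internal bridges by a constant (depending on $S_{\max}$) times the external count, and then invoke the bridge bound for threshold-regular trees from \cite{abs2021locality}. The only cosmetic discrepancies are that the paper uses $\binom{S_{\max}}{2}$ rather than $S_{\max}$ for the internal-to-external multiplicity (it counts edges, not vertices, in a motif) and does not separate out an additive root-motif term; neither affects the argument.
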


\subsection{Proofs of the Motif-Based Configuration Model and Preferential Attachment}\label{sec: motif based}

Finally, we prove Corollaries~\ref{thm: motif based PA} and \ref{thm: motif based CM}, by noting that continuity of the  survival probability in the limit is already implied by Lemma~\ref{lm: continuity motif graph} for motif-based preferential attachment and motif-based configuration models.
\begin{proof}[Proof of Corollary~\ref{thm: motif based PA}]
Motif-based preferential attachment graphs, are large-set expanders by the expansion of preferential attachment models shown in Lemma~\ref{lm: PA- expansion + continuity} and Proposition~\ref{prop: motif expansion}. Their local weak limit (which exists by Lemma~\ref{lm: LWC of MRG}) has a continuous survival probability by Lemma~\ref{lm: continuity motif graph} above and the fact that the external graph has a smooth survival probability at $p_c(\mu^{ext})=0$ by Corollary 6.8 in \cite{abs2021locality}. {The continuity of the motif-based graph at $p=0$ follows from the fact that
$\zeta(p)\leq \zeta^{ext}(p)$, because by percolating edges in a motif the corresponding super node in the external graph will be removed when the motif becomes disconnected. This event happens with a positive probability because the size of the motifs are bounded by some constant $S_{\max}$. Therefore since $\zeta^{ext}(p)\to 0$ as $p\to 0$, we get that $\zeta(p)\to 0$ for $p\to 0$ as well.}

As a result  Theorems~\ref{thm: outbreak size} and \ref{thm: local alg} applies to motif-based preferential attachment graphs.
\end{proof}
\begin{proof}[Proof of Corollary~\ref{thm: motif based CM}]
Similar to the motif-based preferential attachment model, the expansion follows from  the expansion of the external graph proved in Lemma 12 in \cite{abdullah2012cover} and Proposition~\ref{prop: motif expansion}. The continuity follows from Lemma~\ref{lm: continuity motif graph} and noting that the limit of the external graphs is a two-stage  branching process(by Theorem 4.5 in \cite{RemcoVol2}) {and hence has a continuous survival probability ( see e.g., \cite{broman2008survival}).}

We only need to check the continuity at or below $p_c^{ext}$, the percolation threshold for the external graph. Similar to the argument in Corollary~\ref{thm: motif based PA}
$\zeta(p)\leq \zeta^{ext}(p)$. Therefore, $\zeta(p)=0$ for all $p<p_c^{ext}$.
Moreover, since $\zeta^{ext}(p)\to0$ as $p\to p_c^{ext}$ for the configuration model (see e.g., \cite{broman2008survival}), then  $\zeta(p)\to0$ for $p\to p_c^{ext}$ as well.
As a result, all the conditions needed for Theorems~\ref{thm: outbreak size} and \ref{thm: local alg} hold and their result is applicable to motif-based configuration models.

\end{proof}

\section{Future Work}
A natural next step is to prove that our results hold for the general SIR model, where the recovery time is drawn from a distribution.
Intuitively, Algorithm~\ref{alg: forward process} is related to the probability of an outbreak starting from a uniform random node. For a general SIR model this is not the case and one can define a  backward infection process (analogous to the forward process described in Algorithm~\ref{alg: forward process}) to compute the relative size of an outbreak. The backward process at each step, finds which of the undiscovered neighbors could have transmitted the disease to the original node.
Note that if a node is infected in an outbreak, then its backward infection must be large and hence it will reach $k$ other nodes. So, the possible next step would be to characterize graphs such that the backward and forward processes describe the size and probability of an outbreak for a generalized SIR model.

Further, we conjecture that the assumption on continuity of $\zeta(p)$ in all of our results can be lifted, i.e., the local weak limit of a sequence of large-set expanders has a continuous survival probability. 

\bibliographystyle{aer}
\bibliography{ref.bib}
\newpage

\appendix


\section{Proof of Lemma~\ref{lm: continuity motif graph}} \label{sec: appendix continuity} 
To give the formal proof, first  recall the definition of threshold regular trees from \cite{abs2021locality}. Using their  terminology, types of vertices in a multi-type branching process $G$ are labelled by a label $x\in[0,1]$. The distribution on the types and the number of children of a vertex {of type $x$} only depends on $x$ and can be as general as desired. 
Let $\mu_x$ be the distribution over trees generated starting from a vertex of type $x$, and let $G_x$ be a graph drawn from $\mu_x$.   Then define $\zeta_x(p)=\mu_x(|C(x)|=\infty\text{ in }G_x(p))$.

The threshold regular condition (Definition 6.1. in \cite{abs2021locality}) says that if $\zeta(p-\epsilon)>0$ for some $\epsilon>0$, then there exists $\delta$ such that $\zeta_x(p)\geq \delta$ for all $x\in S$. 
 Then similar to the derivation of \eqref{Russo-1} we get 
 \begin{equation}\label{Russo-Bridge}
    p \zeta_x(p)=\mathbb E_{G_x(p)} [b_{p,k}(x)],
 \end{equation}
 {where $b_{p,k}(x)$ is the number of $k$-bridges in $G_x(p)$.}
Now, we are ready to prove the lemma.

For the motif-based graph $G$, define $E^{ext}(G)$ as the set of external edges in $G$. 
Let $e \in E^{ext}(G)$ be an external $k$-bridge in $G$. Since $G^{ext}$ is a tree, then there exists a $k'<k$ such that $e$ is a $k'$-bridge for $G^{ext}$. Further, we can find a lower bound on $k'$ by Condition~\ref{cond: internal}.
By this condition, each motif can add at most $S_{\max}$ to the distance of the root $o$ from the boundary. Therefore a $k$-bridge in $G$ is  an external $k'$-bridge in $G^{ext}$, where $k'\geq k/S_{\max}$.  
Then because the number of $k'$-bridges is decreasing in $k'$,
\[\mathbb E_{G_x(p)}[b^{ext}_{p,k}(x)]\leq \mathbb E_{G^{ext}_x(p)}[b_{p,\frac{k}{S_{\max}}}(x)].\]
By applying {Lemma C.2. in \cite{abs2021locality}} there exists $L$ such that for all $p\geq p_c(\mu^{ext})+\epsilon$ and all $x\in S$ and all $k\geq 0$
\[\mathbb E_{G_x(p)}[b^{ext}_{p,k}(x)]\leq \mathbb E_{G^{ext}_x(p)}[b_{p,\frac{k}{S_{\max}}}(x)]\leq L,\]
which bounds the number of external $k$-bridges in $G_x(p)$.

Now, to bound the number of internal $k$-bridges, note that each internal bridge $e$ can be mapped to the an external $k$-bridge which connects the motif containing $e$ to its ancestor motif.  Further by Condition~\ref{cond: internal}  there are at most ${S_{\max}\choose 2}$ internal edges mapped to the same  external bridge. Therefore,
\begin{equation}\label{eq: bridge bnd motif}
    \mathbb E_{G_x(p)}[b^{int}_{p,k}(x)]\leq {S_{\max}\choose 2}\mathbb E_{G_x(p)}[b^{ext}_{p,k}(x)]\leq S_{\max}^2L.
\end{equation}
As a result, the total number of $k$-bridges is at most $(S_{\max}^2+1)L)$. 

To finish the proof, as in the proof of Lemma 6.2. in \cite{abs2021locality},  we can write,
\begin{align*}
    0\leq\tilde \zeta_k(p')-\tilde \zeta_k(p)
    & =\mu\Big(\int_{p}^{p'}\frac 1{p''}\mathbb E_{G(p)}[b_{p'',k}]dp''\Big)
    \leq\frac 1\epsilon \mu\Big(\int_{p}^{p'}\mathbb E_{G(p)}[b_{p'',k}]dp''\Big)\\
    &\leq \frac{S_{\max}^2L}{\epsilon}|p'-p|,
\end{align*}
where the last inequality follows by \eqref{eq: bridge bnd motif}. 
Since the bound on $\tilde \zeta_k(p')-\tilde \zeta_k(p)$ is uniform in $k$, we conclude that $\zeta(p)$ is Lipschitz continuous and hence  continuous {$p\geq p_c(\mu^{ext})+\epsilon$.  Since $\epsilon>0$ was arbitrary, this gives continuity above $p_c(\mu^{ext})$.}
\qed



\end{document}